\DeclareMathOperator{\lcm}{lcm}
\title{Complexity of Prefix-Convex Regular Languages\thanks{This work was supported by the Natural Sciences and Engineering Research Council of Canada 
grant No.~OGP0000871.}
}
\author{Janusz~Brzozowski and Corwin~Sinnamon}
\authorrunning{J. Brzozowski and C. Sinnamon}  
\titlerunning{Complexity of Prefix-Convex Languages}
\institute{David R. Cheriton School of Computer Science, University of Waterloo, \\
Waterloo, ON, Canada N2L 3G1\\
{\tt brzozo@uwaterloo.ca,sinncore@gmail.com}
}
\begin{document}

\maketitle

\begin{abstract}
A language $L$ over an alphabet $\Sigma$ is prefix-convex if, for any words $x,y,z\in\Sigma^*$, whenever $x$ and $xyz$ are in $L$, then so is $xy$.
Prefix-convex languages include right-ideal, prefix-closed, and prefix-free languages.
We study complexity properties of prefix-convex regular languages.
In particular, we find the quotient/state complexity of boolean operations, product (concatenation), star,    and reversal,  the size of the syntactic semigroup, and the quotient complexity of atoms.
For binary operations we use arguments  with different alphabets when appropriate; this leads to higher tight upper bounds than those obtained with equal alphabets. 
We exhibit most complex prefix-convex languages that meet the complexity bounds for all the measures listed above.
\medskip

\noindent
{\bf Keywords:}
complexity of operations,  different alphabets, prefix-closed, prefix-convex, prefix-free, regular languages, right ideals, state complexity, syntactic semigroup, unrestricted 
 \end{abstract}

\section{Motivation} 
Many formal definitions of key concepts are postponed until Section~\ref{sec:basics}.
\smallskip

We study the complexity of prefix-convex regular languages~\cite{AnBr09,Thi73}. For words $w,x,y$ over an alphabet $\Sigma$, if $w=xy$, then $x$ is a \emph{prefix} of $w$.
A~language $L\subseteq \Sigma^*$ is \emph{prefix-convex} if, whenever $x$ and $xyz$ are in $L$,  then $xy$ is also in $L$.
The class of prefix-convex languages includes three well-known subclasses: right-ideal, prefix-closed, and prefix-free languages.
A language $L$ is a \emph{right ideal} if it is non-empty and satisfies the equation $L=L\Sigma^*$.
Right ideals play a role in pattern matching: If one is searching for all words beginning with words in some language $L$ in a given text (language), then one is looking for words in $L\Sigma^*$. 
Right ideals also constitute a basic concept in semigroup theory.
A language $L$ is \emph{prefix-closed} if, whenever $w$ is in $L$ and $x$ is a prefix of $w$,  then $x$ is also in $L$.
The complement of every right ideal is a prefix-closed language.
The set of allowed input sequences to a digital system is a prefix-closed language.
A language $L$ is \emph{prefix-free} if no word in $L$ is a prefix of another word in $L$.
Prefix-free languages (other than $\{\varepsilon\}$, where $\varepsilon$ is the empty word) 
are prefix codes.
They play an important role in coding theory, and have many applications~\cite{BPR10}. 

The \emph{alphabet of a regular language} $L$ is $\Sigma$ (or \emph{$L$ is a language over $\Sigma$}) if $L \subseteq \Sigma^*$ and every letter of $\Sigma$ appears in a word of $L$.
The \emph{(left) quotient} of $L$ by a word $w\in\Sigma^*$ is $w^{-1}L=\{x\mid wx\in L\}$. 
A language is regular if and only if it has a finite number of distinct quotients. So the number of quotients of $L$ is a natural measure of complexity for $L$; it is called the 
\emph{quotient complexity}~\cite{Brz10} of $L$ and is denoted it by $\kappa(L)$.
An equivalent concept is the \emph{state complexity}~\cite{Yu01} of $L$, which is the number of states in a complete minimal deterministic finite automaton (DFA) recognizing $L$.
If $L_n$ is a regular language of quotient complexity $n$, and $\circ$ is a unary operation, then 
the \emph{quotient/state complexity  of $\circ$} 
is  the maximal value of $\kappa(L_n^\circ)$, expressed as a function $f(n)$,
as $L_n$ ranges over all regular languages of complexity $n$.
If $L'_m$ and $L_n$ are regular languages of quotient complexities $m$ and $n$ respectively, and $\circ$ is a binary operation, then
the \emph{quotient/state complexity of $\circ$} 
is  the maximal value of $\kappa(L'_m \circ L_n)$, expressed as a function $f(m,n)$,
as $L'_m$ and $L_n$ range over all regular languages of complexity $m$ and $n$, respectively.
The quotient/state complexity of an operation gives a worst-case lower bound on the time and space complexities of the operation, and has been studied extensively~\cite{Brz10,Brz13,Yu01};
we refer to quotient/state complexity simply as \emph{complexity}.

In all the past literature on binary operations it has always been assumed that the alphabets of the two operands are restricted to be the same. However, it has been shown recently~\cite{Brz16} that this is an unnecessary restriction: larger complexity bounds can be reached in some cases if the alphabets differ.
In the present paper we examine both \emph{restricted complexity} of binary operations, where the alphabets must be the same, and \emph{unrestricted complexity} where they may differ.

To find the complexity of a unary operation one first finds an upper bound on this complexity, and then exhibits languages that meet this bound. Since we require a language $L_n$ for each $n \ge k$, we need a sequence $(L_k, L_{k+1}, \dots)$; here $k$ is usually a small integer because the bound may not hold for a few small values of $n$. We call such a sequence a \emph{stream} of languages. Usually the languages in a stream have the same basic structure and differ only in the parameter $n$. For example, $((a^n)^* \mid n\ge 2)$ is a stream.
For a binary operation we require two streams. 

There exists a stream $(L_3, L_4, \dots)$ of regular languages $L_n(a,b,c,d)$  which 
 meets the complexity bounds for reversal, (Kleene) star, product (concatenation), and all binary boolean operations~\cite{Brz16}. 
There are reasons, however, why the complexity of languages is not a sufficiently good measure. 
Two languages may have the same complexity $n$ but the syntactic semigroup~\cite{Pin97} of one may have $n-1$ elements, while that of the other has $n^n$ elements~\cite{BrYe11}. For this reason, the size of the syntactic semigroup of a language -- which is the same as the size of the transition semigroup of a minimal DFA accepting  the language~\cite{Pin97} -- has been added as another complexity measure. 
Secondly, \emph{star-free} languages, which constitute a very special subclass of the class of regular languages, meet the complexity bounds of regular languages for all operations except reversal, which only reaches the bound $2^n-1$ instead of $2^n$~\cite{BrLi12}.
While regular languages are the smallest class  containing the finite languages and closed under boolean operations, product  and  star, 
star-free languages are the smallest class containing the finite languages and closed only under boolean operations and product.
Quotient/state complexity does not distinguish these two classes. 

The complexities of the atoms of a regular language have been proposed as an additional measure~\cite{Brz13}.
Atoms are defined by the following left congruence: two words $x$ and $y$ are equivalent if 
 $ux\in L$ if and only if  $uy\in L$ for all $u\in \Sigma^*$. 
 Thus $x$ and $y$ are equivalent if
 $x\in u^{-1}L$ if and only if $y\in u^{-1}L$.
 An equivalence class of this relation is an \emph{atom} of $L$~\cite{BrTa14,Iva16}.
Thus an atom is a non-empty intersection of complemented and uncomplemented quotients of $L$. 
 If $K_0, \dots, K_{n-1}$ are the quotients of $L$, and $S\subseteq Q_n= \{0,\dots,n-1\}$, then atom $A_S$ is the intersection of quotients with subscripts in $S$ and complemented quotients with subscripts in $Q_n\setminus S$.
For more information about atoms see~\cite{BrTa13,BrTa14,Iva16}.

A language stream that meets the  restricted and unrestricted complexity bounds for all boolean operations, product, star, and reversal, and also has the largest syntactic semigroup and most complex atoms~\cite{Brz13,Brz16}, is called a \emph{most complex stream}.
A most complex stream should have the smallest possible alphabet sufficient to meet all the bounds.
Here we present a  regular language stream similar to that of~\cite{Brz16} but one that is better suited to prefix-convex languages.
Most complex streams are useful when one designs a system dealing with regular languages and finite automata. If one would like to know the maximal sizes of automata the system can handle, one can use the one most complex stream  to test all the operations.

\smallskip

In this paper we exhibit most complex language streams for right-ideal, prefix-closed,  prefix-free, and proper prefix-convex languages, where a prefix-convex language is \emph{proper}  if it is not one of the first three types.

\section{Finite Automata, Transformations, and Semigroups}
\label{sec:basics}
A \emph{deterministic finite automaton (DFA)} is a quintuple
$\mathcal{D}=(Q, \Sigma, \delta, q_0,F)$, where
$Q$ is a finite non-empty set of \emph{states},
$\Sigma$ is a finite non-empty \emph{alphabet},
$\delta\colon Q\times \Sigma\to Q$ is the \emph{transition function},
$q_0\in Q$ is the \emph{initial} state, and
$F\subseteq Q$ is the set of \emph{final} states.
We extend $\delta$ to a function $\delta\colon Q\times \Sigma^*\to Q$ as usual.
A~DFA $\mathcal{D}$ \emph{accepts} a word $w \in \Sigma^*$ if ${\delta}(q_0,w)\in F$. The language accepted by $\mathcal{D}$ is denoted by $L(\mathcal{D})$. If $q$ is a state of $\mathcal{D}$, then the language $L^q$ of $q$ is the language accepted by the DFA $(Q,\Sigma,\delta,q,F)$. 
A state is \emph{empty} or \emph{dead} or  \emph{a sink} if its language is empty. Two states $p$ and $q$ of $\mathcal{D}$ are \emph{equivalent} if $L^p = L^q$; otherwise they are \emph{distinguishable}.
A state $q$ is \emph{reachable} if there exists $w\in\Sigma^*$ such that $\delta(q_0,w)=q$.
A DFA is \emph{minimal} if all of its states are reachable and no two states are equivalent.
Usually DFAs are used to establish upper bounds on the complexity of operations and also as witnesses that meet these bounds.

A \emph{nondeterministic finite automaton (NFA)} is a quintuple
$\mathcal{D}=(Q, \Sigma, \delta, I,F)$, where
$Q$,
$\Sigma$ and $F$ are as in a DFA, 
$\delta\colon Q\times \Sigma\to 2^Q$, and
$I\subseteq Q$ is the \emph{set of initial states}. 
An \emph{$\varepsilon$-NFA} is an NFA in which transitions under the empty word $\varepsilon$ are also permitted.

Without loss of generality we use $Q_n=\{0,\dots,n-1\}$ as the set of states of every DFA with $n$ states.
A \emph{transformation} of $Q_n$ is a mapping $t\colon Q_n\to Q_n$.
The \emph{image} of $q\in Q_n$ under $t$ is denoted by $qt$.
In any DFA, each letter $a\in \Sigma$ induces a transformation $\delta_a$ of the set $Q_n$ defined by $q\delta_a=\delta(q,a)$; we denote this by $a\colon \delta_a$. 
By a slight abuse of notation we use the letter $a$ to denote the transformation it induces; thus we write $qa$ instead of $q\delta_a$.
We  extend the notation to sets of states: if $P\subseteq Q_n$, then $Pa=\{pa\mid p\in P\}$.
We also write $P\stackrel{a}{\longrightarrow} Pa$ to mean that the image of $P$ under $a$ is $Pa$.
If $s,t$ are transformations of $Q_n$, their composition is denoted by $s\ast t$ and defined by
$q(s \ast t)=(qs)t$; the $\ast$ is usually omitted.
Let $\mathcal{T}_{Q_n}$ be the set of all $n^n$ transformations of $Q_n$; then $\mathcal{T}_{Q_n}$ is a monoid under composition. 

For $k\ge 2$, a transformation (permutation) $t$ of a set $P=\{q_0,q_1,\ldots,q_{k-1}\} \subseteq  Q_n $ is a \emph{$k$-cycle}
if $q_0t=q_1, q_1t=q_2,\ldots,q_{k-2}t=q_{k-1},q_{k-1}t=q_0$.
This $k$-cycle is denoted by the transformation $(q_0,q_1,\ldots,q_{k-1})$ of $Q_n$, which acts as the identity on the states outside the cycle.
A~2-cycle $(q_0,q_1)$ is called a \emph{transposition}.
A transformation  that sends all the states of $P$ to $q$ and acts as the identity on the remaining states is denoted by $(P \to q)$.  If $P=\{p\}$ we write  $(p\to q)$ for $(\{p\} \to q)$.
 The identity transformation is denoted by $\mathbbm 1$.
 The notation $(_i^j \; q\to q+1)$ denotes a transformation that sends $q$ to $q+1$ for $i\le q\le j$ and is the identity for the remaining states, and  $(_i^j \; q\to q-1)$ is defined similarly.

Let $\mathcal{D} = ( Q_n, \Sigma, \delta, q_0, F)$ be a DFA. For each word $w \in \Sigma^*$, the transition function induces a transformation $\delta_w$ of  $Q_n$ by  $w$: for all $q \in  Q_n $, 
$q\delta_w = \delta(q, w).$ 
The set $T_{\mathcal{D}}$ of all such transformations by non-empty words forms a semigroup of transformations called the \emph{transition semigroup} of $\mathcal{D}$~\cite{Pin97}. 
We can use a set  $\{\delta_a \mid a \in \Sigma\}$ of transformations to define $\delta$, and so the DFA $\mathcal{D}$. 

The \emph{Myhill congruence}~\cite{Myh57} ${\mathbin{\approx_L}}$ of a language $L\subseteq \Sigma^*$ is defined on $\Sigma^+$ as follows:
$$
\mbox{For $x, y \in \Sigma^+$, } x {\mathbin{\approx_L}} y \mbox{ if and only if } wxz\in L  \Leftrightarrow wyz\in L\mbox { for all } w,z \in\Sigma^*.
$$
This congruence is also known as the \emph{syntactic congruence} of $L$.
The quotient set $\Sigma^+/ {\mathbin{\approx_L}}$ of equivalence classes of the relation ${\mathbin{\approx_L}}$ is a semigroup called the \emph{syntactic semigroup} of $L$.
If  $\mathcal{D}$ is a minimal DFA of $L$, then $T_{\mathcal{D}}$ is isomorphic to the syntactic semigroup $T_L$ of $L$~\cite{Pin97}, and we represent elements of $T_L$ by transformations in~$T_{\mathcal{D}}$. 
The size of the syntactic semigroup has been used as a measure of complexity for regular languages~\cite{Brz13,BrYe11,HoKo04,KLS05}.

\smallskip

Recall that binary operations require two language streams to determine the complexity of the operation. 
Sometimes the same stream can be used for both operands, and 
it has been shown in~\cite{Brz13,Brz16} that for all common binary operations on regular languages the second stream can be a ``dialect'' of the first, that is, it can ``differ only slightly'' from the first and all the bounds can still be met. 
Let $\Sigma=\{a_1,\dots,a_k\}$ be an alphabet ordered as shown;
if $L\subseteq \Sigma^*$, we denote it by $L(a_1,\dots,a_k)$ to stress its dependence on $\Sigma$.
A \emph{dialect} of $L$ is a related language obtained by replacing or deleting letters of $\Sigma$ in the words of $L$.
More precisely, for an alphabet $\Sigma'$ and a partial map $\pi \colon \Sigma \mapsto \Sigma'$,
we obtain a dialect of $L$ by replacing each letter $a \in \Sigma$ by $\pi(a)$ in every word of $L$,
or deleting the word entirely if $\pi(a)$ is undefined.
We write $L(\pi(a_1),\dots, \pi(a_k))$ to denote the dialect of $L(a_1,\dots,a_k)$ given by $\pi$,
and we denote undefined values of $\pi$ by  ``$-$''.
For example, if $L(a,b,c)= \{a, ab, ac\}$ then its dialect $L(b,-,d)$ is the language $\{b, bd\}$.
Undefined values for letters at the end of the alphabet are omitted; thus, for example, 
if $\Sigma =\{a,b,c,d,e\}$, $\pi(a)=b$, $\pi(b)=a$, $\pi(c)=c$ and $\pi(d)=\pi(e)=-$, we write $L(b,a,c)$ for $L(b,a,c,-,-)$.

The language stream that meets all the complexity bounds is referred to as the \emph{master} language stream. Every master language stream we present here uses the smallest possible alphabet sufficient to meet all the complexity bounds.
Individual bounds are frequently met by dialects on reduced alphabets, and we prefer to use the smallest alphabet possible for each bound.
For binary operations, we try to minimize the size of the combined alphabet of the two dialects.

As each letter induces a transformation on the states of a DFA (or equivalently, the quotients of a language)
we count the number of distinct transformations induced by letters of the alphabet.
In any language this number is at most the size of the alphabet, but there may be multiple letters which induce the same transformation;
this does not occur in this paper as no language has a repeated transformation.
For binary operations on two dialects of the same master language, we count the number of distinct  transformations of the master language present in either dialect.
For example, suppose $L(a,b,c,-)$ and $L(a,-,b,c)$ are two dialects of a language $L(a,b,c,d)$, which we assume has four distinct transformations.
Each dialect has three letters and three distinct transformations, and between them they have three letters and four distinct transformations.

Although a given complexity bound may be met by many dialects of the master language, we favour dialects, or pairs of dialects, that use small alphabets and few distinct transformations.
In many cases the dialects we present are minimal in these respects, though we do not always justify this.

\section{A Most Complex Regular Stream}
\label{sec:regular}

We now define a DFA stream that we use as a basic component. It is similar to the stream defined in~\cite{Brz13} for the case of equal alphabets, except that there the transformation induced by $c$ is $(n-1\to 0)$.
It is also similar to the DFA of~\cite{Brz16}, except that there 
the transformation induced by $c$ is $(n-1\to 0)$ and an additional input $d$ is used.

\begin{definition}
\label{def:regular}
For $n\ge 3$, let $\mathcal{D}_n=\mathcal{D}_n(a,b,c)=(Q_n,\Sigma,\delta_n, 0, \{n-1\})$, where 
$\Sigma=\{a,b,c\}$, 
and $\delta_n$ is defined by the transformations
$a\colon (0,\dots,n-1)$,
$b\colon(0,1)$, and
$c\colon(1 \rightarrow 0)$.
Let $L_n=L_n(a,b,c)$ be the language accepted by~$\mathcal{D}_n$.
The structure of $\mathcal{D}_n(a,b,c)$ is shown in Figure~\ref{fig:RegWit}. 
\end{definition}

\begin{figure}[ht]
\unitlength 8.5pt
\begin{center}\begin{picture}(37,7)(0,3)
\gasset{Nh=1.8,Nw=3.5,Nmr=1.25,ELdist=0.4,loopdiam=1.5}
	{\scriptsize
\node(0)(1,7){0}\imark(0)
\node(1)(8,7){1}
\node(2)(15,7){2}
}
\node[Nframe=n](3dots)(22,7){$\dots$}
	{\scriptsize
\node(n-2)(29,7){$n-2$}
	}
	{\scriptsize
\node(n-1)(36,7){$n-1$}\rmark(n-1)
	}
\drawloop(0){$c$}
\drawedge[curvedepth= .8,ELdist=.1](0,1){$a,b$}
\drawedge[curvedepth= .8,ELdist=-1.2](1,0){$b,c$}
\drawedge(1,2){$a$}
\drawloop(2){$b,c$}
\drawedge(2,3dots){$a$}
\drawedge(3dots,n-2){$a$}
\drawloop(n-2){$b,c$}
\drawedge(n-2,n-1){$a$}
\drawedge[curvedepth= 4.0,ELdist=-1.0](n-1,0){$a$}
\drawloop(n-1){$b,c$}
\end{picture}\end{center}
\caption{Minimal DFA  of  a most complex regular language.}
\label{fig:RegWit}
\end{figure}
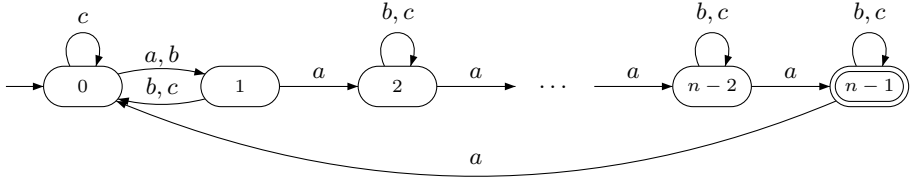

\begin{theorem}[Most Complex Regular Languages]
\label{thm:regular}
For each $n\ge 3$, the DFA of Definition~\ref{def:regular} is minimal and its 
language $L_n(a,b,c)$ has complexity $n$.
The stream $(L_m(a,b,c) \mid m \ge 3)$  with 
some dialect streams is most complex in the class of regular languages.
In particular, it meets all the complexity bounds below.
At least three letters are required in any witness meeting all these bounds and a total of four distinct letters is required for any two witnesses for  unrestricted  union and symmetric difference.
In several cases the bounds can be met with a smaller alphabet as shown below.
\begin{enumerate}
\item
The syntactic semigroup of $L_n(a,b,c)$ has cardinality $n^n$.  
\item
Each quotient of $L_n(a)$ has complexity $n$.
\item
The reverse of $L_n(a,b,c)$ has complexity $2^n$, and $L_n(a,b,c)$ has $2^n$ atoms.
\item
Each atom  $A_S$  of $L_n(a,b,c)$ has maximal complexity:
\begin{equation*}
	\kappa(A_S) =
	\begin{cases}
		2^n-1, 			& \text{if $S\in \{\emptyset,Q_n\}$;}\\
		1+ \sum_{x=1}^{|S|}\sum_{y=1}^{n-|S|} \binom{n}{x}\binom{n-x}{y},
		 			& \text{if $\emptyset \subsetneq S \subsetneq Q_n$.}
		\end{cases}
\end{equation*}
\item
The star of $L_n(a,b)$ has complexity $2^{n-1}+2^{n-2}$.
\item
\begin{enumerate}
\item 
	Restricted Complexity:\\
	The product $L'_m(a,b) L_n(a,-,b)$ has complexity $m2^n-2^{n-1}$.
\item
	Unrestricted Complexity:\\
	The product $L'_m(a,b) L_n(a,c,b)$ has complexity $m2^n+2^{n-1}$.
\end{enumerate}

\item
\begin{enumerate}
\item
	Restricted Complexity:\\
The complexity of $L'_m(a,b) \circ L_n(b,a)$ is $mn$ for $\circ\in \{\cup,\oplus, \setminus, \cap\}$.
\item
	Unrestricted Complexity:\\
The complexity of union and symmetric difference is $mn+m+n+1$ and this bound is met by 
$L'_m(a,b,c)$ and $L_n(b,a,d)$, 
that of difference is  $mn+m$ and this bound is met by 
$L'_m(a,b,c)$ and $L_n(b,a)$, 
and that of intersection is 
$mn$ and this bound is met by  
$L'_m(a,b)$ and $L_n(b,a)$.
 A total of four letters is required to meet the bounds for union and symmetric difference.
\end{enumerate}
\end{enumerate}
\end{theorem}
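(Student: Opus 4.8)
The plan is to verify that the single witness $\mathcal{D}_n(a,b,c)$ together with its dialects attains each listed upper bound in turn; the unifying tool is the transformation monoid generated by the three letters. I would begin with minimality: state $i$ is reached from $0$ by $a^i$, so every state is reachable, and since $n-1$ is the unique accepting state and $a$ acts as a full cycle, suitable powers of $a$ separate any two states, giving complexity exactly $n$. For part~1, recall the classical fact that the $n$-cycle $a=(0,1,\dots,n-1)$ and the transposition $b=(0,1)$ generate the symmetric group $S_n$ (conjugating $b$ by powers of $a$ yields all adjacent transpositions). Adjoining $c=(1\to 0)$, a transformation of rank $n-1$, to $S_n$ generates the full transformation monoid $\mathcal{T}_{Q_n}$ by the standard generation theorem, so the transition (hence syntactic) semigroup has cardinality $n^n$. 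Part~2 is immediate: $L_n(a)$ is the unary language $a^{n-1}(a^n)^*$, and each of its quotients is a cyclic shift of the same $n$-state structure, hence of complexity $n$.

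For part~3 I would build the reverse $\varepsilon$-NFA by reversing all edges and exchanging the roles of $0$ and $n-1$, then determinize. Since $\{a,b,c\}$ realizes every transformation of $Q_n$, every subset of $Q_n$ arises as an image and all $2^n$ subsets become reachable; distinct subsets are distinguishable because they disagree on acceptance of some word, so the reverse has complexity $2^n$. The $2^n$ nonempty atoms $A_S$, one for each $S\subseteq Q_n$, then follow from this same full realizability: $A_S$ is nonempty exactly when some induced transformation has preimage $S$ of the final state, and such maps (including the constant maps that handle $A_\emptyset$ and $A_{Q_n}$) all lie in $\mathcal{T}_{Q_n}$. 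Part~4 is the most delicate step: for each proper $S$ I would form the atom automaton whose states record which uncomplemented quotients have been entered and which complemented quotients must be avoided, and then count the reachable, pairwise-distinguishable configurations so as to match the double sum $1+\sum_{x=1}^{|S|}\sum_{y=1}^{n-|S|}\binom{n}{x}\binom{n-x}{y}$, treating the extreme atoms $A_\emptyset,A_{Q_n}$ separately to obtain $2^n-1$.

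The remaining operations follow familiar witness recipes on the indicated dialects. For part~5 I would apply the standard star construction to $L_n(a,b)$, determinize, and count: the single accepting state feeding back to the initial state produces exactly $2^{n-1}+2^{n-2}$ reachable and inequivalent states. For products (part~6) I would use the concatenation DFA on $Q_m\times 2^{Q_n}$; the restricted dialect $L_n(a,-,b)$ yields $m2^n-2^{n-1}$, whereas in the unrestricted dialect $L_n(a,c,b)$ the extra letter reaches $2^{n-1}$ further configurations and raises the bound to $m2^n+2^{n-1}$. For the boolean operations (part~7) I would take the direct product of the two dialects. With equal alphabets this has $mn$ reachable, distinguishable states; with different alphabets each operand must be completed by a dead quotient for the letters foreign to it, so the product lives in $(m+1)\times(n+1)$ positions, and for union and symmetric difference the whole $mn+m+n+1$ grid is reachable and distinguishable, while for difference the dead positions collapse to give $mn+m$ and for intersection they are absorbed, leaving $mn$. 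The necessity claims I would settle by short impossibility arguments: two letters cannot generate $\mathcal{T}_{Q_n}$ for $n\ge 3$, since if both induce permutations the generated monoid is a group, while if one is singular the remaining single permutation generates only a cyclic group, so in neither case is all of $\mathcal{T}_{Q_n}$ obtained; and the extra additive terms for unrestricted union and symmetric difference cannot appear unless the combined alphabet contains four distinct letters.

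I expect the principal obstacle to be part~4. Pinning down the exact atom complexities demands a precise characterization of which entered/avoided configurations are reachable and a proof that no two are equivalent, and matching the count to the closed-form double summation is delicate bookkeeping that does not reduce to the single monoid-generation fact powering the other parts. By contrast, the numerous reachability and distinguishability verifications for reversal, star, product, and the boolean operations are largely mechanical consequences of the cyclic action of $a$ together with the fact that $\{a,b,c\}$ induces all of $\mathcal{T}_{Q_n}$.
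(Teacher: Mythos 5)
Your overall strategy matches the paper's: establish minimality via the $a$-cycle, get the semigroup bound from the classical generators of $\mathcal{T}_{Q_n}$, derive reversal/atom counts from the fullness of the transition semigroup, and verify the operation bounds by subset/direct-product constructions on the stated dialects. (The paper actually discharges items 3--5 by citation -- to Salomaa--Wood--Yu for reversal, to earlier work for atoms and star -- whereas you propose to prove them directly; that is a legitimate alternative in principle.) However, as written the proposal has genuine gaps. The largest is item~4: you correctly identify the atom complexities as the delicate step, but you only describe the shape of an argument (``count the reachable, pairwise-distinguishable configurations so as to match the double sum'') without supplying the characterization of reachable pairs $(X,Y)$ of uncomplemented/complemented index sets or the distinguishability argument; nothing in the proposal actually establishes the closed-form count. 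Item~5 has the same character: the value $2^{n-1}+2^{n-2}$ is asserted as the outcome of ``determinize and count'' with no identification of which subsets are reachable and which collapse. For items 6 and 7 the paper's proof consists almost entirely of explicit reachability and distinguishability inductions on the specific dialects (e.g.\ reaching $\{0'\}\cup S$ from a smaller set via $a^2$ and $(ab)^{q_1-1}$, the five-case distinguishability analysis, and the completed-DFA direct product with rows $R$ and columns $C$ for the unrestricted boolean case); your proposal replaces all of this with one-sentence summaries, so the claimed tightness of the bounds is not actually verified.

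Two further concrete problems: first, in item 6(b) you say the extra letter $c$ reaches ``$2^{n-1}$ further configurations''; in fact it reaches the $2^n$ subsets of $Q_n$ unaccompanied by any state of $Q'_m$, and $m2^n-2^{n-1}+2^n=m2^n+2^{n-1}$ -- your count as stated does not produce the claimed bound. Second, the final necessity claim (that unrestricted union and symmetric difference require four letters in total) is merely restated, not argued; the paper proves it by showing two letters of $\Sigma'\cap\Sigma$ are needed for reachability of $Q'_m\times Q_n$ (via an order/coprimality argument) plus one letter in each of $\Sigma'\setminus\Sigma$ and $\Sigma\setminus\Sigma'$ to reach the states involving an empty coordinate. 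Your three-letter necessity argument for generating $\mathcal{T}_{Q_n}$ (the group of units is generated by the permutation letters alone, so one permutation letter yields only a cyclic group) is correct and is a nice self-contained justification of a claim the paper leaves implicit.
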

\begin{proof} Clearly $L_n(a)$ has complexity $n$ as the DFA of Definition~\ref{def:regular} is minimal.
\begin{enumerate}
\item {\bf Syntactic Semigroup} The transformations 
$a\colon (0, \dots, n-1)$, $b\colon (0,1)$, and $c\colon (n-1 \to 0)$ were used in~\cite{Brz13}. It is well known that these transformations as well as $a$, $b$, and $c \colon (1\to 0)$ generate the semigroup of all transformations of $Q_n$. 
\item {\bf Quotients} Obvious.
\item {\bf Reversal} This follows from a theorem in~\cite{SWY04} which states that if the transition semigroup has $n^n$ elements, then the complexity of reversal is $2^n$. It was shown in~\cite{BrTa14} that the number of atoms is the same as the complexity of the reverse.
\item {\bf Atoms} Proved in~\cite{BrDa14}.
\item {\bf Star} Proved in~\cite{Brz13}.
\item {\bf Product}
Let $\mathcal{D}'= (Q'_m,\Sigma', \delta', 0', F')$ and $\mathcal{D}= (Q_n,\Sigma, \delta, 0, F)$ be minimal DFAs of languages $L'$ and $L$, respectively.
We use the standard  construction of the $\varepsilon$-NFA $\mathcal{N}$ for the product $L'L$:
the final states of $\mathcal{D}'$ becomes non-final, and  an $\varepsilon$-transition is added from each state of $F'$ to the initial state $0$ of $\mathcal{D}$.

The subset construction on this NFA yields sets $\{p'\} \cup S$ where $p' \in Q'_m \setminus F'$ and $S \subseteq Q_n$ and sets $\{p',0\} \cup S$ where $p' \in F'$ and $S \subseteq Q_n\setminus \{0\}$,  as well as sets $S \subseteq Q_n$ which can only be reached by letters in $\Sigma\setminus \Sigma'$. 
Hence the restricted complexity of $L'L$ is bounded by $(m-|F'|)2^n + |F'|2^{n-1} \le m2^n -2^{n-1}$,
and the unrestricted complexity of $L'L$ is bounded by $(m-|F'|)2^n + |F'| 2^{n-1} + 2^n \le m2^n + 2^{n-1}$. 

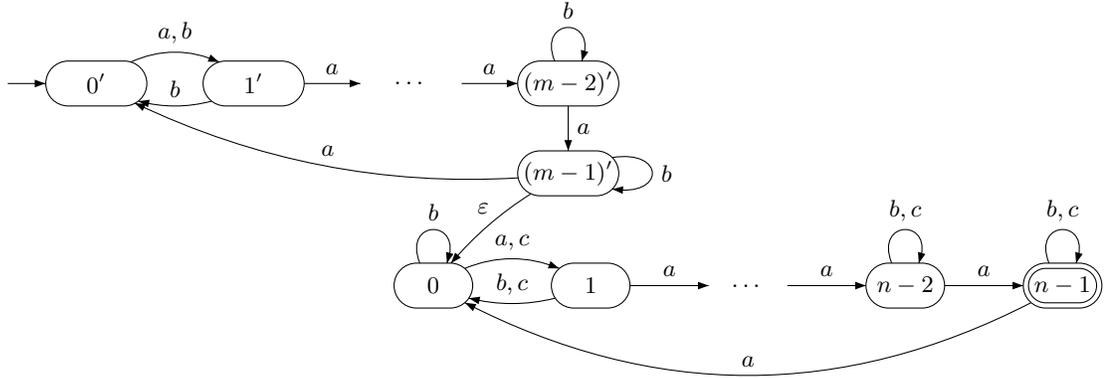
\begin{figure}[th]
\unitlength 8.5pt
\begin{center}\begin{picture}(38,15)(-3,2.8)
\gasset{Nh=2.0,Nw=4.5,Nmr=1.25,ELdist=0.4,loopdiam=1.5}
\node(0')(-4,15){$0'$}\imark(0')
\node(1')(3,15){$1'$}
\node[Nframe=n](3dots')(10,15){$\dots$}
\node(m-2')(17,15){$(m-2)'$}
\drawloop(m-2'){$b$}
\node(m-1')(17,11){$(m-1)'$}
\drawloop[loopangle=0](m-1'){$b$}

\drawedge[curvedepth= 1.4](0',1'){$a,b$}
\drawedge[curvedepth= 1,ELside=r](1',0'){$b$}
\drawedge(1',3dots'){$a$}
\drawedge(3dots',m-2'){$a$}
\drawedge(m-2',m-1'){$a$}
\drawedge[curvedepth=1.7, ELside=r](m-1',0'){$a$}

\gasset{Nh=2.0,Nw=3.5,Nmr=1.25,ELdist=0.4,loopdiam=1.5}
\node(0)(11,6){0}
\drawloop(0){$b$}
\node(1)(18,6){1}
\node[Nframe=n](3dots)(25,6){$\dots$}
	{\small
\node(n-2)(32,6){$n-2$}
\drawloop(n-2){$b,c$}
\node(n-1)(39,6){$n-1$}\rmark(n-1)
\drawloop(n-1){$b,c$}
	}
\drawedge[curvedepth= 1.2](0,1){$a,c$}
\drawedge[curvedepth= .8,ELside=r](1,0){$b,c$}
\drawedge(1,3dots){$a$}
\drawedge(3dots,n-2){$a$}
\drawedge[curvedepth=4, ELside=r](n-1,0){$a$}
\drawedge(n-2,n-1){$a$}
\drawedge[ELside=r, curvedepth=-0.5](m-1',0){$\varepsilon$}
\end{picture}\end{center}
\caption{An NFA for the product of $L'_m(a,b)$ and $L_n(a,c,b)$. The NFA for the product of $L'_m(a,b)$ and $L_n(a,-,b)$ is the same except $c$ is omitted.}
\label{fig:RegularProduct}
\end{figure}
\emph{Restricted Complexity:}
Consider $L'_m(a,b)$ and $L_n(a,-,b)$ of Definition~\ref{def:regular}; we show that their product meets the upper bound for restricted complexity.
As before, we construct an NFA recognizing $L'_m(a,b)L_n(a,-,b)$ and then apply the subset construction to obtain a DFA.
Figure~\ref{fig:RegularProduct} shows the NFA for the unrestricted product $L'_m(a,b)L_n(a,c,b)$;
the product $L'_m(a,b)L_n(a,-,b)$ is the same except $c$ is omitted.

The initial state is $\{0'\}$ and each state $\{p'\}$ for $0 \le p \le m-2$ is reached by $a^p$.
Consider $\{0'\} \cup S$, where $S = \{q_1, q_2, \dots, q_k\}$ with $0 \le q_1 <q_2 < \dots < q_k \le n-1$.
If $q_1 \ge 1$ then 
$\{(m-2)', q_2 -q_1 -1, \dots , q_k-q_1-1\} \xrightarrow{a^2} \{0', 1, q_2 -q_1+1, \dots , q_k-q_1+1\} \xrightarrow{(ab)^{q_1-1}}\{0'\} \cup S$.
If $q_1=0$ and $k \ge 2$, then
$\{(m-2)', n-2, q_3-q_2-1, \dots , q_k-q_2-1\} \xrightarrow{a^2} \{0', 0, 1, q_3-q_2+1, \dots , q_k-q_2+1\} \xrightarrow{(ab)^{q_2-1}} \{0'\} \cup S$.
State $\{0', 0\}$ is reached by $a^mb^2$.
Hence for any non-empty $S \subseteq Q_n$, state $\{0'\} \cup S$ is reachable from $\{(m-2)'\} \cup T$ for some $T \subseteq Q_n$ of size $|S|-1$.
We reach $\{p'\} \cup S$ from $\{0'\} \cup (S-p)$ by $a^p$, where $S-p$ denotes $\{q-p \mid q \in S\}$ taken mod $n$.
By induction, $\{p'\} \cup S$ is always reachable and thus all $m2^n-2^{n-1}$ states are reachable.

We check that all states are pairwise distinguishable.
\begin{enumerate}
\item Any two sets which differ by $q \in Q_n$ are distinguished by $a^{n-1-q}$.
\item States $\{p'_1\}$ and $\{p'_2\}$ with $p'_1 < p'_2$ are distinguished by $a^{m-1-p_2}a^{n-1}$.
\item States $\{0', 0\}$ and $\{p', 0\}$ are distinguished by $(ab)^{m-2-p}aa^{n-1}$ if $p' \not= (m-1)'$; otherwise 
 apply $ab$ to simplify to this case.
\item States $\{p'_1, 0\}$ and $\{p'_2, 0\}$ reduce to Case (c) by $(ab^2)^{m-p_2}$.
\item States $\{p'_1\} \cup S$ and $\{p'_2\} \cup S$, where $S \not= \emptyset$, reduce to Case (d) by $(ab)^n$ since $S \xrightarrow{(ab)^n} \{0\}$ and $(ab)^n$ permutes $Q'_m$.
\end{enumerate}
We can distinguish any pair of states; hence the complexity of $L'_m(a,b)L_n(a,-,b)$ is $m2^n-2^{n-1}$ for all $m,n \ge 3$.

\medskip

\emph{Unrestricted Complexity:}
The NFA for the product of $L'_m(a,b)L_n(a,c,b)$ is illustrated in Figure~\ref{fig:RegularProduct}.
The NFA is the same as the restricted case except it has the additional transformation $c \colon (0,1)(Q'_m \to \emptyset)$.
Hence the subset construction yields the $m2^n - 2^{n-1}$ sets of the restricted case,
as well as all sets $S \subseteq Q_n$ since $S$ is reachable from $\{0'\} \cup S$ by $c^2$.
We check that these sets are distinguishable from all previously reached sets.
\begin{enumerate}
\item Any two sets which differ by $q \in Q_n$ are distinguished by $a^{n-1-q}$.
\item State $\{p'\}$ is distinguishable from $\emptyset$ by $a^{m-1-p}a^{n-1}$.
\item States $\{0', 0\}$ and $\{0\}$ are distinguished by $a^{m-1}a^{n-1}$ if $m-1$ is not a multiple of $n$, and by $ba^{m-2}a^{n-1}$ otherwise.
\item States $\{p', 0\}$ and $\{0\}$ reduce to Case (c)  by $(ab^2)^{m-p}$.
\item States $\{p'\} \cup S$ and $S$, where $S \not= \emptyset$, reduce to Case (d) by $(ab)^n$ since $S \xrightarrow{(ab)^n} \{0\}$ and $(ab)^n$ permutes $Q'_m$.
\end{enumerate}
Hence $L'_m(a,b)L_n(a,c,b)$ has complexity $m2^n + 2^{n-1}$.

\item {\bf Boolean Operations}

\emph{Restricted Complexity:}
All boolean operations have complexity at most $mn$~\cite{Brz10}.
Applying the standard construction for boolean operations we consider the direct product of $\mathcal{D}'_m(a,b)$ and $\mathcal{D}_n(b,a)$ which has states $Q'_m \times Q_n$;
the final states vary depending on the operation.
By~\cite[Theorem 1]{BBMR14} and computation for the cases $(m,n) \in \{(3,4),(4,3),(4,4)\}$, the states of $Q'_m \times Q_n$ are reachable and pairwise distinguishable for each operation $\circ \in \{\cup, \oplus, \setminus, \cap\}$; hence each operation has complexity $mn$.

Note that two letters are required to meet these bounds:
 To a contradiction suppose a single letter $\ell$ is sufficient to reach $Q'_m\times Q_n$ in the direct product, where $m, n \ge 2$ are not coprime.
Letter $\ell$ must induce an $m$-element permutation on $Q'_m$; otherwise there is an unreachable state in $Q'_m$ or the sequence $0', 0'\ell, 0'\ell^2, \dots, 0'\ell^k, \dots$ never returns to $0'$. Similarly $\ell$ must induce an $n$-cycle in $Q_n$.
Hence $\ell$ has order $\lcm(mn)$ in the direct product.
It must have order $mn$ in the direct product, which only occurs when $m$ and $n$ are coprime. 

\medskip

\emph{Unrestricted Complexity:}
The upper bounds on the unrestricted complexity of boolean operations are derived in~\cite{Brz16}.
To compute $L'_m(a,b,c) \circ L_n(b,a,d)$, where $\circ$ is a boolean operation, add an empty state $\emptyset'$ to $\mathcal{D}'_m(a,b,c)$, and send all the transitions from any state of $Q'_m$ under $d$ to $\emptyset'$.
Similarly, add an empty state  $\emptyset$ to  $\mathcal{D}_n(b,a,d)$ together with appropriate transitions; now the alphabets of the resulting DFAs are the same.
We consider the direct product of $\mathcal{D}'_{m,\emptyset'}$ and $\mathcal{D}_{n,\emptyset}$ which has states $\{(p',q) \mid p' \in Q'_m \cup \{\emptyset'\}, q \in Q_n \cup \{\emptyset\}\}$.
A DFA recognizing $L'_m(a,b,c) \cup L_n(b,a,d)$ is shown in Figure~\ref{fig:regularcross} for $m=3$ and $n=4$.

\begin{figure}[th]
\unitlength 8pt
\begin{center}\begin{picture}(35,21)(0,-3)
\gasset{Nh=2.6,Nw=2.6,Nmr=1.2,ELdist=0.3,loopdiam=1.2}
	{\scriptsize
\node(0'0)(2,15){$0',0$}\imark(0'0)
\node(1'0)(2,10){$1',0$}
\node(2'0)(2,5){$2',0$}\rmark(2'0)
\node(3'0)(2,0){$\emptyset',0$}

\node(0'1)(10,15){$0',1$}
\node(1'1)(10,10){$1',1$}
\node(2'1)(10,5){$2',1$}\rmark(2'1)
\node(3'1)(10,0){$\emptyset',1$}

\node(0'2)(18,15){$0',2$}
\node(1'2)(18,10){$1',2$}
\node(2'2)(18,5){$2',2$}\rmark(2'2)
\node(3'2)(18,0){$\emptyset',2$}

\node(0'3)(26,15){$0',3$}\rmark(0'3)
\node(1'3)(26,10){$1',3$}\rmark(1'3)
\node(2'3)(26,5){$2',3$}\rmark(2'3)
\node(3'3)(26,0){$\emptyset',3$}\rmark(3'3)

\node(0'4)(34,15){$0',\emptyset$}
\node(1'4)(34,10){$1',\emptyset$}
\node(2'4)(34,5){$2',\emptyset$}\rmark(2'4)
\node(3'4)(34,0){$\emptyset',\emptyset$}
	}

\drawedge(0'4,1'4){$a$}
\drawedge(1'4,2'4){$a$}
\drawedge[curvedepth=-3](2'4,0'4){$a$}

\drawedge(2'0,3'0){$d$}
\drawedge[ELside=r](2'1,3'0){$d$}
\drawedge(2'2,3'2){$d$}
\drawedge(2'3,3'3){$d$}
\drawedge(2'4,3'4){$d$}

\drawedge(3'0,3'1){$b$}
\drawedge(3'1,3'2){$b$}
\drawedge(3'2,3'3){$b$}
\drawedge[curvedepth=3](3'3,3'0){$b$}

\drawedge(0'3,0'4){$c$}
\drawedge[ELside=r](1'3,0'4){$c$}
\drawedge(2'3,2'4){$c$}
\drawedge(3'3,3'4){$c$}

\end{picture}\end{center}
\caption{Direct product for union of $\mathcal{D}'_3(a,b,c)$ and $\mathcal{D}_4(b,a,d)$ of Definition~\ref{def:regular} shown partially.}
\label{fig:regularcross}
\end{figure}
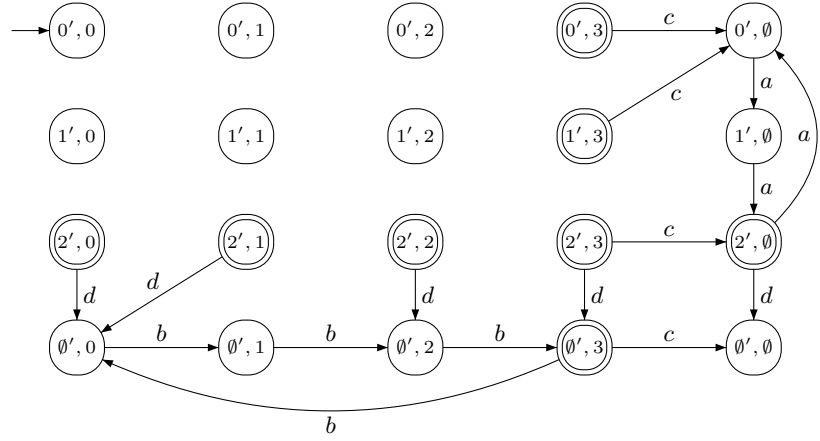

As in the restricted case all the states of $Q'_m \times Q_n$ are reachable by words in $\{a,b\}^*$.
Since $a$ and $b$ induce permutations on the direct product, it follows that every state in $Q'_m \times Q_n$ is reachable from every other.
The remaining states in $C = \{(p', \emptyset) \mid p' \in Q'_m \cup \{\emptyset'\}\}$ and $R=\{(\emptyset',q) \mid q \in Q_n \cup \{\emptyset\}\}$ are easily reachable using $c$ and $d$ in addition to $a$ and $b$. Hence all $(m+1)(n+1)$ states are reachable.

For union and symmetric difference, the states of $C$ are pairwise distinguishable by words in $a^*$ and
they are distinguished from all other states by words in $b^*d$.
Similarly the states of $R$ are distinguishable from each other and all other states; hence all $mn+m+n+1$ states are distinguishable.

For difference, the final states are $((m-1)', q)$ for $q \not= n-1$.
The states of $R$ are all empty, and they are only reachable by $d$.
As the words of $L'_m(a,b,c) \setminus L_n(b,a,d)$ do not contain $d$, the alphabet is $\{a,b,c\}$;
hence we can omit $d$ and delete the states of $R$, and be left with a DFA recognizing the same language.
We check that the remaining $mn+m$ states are pairwise distinguishable.
Any states $(p'_1, \emptyset)$ and $(p'_2,q)$ where $p'_1 \not= p'_2$ and $q \in Q_n \cup \{\emptyset\}$ are distinguished by words in $a^*$.
State $(p', \emptyset)$ is distinguished from $(p',q)$ by some $w \in \{a,b\}^*$ that maps $(p',q)$ to $((m-1)', n-1)$, since $w$ must send $(p',\emptyset)$ to the final state $((m-1)', \emptyset)$.

For intersection the only final state is $((m-1)', n-1)$.
The alphabet of $L'_m(a,b,c) \cap L_n(b,a,d)$ is $\{a,b\}$; hence we can omit $c$ and $d$ and delete the states of $R \cup C$, and be left with a DFA recognizing the same language.
The remaining $mn$ states are pairwise distinguishable as in the restricted case.

Note that a total of four letters between the alphabets $\Sigma'$ of $\mathcal{D}'_m$ and $\Sigma$ of $\mathcal{D}_n$ is required for union and symmetric difference.
As in the restricted case, two letters in $\Sigma' \cap \Sigma$ are required to reach the states of $Q'_m \times Q_n$ for general values of $m$ and $n$.
Letters in both alphabets cannot be used to reach states $(p', \emptyset)$ and $(\emptyset', q)$ as the empty states in each coordinate are only reached by letters outside the corresponding alphabet. Thus two additional letters are required, one in $\Sigma'\setminus \Sigma$ and one in $\Sigma\setminus \Sigma'$. 
Hence each alphabet must contain at least three letters, and $\Sigma' \cup \Sigma$ must contain at least four.
In contrast, the bound for difference is met by $L'_m(a,b,c)$ and $L_n(b,a)$,
and the bound for intersection is met by $L'_m(a,b)$ and $L_n(b,a)$. 
\end{enumerate}
\end{proof}

\section{Right Ideals}

The results in this section are based on~\cite{BrDa15,BDL15,BrYe11}; however, the stream  below is different from  that of~\cite{BrYe11}, where 
$c\colon (n-2\to 0)$ and $d\colon (n-2 \to n-1)$. 
\begin{definition}
\label{def:ideal}
For $n\ge 4$, let $\mathcal D_n=\mathcal D_n(a,b,c,d)=(Q_n,\Sigma,\delta_n, 0, \{n-1\})$, where
$\Sigma=\{a,b,c,d\}$
and $\delta_n$ is defined by the transformations
$a\colon (0,\dots,n-2)$,
$b\colon(0,1)$,
$c\colon(1 \rightarrow 0)$, and
$d\colon (_0^{n-2} q\to q+1)$.
Let $L_n=L_n(a,b,c,d)$ be the language accepted by~$\mathcal D_n$.
For the structure of $\mathcal D_n(a,b,c,d)$ see Figure~\ref{fig:RIdealWit}.
\end{definition}

\begin{figure}[ht]
\unitlength 8.5pt
\begin{center}\begin{picture}(37,8)(0,-4)
\gasset{Nh=2.0,Nw=4.1,Nmr=1.25,ELdist=0.4,loopdiam=1.5}
	{\scriptsize
\node(0)(1,-2){0}\imark(0)
\node(1)(8,-2){1}
\node(2)(15,-2){2}
\node[Nframe=n](3dots)(22,-2){$\dots$}
\node(n-2)(29,-2){$n-2$}
\node(n-1)(36,-2){$n-1$}\rmark(n-1)
	}
\drawloop(0){$c$}
\drawedge[curvedepth= .8,ELdist=.1](0,1){$a,b,d$}
\drawedge[curvedepth= .8,ELdist=-1.2](1,0){$b,c$}
\drawedge(1,2){$a,d$}
\drawloop(2){$b,c$}
\drawedge(2,3dots){$a,d$}
\drawedge(3dots,n-2){$a,d$}
\drawloop(n-2){$b,c$}
\drawedge(n-2,n-1){$d$}
\drawedge[curvedepth= 3.6,ELdist=-1.0](n-2,0){$a$}
\drawloop(n-1){$a,b,c,d$}
\end{picture}\end{center}
\caption{Minimal DFA of a most complex right ideal.}
\label{fig:RIdealWit}
\end{figure}
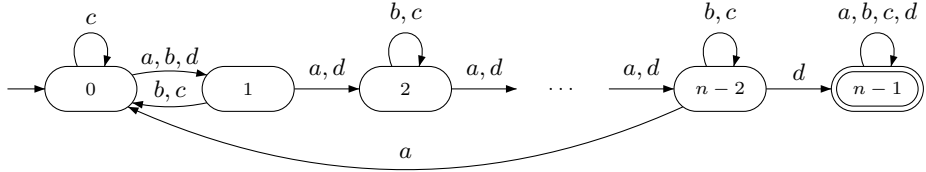

\begin{theorem}[Most Complex Right Ideals]
\label{thm:rightideals}
For each $n\ge 4$, the DFA of Definition~\ref{def:ideal} is minimal and  $L_n(a,b,c,d)$ is a right ideal of complexity $n$.
The stream $(L_m(a,b,c,d) \mid m \ge 4)$  with 
\ some dialect streams 
is most complex in the class of  right ideals.
In particular, it meets all the  bounds  below.
At least  four letters are required to meet these bounds.
\begin{enumerate}
\item
The syntactic semigroup of $L_n(a,b,c,d)$ has cardinality $n^{n-1}$.  
There is only one maximal transition semigroup of a minimal DFA accepting a right ideal, since it consists of all the transformations of $Q_n$ that fix $n-1$. 
At least four letters are needed for this bound.
\item
The quotients of $L_n(a,-,-,d)$ have complexity $n$, except that  $\kappa(\{a, d\}^*)=1$.
\item
The reverse of $L_n(a,-,-,d)$ has complexity $2^{n-1}$, and $L_n(a,-,-,d)$ has $2^{n-1}$ atoms.
\item
Each atom $A_S$\ of $L_n(a,b,c,d)$ has maximal complexity:
\begin{equation*}
	\kappa(A_S) =
	\begin{cases}
		2^{n-1}, 			& \text{if $S=Q_n$;}\\
		1 + \sum_{x=1}^{|S|}\sum_{y=1}^{n-|S|}\binom{n-1}{x-1}\binom{n-x}{y},
		 			& \text{if $\emptyset \subsetneq S \subsetneq Q_n$.}
		\end{cases}
\end{equation*}
\item
The star of $L_n(a,-,-,d)$ has complexity $n+1$.
\item
\begin{enumerate}
\item 
	Restricted Complexity:\\
The product $L'_m(a,-,c,d) L_n(a,-,c,d)$ has complexity $m+2^{n-2}$.
\item
	Unrestricted Complexity:\\
The product $L'_m(a,-,c,d) L_n(b,-,c,d)$ has complexity $m+2^{n-1}+2^{n-2}+1$.  At least three letters for each language and four letters in total are required to meet this bound.\end{enumerate}
\item
\begin{enumerate}
\item
	Restricted Complexity:\\
The complexity of $\circ$
is $mn$ if $\circ\in \{\cap,\oplus\}$, 
 $mn-(m-1)$ if $\circ=\setminus$, and $mn-(m+n-2)$ if $\circ=\cup$, and 
these bounds are met by $L'_m(a,-,-,d) \circ L_n(-,-,d,a)$.
At least two letters are required to meet these bounds.
\item
	Unrestricted Complexity:\\
The complexity of $L'_m(a,-,c,d) \circ L_n(b,-,d,a)$ is the same as  for arbitrary regular languages:
$mn+m+n+1$ if $\circ\in \{\cup,\oplus\}$, 
$mn+m$ if $\circ=\setminus$, and $mn$ if $\circ=\cap$.
At least three letters in each language and four letters in total are required to meet the bounds for intersection and symmetric difference.
The bound for difference is also met by $L'_m(a,-,c,d) \setminus L_n(-,-,d,a)$ and the bound for intersection is met by $L'_m(a,-,-,d) \cap L_n(-,-,d,a)$.

\end{enumerate}
\end{enumerate}
\label{thm:ideal}
\end{theorem}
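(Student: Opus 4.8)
The plan is to follow the template of Theorem~\ref{thm:regular}, first disposing of the structural facts and the items that reduce to known results, then concentrating on the product and the Boolean operations. Minimality of $\mathcal{D}_n$ and $\kappa(L_n)=n$ follow by checking reachability (the letters $a$ and $d$ alone reach every state from $0$) and distinguishability of the $n$ states; that $L_n$ is a right ideal is immediate since its unique final state $n-1$ is a sink fixed by every letter, so $L_n=L_n\Sigma^*$. For the syntactic semigroup (item~1) I would show that $a,b,c,d$ generate exactly the monoid of transformations of $Q_n$ fixing $n-1$: the letters $a,b,c$ restrict to the standard generators (an $(n-1)$-cycle, a transposition, an elementary contraction) of the full transformation monoid on $\{0,\dots,n-2\}$, while $d$ supplies a transition $n-2\mapsto n-1$ that, composed with permutations, sends any chosen state into the sink. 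Since every transformation in a minimal DFA of a right ideal must fix $n-1$, the bound $n^{n-1}$ is both attained and maximal. Items~2--4 (quotients, reversal, atoms) I would obtain by adapting the cited right-ideal results: the sink quotient is $\{a,d\}^*$ of complexity~$1$ and the remaining quotients have complexity~$n$; the reverse and atom count $2^{n-1}$ follow from the $n^{n-1}$ semigroup together with the theorems relating these quantities; and the atom-complexity formula is the right-ideal specialization established in the atom literature.

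For the star (item~5) I would exploit the right-ideal identity directly: since $L_n=L_n\Sigma^*$ we have $L_n^2=L_n\Sigma^* L_n\subseteq L_n\Sigma^*=L_n$, hence $L_n^{+}=L_n$ and $L_n^{*}=\{\varepsilon\}\cup L_n$; as $\varepsilon\notin L_n$, exactly one new initial accepting state is needed, giving complexity $n+1$. For the restricted product (item~6a) I would build the standard $\varepsilon$-NFA joining the sink $(m-1)'$ of $\mathcal{D}'_m$ to the start $0$ of $\mathcal{D}_n$ and analyze the subset automaton. Because both operands are right ideals over the \emph{same} alphabet, so is their product; thus every reachable set containing $n-1$ collapses to a single accepting sink. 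The surviving states are the $m-1$ singletons $\{p'\}$ with $p\le m-2$, the sets $\{(m-1)',0\}\cup S'$ with $S'\subseteq\{1,\dots,n-2\}$ (the $\varepsilon$-transition forces $0$ in and $n-1$ out), and the one accepting sink, totalling $(m-1)+2^{n-2}+1=m+2^{n-2}$. Reachability of every such $S'$ I would prove by induction on $|S'|$ using $a,c,d$, and distinguishability by exhibiting, for each pair, a word driving exactly one state into $n-1$.

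The unrestricted product (item~6b) is where I expect the real difficulty. The crucial new phenomenon is that over the combined alphabet the product of two right ideals need \emph{not} be a right ideal: appending the private letter $a\in\Sigma'\setminus\Sigma$ to a word of $L'_mL_n$ destroys membership, since $L_n$ contains no word ending in $a$ and $\varepsilon\notin L_n$. Consequently the sink-collapse of the restricted case fails, and the private letter $b\in\Sigma\setminus\Sigma'$, which erases the $\mathcal{D}'_m$-component, makes pure subsets $S\subseteq Q_n$ reachable. The heart of the proof is therefore an exact census of the reachable and pairwise distinguishable sets in three families---the $\mathcal{D}'_m$-singletons, the $(m-1)'$-attached sets $\{(m-1)'\}\cup S$, and the pure subsets $S$---tracking precisely how $a$ and $b$ move sets between families and which sets merge, so as to reach the count $m+2^{n-1}+2^{n-2}+1$. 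I expect the bookkeeping of these merges, together with the verification that three letters in each operand and four in total are genuinely necessary, to be the main obstacle, just as the product was the most involved case in Theorem~\ref{thm:regular}.

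Finally, for the Boolean operations (item~7) I would take the direct product of the two dialects and read off each complexity from how the two accepting sinks interact. Reachability of all $mn$ states of $Q'_m\times Q_n$ in the restricted case comes from the two shared permutation letters via the cited reachability criterion, and the operation-dependent bounds then follow by identifying equivalent states. For union, every state in the final row $\{(m-1)'\}\times Q_n$ and final column $Q'_m\times\{n-1\}$ accepts forever, so all $m+n-1$ of them merge into one sink, giving $mn-(m+n-2)$; for difference the column $Q'_m\times\{n-1\}$ becomes a single dead state, giving $mn-(m-1)$; for intersection and symmetric difference no such global merge occurs and the count stays $mn$. In the unrestricted case the added empty states restore the full regular bounds, and I would justify the four-letter lower bounds exactly as in Theorem~\ref{thm:regular}: two shared permutation letters are forced for reachability of $Q'_m\times Q_n$, and one private letter in each alphabet is forced to reach the empty-state row and column.
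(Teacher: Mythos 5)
Your outline tracks the paper's proof item by item (same semigroup generators, same star argument via $L_n^*=L_n\cup\{\varepsilon\}$, same three-family census for the products, same row/column merges for the Boolean operations), and the counts you arrive at are all correct. There is, however, one step that fails as stated, and one place where the proof is only a plan.

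The failing step is your reachability claim in item~7(a). You assert that all $mn$ states of $Q'_m\times Q_n$ are reached ``from the two shared permutation letters via the cited reachability criterion,'' but for the right-ideal witnesses $L'_m(a,-,-,d)$ and $L_n(-,-,d,a)$ there are no two shared permutation letters: $a$ acts as the cycle $(0,\dots,m-2)$ on $Q'_m$ but as $(_0^{n-2}\,q\to q+1)$ on $Q_n$, and $d$ acts as $(_0^{m-2}\,q\to q+1)$ on $Q'_m$ and as $(1\to 0)$ on $Q_n$ --- none of these is a permutation of the full state set on both sides. This is not an accident you can route around with different dialects: in a minimal DFA of a right ideal every permutation letter must fix the accepting sink, so words inducing permutations of the direct product can never reach the row $\{(m-1)'\}\times Q_n$ or the column $Q'_m\times\{n-1\}$, and the permutation-group criterion of~\cite{BBMR14} simply does not apply here. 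The paper instead argues directly: $(p',0)$ is reached by $d^p$, the diagonal action $(p',q)\xrightarrow{a}((p+1)',(q+1))$ (with $p+1$ mod $m-1$) fills $Q'_{m-1}\times Q_n$ from the first column, and $((m-1)',q)$ is reached by $d^{m-1}a^q$; you need some such hand-crafted argument. Separately, for item~6(b) you correctly identify the three families of subsets, the merging of the sets containing $n-1$ within each family, and the total $m+2^{n-2}+2^{n-1}+1$, but you explicitly defer the reachability, merging and distinguishability bookkeeping (and the three-letters-per-operand lower bound) as ``the main obstacle''; since that census is the substance of the hardest item, the proof is incomplete there rather than wrong.
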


\begin{proof}
DFA $\mathcal{D}_n(-,-,-,d)$ is minimal because  the shortest word in $d^*$ accepted by state $q$ is $d^{n-1-q}$, and
$L_n(a,b,c,d)$ is a right ideal because it has only one final state and that state accepts $\Sigma^*$.
\begin{enumerate}
\item {\bf Semigroup}
The transformations induced by $a$, $b$, and $c$ generate all transformations of $Q_{n-1}$. Also, since the transformation induced by $da^{n-2}$ is $(n-2 \to n-1)$, the transition semigroup of 
$\mathcal{D}_n(a,b,c,d)$ contains the one in~\cite{BrYe11}, which is maximal for right ideals. Hence the syntactic semigroup of $L_n(a,b,c,d)$ has size $n^{n-1}$ as well. The fact that at least four letters are needed was proved in~\cite{BSY15}.
\item {\bf Quotients}
If the initial state of $\mathcal{D}_n(a,-,-,d)$ is changed to $q$ with $0\le q\le n-2$, the new DFA accepts a quotient of $L_n$ and is still minimal; hence the complexity of that quotient is $n$.
\item {\bf Reversal}
It was proved in~\cite{BJL13} that the reverse has complexity $2^{n-1}$, and  in~\cite{BrTa14} that the number of atoms is the same as the complexity of the reverse.
\item {\bf Atoms}
The proof in~\cite{BrDa15} applies since the DFA has all the transformations that fix $n-1$.

\item {\bf Star}
If $L_n$ is a right ideal, then $L_n^*=L_n\cup \{\varepsilon\}$. If we add a new initial state $0'$ to the DFA of Definition~\ref{def:ideal} with the same transitions as those from $0$ and make $0'$ final, the new DFA accepts $L_n^*$ and is minimal -- $0'$ does not accept $a$, and so is not equivalent to $n-1$.

\item {\bf Product} 
Let $\mathcal{D}'= (Q'_m,\Sigma', \delta', 0', \{(m-1)'\})$ and $\mathcal{D}= (Q_n,\Sigma, \delta, 0, \{n-1\})$ be minimal DFAs of $L'$ and $L$, respectively, 
where $L'$ and $L$ are right ideals.
We use the standard  construction of the NFA for the product $L'L$:
the final state $(m-1)'$ of $\mathcal{D}'$ becomes non-final, and an $\varepsilon$-transition is added from that state to the initial state $0$ of $\mathcal{D}$.
We bound the complexity of the product by counting the reachable states in the subset construction on this NFA.
The $m-1$ non-final states $\{p'\}$ of $\mathcal{D}'$ may be reachable, as well as $\{(m-1)',0\}$.
From  $\{(m-1)',0\}$ we may reach all $2^{n-2}$ subsets of $Q_n$ which contain 0 but not $n-1$, and $2^{n-2}$ states that contain both 0 and $n-1$; however, the latter $2^{n-2}$ states all accept $\Sigma^*$ and are therefore equivalent. 
So far, we have $m-1+2^{n-2}+1=m+2^{n-2}$ states; these are the only reachable sets if the witnesses are restricted to the same alphabet.

For the unrestricted case, suppose that $\ell'\in \Sigma'\setminus \Sigma$ and $\ell \in \Sigma \setminus \Sigma'$.
By applying $\ell$ to $\{(m-1)',0\} \cup S$, $S\subseteq Q_n\setminus \{0\}$, we may reach 
all $2^n-1$ non-empty subsets of $Q_n$, and then by applying $\ell'$ we reach the empty subset. 
However, the $2^{n-1}$ subsets of $Q_n$ that contain $n-1$ all accept $\Sigma^*$. 
Hence there are at most $2^{n-1} +1$ additional sets, for a total of $m+2^{n-2}+2^{n-1}+1$ reachable sets.

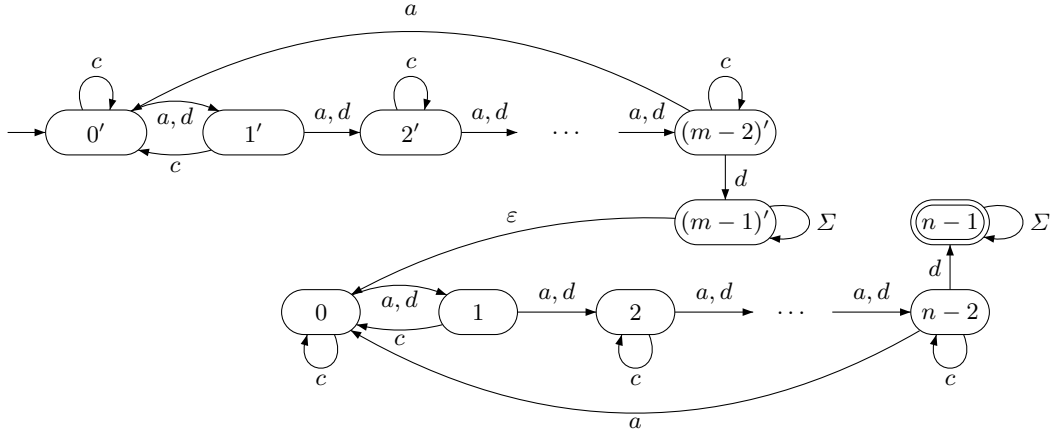
\begin{figure}[th]
\unitlength 8.5pt
\begin{center}\begin{picture}(38,17)(-4,2.5)
\gasset{Nh=2.0,Nw=4.5,Nmr=1.25,ELdist=0.4,loopdiam=1.5}
\node(0')(-4,14){$0'$}\imark(0')
\drawloop(0'){$c$}
\node(1')(3,14){$1'$}
\node(2')(10,14){$2'$}
\drawloop(2'){$c$}
\node[Nframe=n](3dots')(17,14){$\dots$}
\node(m-2')(24,14){$(m-2)'$}
\drawloop(m-2'){$c$}
\node(m-1')(24,10){$(m-1)'$}

\drawloop[loopangle=360,ELdist=.3](m-1'){$\Sigma$}
\drawedge[curvedepth= 1.4,ELdist=-1.2](0',1'){$a,d$}
\drawedge[curvedepth= 1,ELdist=.3](1',0'){$c$}
\drawedge(1',2'){$a,d$}
\drawedge(2',3dots'){$a,d$}
\drawedge(3dots',m-2'){$a,d$}
\drawedge[curvedepth= -4.5,ELdist=-1.2](m-2',0'){$a$}
\drawedge(m-2',m-1'){$d$}

\gasset{Nh=2.0,Nw=3.5,Nmr=1.25,ELdist=0.4,loopdiam=1.5}
\node(0)(6,6){0}
\drawloop[loopangle=270,ELdist=.3](0){$c$}
\node(1)(13,6){1}
\node(2)(20,6){2}
\drawloop[loopangle=270,ELdist=.3](2){$c$}
\node[Nframe=n](3dots)(27,6){$\dots$}
	{\small
\node(n-2)(34,6){$n-2$}
\drawloop[loopangle=270,ELdist=.3](n-2){$c$}
\node(n-1)(34,10){$n-1$}\rmark(n-1)
\drawloop[loopangle=360,ELdist=.3](n-1){$\Sigma$}
	}
\drawedge[curvedepth= 1.2,ELdist=-1.2](0,1){$a,d$}
\drawedge[curvedepth= .8,ELdist=.2](1,0){$c$}
\drawedge(1,2){$a,d$}
\drawedge(2,3dots){$a,d$}
\drawedge(3dots,n-2){$a,d$}
\drawedge[curvedepth= 4.5,ELdist=0.2](n-2,0){$a$}
\drawedge(n-2,n-1){$d$}
\drawedge[curvedepth= -1.6,ELdist=-1](m-1',0){$\varepsilon$}
\end{picture}\end{center}
\caption{An NFA  for product of right ideal  $L'_m(a,-,c,d)$ and its dialect $L_n(a,-,c,d)$.}
\label{fig:RIdealProduct1}
\end{figure}

\medskip
 
\emph{Restricted Complexity:} 
To prove the bound is tight, consider the two dialects of the DFA of Definition~\ref{def:ideal} shown in Figure~\ref{fig:RIdealProduct1}.
The $m-1$ sets $\{p'\}$ for $p' \in Q'_{m-1}$ are reachable in $\mathcal{D}_m'$ by words in $d^*$, and $\{(m-1)',0\}$ is reached by $d^{m-1}$. 
The $2^{n-2}$ sets of the form 
$\{(m-1)',0\} \cup S$, where $S\subseteq Q_n\setminus \{0\}$, are reachable using words in $\{c,d\}^*$ as follows:
To reach $\{(m-1)',0\} \cup S$, where $S= \{q_1,\dots,q_k\}$, $1\le q_1<q_2< \dots < q_k \le n-1$, we have first
$\{(m-1)',0\}d=\{(m-1)',0,1\}$. State 1 will then be moved to the right by applying either $d$ or $dc$ repeatedly: If $q_{k-1}=q_k-1$, use $d$; otherwise use $dc$.
Repeating this process $q_k$ times we eventually construct $S$. For example, to reach 
$\{(m-1)',0\} \cup \{2,5,7,8\}$ use $dd(dc)d(dc)(dc)d(dc)$.
The $2^{n-2}$ sets $\{(m-1)', 0\} \cup S$ that contain $n-1$ all accept $\{a,c,d\}^*$; hence they are all equivalent.

The remaining states are pairwise distinguishable:
States $\{p'\}$ and $\{q'\}$ with $0\le p<q \le m-2$ are distinguished by $d^{m-1-q}d^{n-1}$, and 
 $\{p'\}$ is distinguished from $\{(m-1)',0\} \cup S$  by $d^{n-1}$.
Two non-final states $\{(m-1)',0\} \cup S$ and
$\{(m-1)',0\} \cup T$ with $q \in S\oplus T$ are distinguished by $a^{n-2-q}d$.
Thus the product has complexity $m+2^{n-2}$.

\begin{figure}[th]
\unitlength 8.5pt
\begin{center}\begin{picture}(38,17)(-4,2.5)
\gasset{Nh=2.0,Nw=4.5,Nmr=1.25,ELdist=0.4,loopdiam=1.5}
\node(0')(-4,14){$0'$}\imark(0')
\drawloop(0'){$c$}
\node(1')(3,14){$1'$}
\node(2')(10,14){$2'$}
\drawloop(2'){$c$}
\node[Nframe=n](3dots')(17,14){$\dots$}
\node(m-2')(24,14){$(m-2)'$}
\drawloop(m-2'){$c$}
\node(m-1')(24,10){$(m-1)'$}

\drawloop[loopangle=360,ELdist=.3](m-1'){$\Sigma'$}
\drawedge[curvedepth= 1.4,ELdist=-1.2](0',1'){$a,d$}
\drawedge[curvedepth= 1,ELdist=.3](1',0'){$c$}
\drawedge(1',2'){$a,d$}
\drawedge(2',3dots'){$a,d$}
\drawedge(3dots',m-2'){$a,d$}
\drawedge[curvedepth= -4.5,ELdist=-1.2](m-2',0'){$a$}
\drawedge(m-2',m-1'){$d$}

\gasset{Nh=2.0,Nw=3.5,Nmr=1.25,ELdist=0.4,loopdiam=1.5}
\node(0)(6,6){0}
\drawloop[loopangle=270,ELdist=.3](0){$c$}
\node(1)(13,6){1}
\node(2)(20,6){2}
\drawloop[loopangle=270,ELdist=.3](2){$c$}
\node[Nframe=n](3dots)(27,6){$\dots$}
	{\small
\node(n-2)(34,6){$n-2$}
\drawloop[loopangle=270,ELdist=.3](n-2){$c$}
\node(n-1)(34,10){$n-1$}\rmark(n-1)
\drawloop[loopangle=360,ELdist=.3](n-1){$\Sigma$}
	}
\drawedge[curvedepth= 1.2,ELdist=-1.2](0,1){$b,d$}
\drawedge[curvedepth= .8,ELdist=.2](1,0){$c$}
\drawedge(1,2){$b,d$}
\drawedge(2,3dots){$b,d$}
\drawedge(3dots,n-2){$b,d$}
\drawedge[curvedepth= 4.5,ELdist=0.2](n-2,0){$b$}
\drawedge(n-2,n-1){$d$}
\drawedge[curvedepth= -1.6,ELdist=-1](m-1',0){$\varepsilon$}
\end{picture}\end{center}
\caption{An NFA  for product of right ideal  $L'_m(a,-,c,d)$ and its dialect $L_n(b,-,c,d)$.}
\label{fig:RIdealProduct2}
\end{figure}
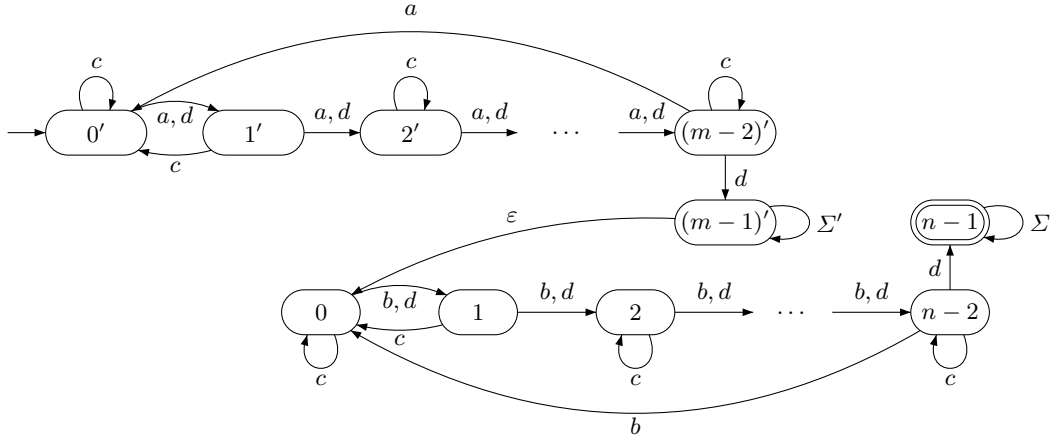

\emph{Unrestricted Complexity:}
Consider two dialects of the DFA of Definition~\ref{def:ideal} shown in Figure~\ref{fig:RIdealProduct2}.
Here $\Sigma' = \{a, c,d\}$ and $\Sigma = \{b,c,d\}$.
By the restricted case, all states $\{p'\}$ for $p' \in Q'_{m-1}$ and $\{(m-1)', 0\} \cup S$ for $S \subseteq Q_n \setminus \{0\}$ are reachable by words in $\{c,d\}^*$.
Apply $b$ from $\{0'\}$ to reach the empty subset.
By applying $b$ to $\{(m-1)',0\} \cup S$, $S\subseteq Q_n\setminus \{0\}$, we reach all $2^n-1$ non-empty subsets of $Q_n$;
hence all states are reachable.
However, the $2^{n-1}$ sets $S \subseteq Q_n$ that contain $n-1$ all accept $\{b,c,d\}^*$ and are sent to the empty state by $a$; hence they are all equivalent.
Similarly, the $2^{n-2}$ sets $\{(m-1)', 0\} \cup S$ that contain $n-1$ all accept $\{b,c,d\}^*$ and are sent to $\{(m-1)', 0\}$ by $a$; hence they are also equivalent.

The remaining states are pairwise distinguishable.
States $\{p'\}$ and $\{q'\}$ with $0\le p<q \le m-2$ are distinguished by $d^{m-1-q}d^{n-1}$, and 
 $\{p'\}$ is distinguished from $\{(m-1)',0\} \cup S$  by $d^{n-1}$
 or from $S$, where $\emptyset \subsetneq S\subseteq Q_n$ by $d^{n-1}$.
Two states $\{(m-1)',0\} \cup S$ and
$\{(m-1)',0\} \cup T$ with $q\in S\oplus T$ are distinguished by $b^{n-2-q}d$.
Two states $S$ and $T$ with $q\in S\oplus T$ are distinguished by $b^{n-2-q}d$.
A state $\{(m-1)',0\} \cup S$ is distinguishable from $T$ where $S,T \subseteq Q_n$ by $ad^{n-1}$.
Thus all $m+2^{n-2}+2^{n-1}+1$ states are pairwise distinguishable.

At least three inputs to each DFA are required to achieve the bound in the unrestricted case:
There must be a letter in $\Sigma$ (like $d$) with a transition to $n-1$ to reach sets containing $n-1$, and this letter must be in $\Sigma'$ in order to reach the sets that contain both $(m-1)'$ and $n-1$. However no single letter in $\Sigma' \cap \Sigma$ is sufficient to reach every set of the form $\{(m-1)',0\} \cup S$, regardless of its behaviour on $Q_n$. For example, if the letter maps $0 \to q_1$ and $q_1 \to q_2$ then it is impossible to reach the state $\{(m-1)',0,q_2\}$ by repeatedly applying the letter from $\{(m-1)',0\}$, as it can never delete $q_1$.
Hence there must be at least two letters in $\Sigma' \cap \Sigma$.
Furthermore there must be some $\ell \in \Sigma \setminus \Sigma'$ to reach the empty state, and there must be some $\ell' \in \Sigma' \setminus \Sigma$ to distinguish $\{(m-1)', 0, n-1\}$ from $\{n-1\}$. Thus each alphabet must contain at least three letters to meet the bound.

\item {\bf Boolean Operations}

\emph{Restricted Complexity:}
The bounds for right ideals were derived in~\cite{BJL13}.
We show that the DFAs  $\mathcal{D}'_m(a,-,-,d)$ and $\mathcal{D}_n(-,-,d,a)$ of the right ideals of Definition~\ref{def:ideal} meet the bounds.

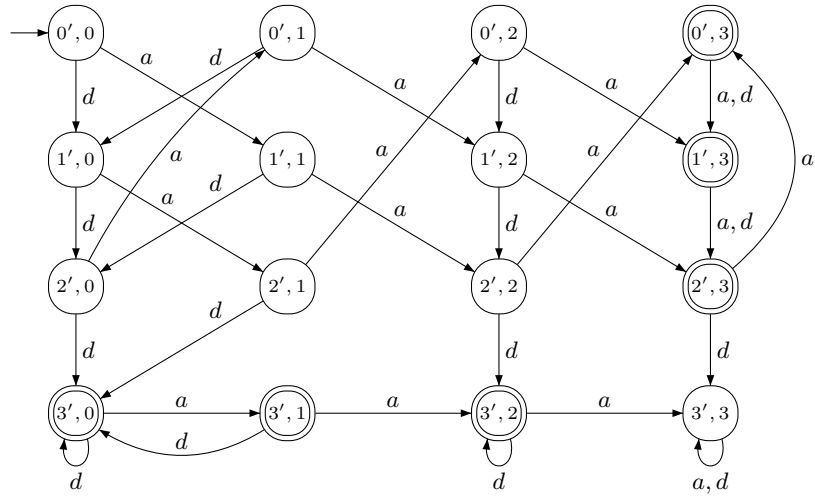
\begin{figure}[th]
\unitlength 8pt
\begin{center}\begin{picture}(35,22)(0,-2)
\gasset{Nh=2.6,Nw=2.6,Nmr=1.2,ELdist=0.3,loopdiam=1.2}
	{\scriptsize
\node(0'0)(2,18){$0',0$}\imark(0'0)
\node(1'0)(2,12){$1',0$}
\node(2'0)(2,6){$2',0$}
\node(3'0)(2,0){$3',0$}\rmark(3'0)

\node(0'1)(12,18){$0',1$}
\node(1'1)(12,12){$1',1$}
\node(2'1)(12,6){$2',1$}
\node(3'1)(12,0){$3',1$}\rmark(3'1)

\node(0'2)(22,18){$0',2$}
\node(1'2)(22,12){$1',2$}
\node(2'2)(22,6){$2',2$}
\node(3'2)(22,0){$3',2$}\rmark(3'2)

\node(0'3)(32,18){$0',3$}\rmark(0'3)
\node(1'3)(32,12){$1',3$}\rmark(1'3)
\node(2'3)(32,6){$2',3$}\rmark(2'3)
\node(3'3)(32,0){$3',3$}
	}
	
\drawedge(0'0,1'0){$d$}
\drawedge[ELpos=30, ELside=r](0'1,1'0){$d$}
\drawedge(0'2,1'2){$d$}
\drawedge(0'3,1'3){$a,d$}

\drawedge(1'0,2'0){$d$}
\drawedge[ELpos=30, ELside=r](1'1,2'0){$d$}
\drawedge(1'2,2'2){$d$}
\drawedge(1'3,2'3){$a,d$}
\drawedge[curvedepth=-4,ELside=r](2'3,0'3){$a$}

\drawedge[curvedepth= 1,ELside=r](2'0,0'1){$a$}
\drawedge(2'0,3'0){$d$}
\drawedge[ELpos=30, ELside=r](2'1,3'0){$d$}
\drawedge(2'2,3'2){$d$}
\drawedge(2'3,3'3){$d$}

\drawedge(3'0,3'1){$a$}
\drawedge(3'1,3'2){$a$}
\drawedge(3'2,3'3){$a$}

\drawedge[ELpos=30](0'0,1'1){$a$}
\drawedge[ELpos=40](1'0,2'1){$a$}
\drawedge(0'1,1'2){$a$}
\drawedge(1'1,2'2){$a$}
\drawedge(2'1,0'2){$a$}
\drawedge(0'2,1'3){$a$}
\drawedge(1'2,2'3){$a$}
\drawedge(2'2,0'3){$a$}

\drawloop[loopangle=270,ELdist=.3](3'0){$d$}
\drawedge[curvedepth= 2,ELside=r](3'1,3'0){$d$}
\drawloop[loopangle=270,ELdist=.3](3'2){$d$}
\drawloop[loopangle=270,ELdist=.3](3'3){$a,d$} 

\end{picture}\end{center}
\caption{The direct product for the symmetric difference of right ideals $L'_4(a,-,-,d)$ and $L_4(-,-,d,a)$.}
\label{fig:idealcross1}
\end{figure}

Consider the direct product of $L'_m(a,-,-,d)$ and $L_n(-,-,d,a)$, illustrated in Figure~\ref{fig:idealcross1} for $m=n=4$.
For $p' \in Q'_{m-1}$ state $(p',0)$ is reached by $d^p$.
Since the first column of $Q_{m-1} \times Q_n$ is reachable and $(p',q) \xrightarrow{a} ((p+1)',(q+1))$, where $p+1$ is taken mod $m-1$,
we can reach every state in $Q_{m-1} \times Q_n$.
State $((m-1)', q)$ is reached by $d^{m-1}a^q$; hence the states of $Q'_m \times Q_n$ are reachable.

We now check distinguishability, which depends on the final states of the DFA.
The direct product is made to recognize $L'_m(a,-,-,d) \circ L_n(-,-,d,a)$ by setting the final states to be $(\{(m-1)'\} \times Q_n) \circ (Q'_m \times \{n-1\})$.

For intersection and symmetric difference, all states are pairwise distinguishable. States that differ in the first coordinate are distinguished by words in $d^*a^*$ and states that differ in the second coordinate are distinguished by words in $a^*d^*$. Hence the complexity is $mn$.

For difference, the states $\{(p',n-1) \mid p' \in Q'_m\}$ are all empty, and therefore equivalent.
The remaining states are non-empty, and they are distinguished by words in $d^*$ if they differ in the first coordinate or by words in $a^*d^*$ if they differ in the second coordinate.
Hence the complexity is $mn-m+1$.

For union, the states $\{(p',n-1) \mid p' \in Q'_m\} \cup \{((m-1)', q) \mid q \in Q_n\}$ are all final and equivalent as they accept $\{a,d\}^*$. The remaining states are distinguished by words in $d^*$ if they differ in the first coordinate or by words in $a^*$ if they differ in the second coordinate.
Hence the complexity is $mn-(m+n-2)$.

As in regular languages, one letter in $\Sigma' \cap \Sigma$ is not sufficient to reach all the states of $Q'_{m-1} \times Q_{n-1}$  for all values of $m$ and $n$;
hence two letters are required to meet any of the bounds.

\medskip

\emph{Unrestricted Complexity:}
The unrestricted bounds for right ideals are the same as those for arbitrary regular languages~\cite{Brz16}.
We show that the DFAs  $\mathcal{D}'_m(a,-,c,d)$ and $\mathcal{D}_n(b,-,d,a)$ of Definition~\ref{def:ideal} meet the bounds.

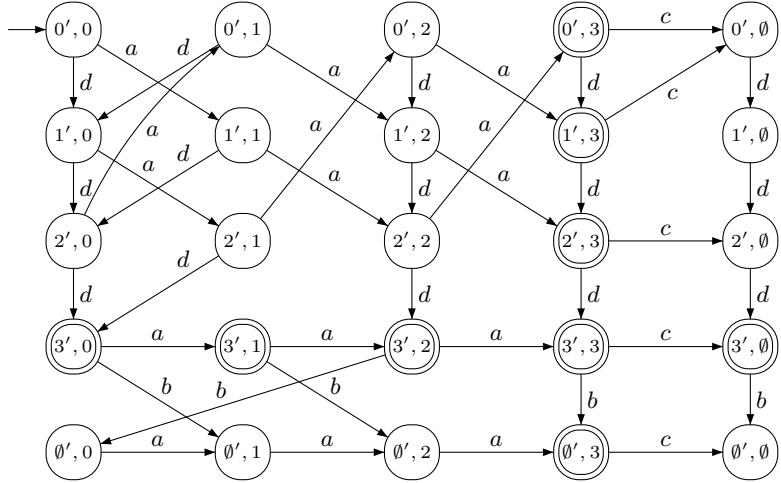
\begin{figure}[th]
\unitlength 8pt
\begin{center}\begin{picture}(35,23)(0,-6)
\gasset{Nh=2.6,Nw=2.6,Nmr=1.2,ELdist=0.3,loopdiam=1.2}
	{\scriptsize
\node(0'0)(2,15){$0',0$}\imark(0'0)
\node(1'0)(2,10){$1',0$}
\node(2'0)(2,5){$2',0$}
\node(3'0)(2,0){$3',0$}\rmark(3'0)
\node(4'0)(2,-5){$\emptyset',0$}

\node(0'1)(10,15){$0',1$}
\node(1'1)(10,10){$1',1$}
\node(2'1)(10,5){$2',1$}
\node(3'1)(10,0){$3',1$}\rmark(3'1)
\node(4'1)(10,-5){$\emptyset',1$}

\node(0'2)(18,15){$0',2$}
\node(1'2)(18,10){$1',2$}
\node(2'2)(18,5){$2',2$}
\node(3'2)(18,0){$3',2$}\rmark(3'2)
\node(4'2)(18,-5){$\emptyset',2$}

\node(0'3)(26,15){$0',3$}\rmark(0'3)
\node(1'3)(26,10){$1',3$}\rmark(1'3)
\node(2'3)(26,5){$2',3$}\rmark(2'3)
\node(3'3)(26,0){$3',3$}\rmark(3'3)
\node(4'3)(26,-5){$\emptyset',3$}\rmark(4'3)

\node(0'4)(34,15){$0',\emptyset$}
\node(1'4)(34,10){$1',\emptyset$}
\node(2'4)(34,5){$2',\emptyset$}
\node(3'4)(34,0){$3',\emptyset$}\rmark(3'4)
\node(4'4)(34,-5){$\emptyset',\emptyset$}
	}
	
\drawedge(0'0,1'0){$d$}
\drawedge[ELpos=30, ELside=r](0'1,1'0){$d$}
\drawedge(0'2,1'2){$d$}
\drawedge(0'3,1'3){$d$}

\drawedge(1'0,2'0){$d$}
\drawedge[ELpos=30, ELside=r](1'1,2'0){$d$}
\drawedge(1'2,2'2){$d$}
\drawedge(1'3,2'3){$d$}

\drawedge[curvedepth= 1,ELside=r](2'0,0'1){$a$}
\drawedge(2'0,3'0){$d$}
\drawedge[ELpos=30, ELside=r](2'1,3'0){$d$}
\drawedge(2'2,3'2){$d$}
\drawedge(2'3,3'3){$d$}

\drawedge(0'4,1'4){$d$}
\drawedge(1'4,2'4){$d$}
\drawedge(2'4,3'4){$d$}
\drawedge(3'4,4'4){$b$}

\drawedge(3'0,3'1){$a$}
\drawedge(3'1,3'2){$a$}
\drawedge(3'2,3'3){$a$}

\drawedge[ELpos=30](0'0,1'1){$a$}
\drawedge[ELpos=40](1'0,2'1){$a$}
\drawedge(0'1,1'2){$a$}
\drawedge(1'1,2'2){$a$}
\drawedge(2'1,0'2){$a$}
\drawedge(0'2,1'3){$a$}
\drawedge(1'2,2'3){$a$}
\drawedge(2'2,0'3){$a$}

\drawedge(4'0,4'1){$a$}
\drawedge(4'1,4'2){$a$}
\drawedge(4'2,4'3){$a$}
\drawedge(4'3,4'4){$c$}

\drawedge(0'3,0'4){$c$}
\drawedge[ELside=r](1'3,0'4){$c$}
\drawedge(2'3,2'4){$c$}
\drawedge(3'3,3'4){$c$}

\drawedge(3'0,4'1){$b$}
\drawedge(3'1,4'2){$b$}
\drawedge[ELside=r, ELpos=55](3'2,4'0){$b$}
\drawedge(3'3,4'3){$b$}
\end{picture}\end{center}
\caption{Partial illustration of the direct product for  $L'_4(a,-,c,d) \cup L_4(b,-,d,a)$.}
\label{fig:idealcross2}
\end{figure}

To compute $L'_m(a,-,c,d) \circ L_n(b,-,d,a)$, where $\circ$ is a boolean operation, add an empty state $\emptyset'$ to $\mathcal{D}'_m(a,-,c,d)$, and send all the transitions from any state of $Q'_m$ under $b$ to $\emptyset'$.
Similarly, add an empty state  $\emptyset$ to  $\mathcal{D}_n(b,a,d)$  together with appropriate transitions; now the alphabets of the resulting DFAs are the same.
The direct product of $L'_m(a,-,c,d)$ and $L_n(b,-,d,a)$ is illustrated in Figure~\ref{fig:idealcross2} for $m=n=4$.

As in the restricted case, the $mn$ states of $Q'_m \times Q_n$ are reachable by words in $\{a,d\}^*$.
The remaining states $(p', \emptyset)$ and $(\emptyset', q)$ are easily seen to be reachable using $b$ and $c$.

We now check distinguishability, which depends on the final states of the DFA.
The direct product is made to recognize $L'_m(a,-,c,d) \circ L_n(b,-,d,a)$ by setting the final states to be $(\{(m-1)'\} \times Q_n \cup \{\emptyset\}) \circ (Q'_m \cup \{\emptyset'\}\times \{n-1\})$.

For union and symmetric difference, all states are pairwise distinguishable: States that differ in the first coordinate are distinguished by words in $d^*c$ and states that differ in the second coordinate are distinguished by words in $a^*b$.

For difference, the final states are $((m-1)', q)$ for $q \not= n-1$.
The alphabet of $L'_m(a,-,c,d) \setminus L_n(b,-,a,d)$ is $\{a,c,d\}$; hence we can omit $b$ and delete all states $(\emptyset', q)$ and be left with a DFA recognizing the same language.
The remaining states are distinguished by words in $d^*c$ if they differ in the first coordinate or by words in $a^* d^*$ if they differ in the second coordinate.

For intersection, the only final state is $((m-1)', n-1)$.
The alphabet of $L'_m(a,-,c,d) \cap L_n(b,-,d,a)$ is $\{a,b\}$; hence we can omit $b$ and $c$ and delete all states $(p', \emptyset)$ and $(\emptyset', q)$.
The remaining $mn$ states are pairwise distinguishable as in the restricted case.

Note that the bound for difference is met by $L'_m(a,-,c,d) \setminus L_n(-,-,d,a)$, and that of intersection is met by $L'_m(a,-,-,d) \cap L_n(-,-,d,a)$.
However the bounds for union and symmetric difference all require three letters in each dialect:
There must be a letter in $\Sigma' \setminus \Sigma$ to reach states of the form $(p', \emptyset)$, and 
there must a letter in $\Sigma \setminus \Sigma'$ to reach states of the form $(\emptyset', q)$.  As in regular languages, one letter in $\Sigma' \cap \Sigma$ is not sufficient to reach all the states of $Q'_m \times Q_n$  for all values of $m$ and $n$; hence $|\Sigma' \cap \Sigma| \ge 2$ and so both $\Sigma'$ and $\Sigma$ must contain at least three letters.
\end{enumerate}

It has been shown in~\cite{BJL13} that at least two letters are needed for each right ideal that meets the bounds for star or reversal. Hence  almost all our witnesses in Theorem~\ref{thm:rightideals} that meet the bounds for the common operations use minimal alphabets.

\end{proof}

\section{Prefix-Closed Languages}
The complexity of operations on prefix-closed languages was studied in~\cite{BJZ14}, but most complex prefix-closed languages were not considered.
As every prefix-closed language has an empty quotient, the restricted and unrestricted complexities are the same for binary operations.
\begin{definition}
\label{def:ClosedWit}
For $n\ge 4$, let $\mathcal{D}_n=\mathcal{D}_n(a,b,c,d)=(Q_n,\Sigma,\delta_n, 0, Q_n\setminus \{n-1\})$, 
where 
$\Sigma=\{a,b,c,d\}$, 
and $\delta_n$ is defined by the transformations $a\colon (0,\dots,n-2)$,
$b\colon(0,1)$,
${c\colon(1\rightarrow 0)}$, and
$d\colon \left(_{n-2}^{0} \; q\to q-1 \pmod n \right)$.
Let $L_n=L_n(a,b,c,d)$ be the language accepted by~$\mathcal{D}_n$.
The structure of $\mathcal{D}_n(a,b,c,d)$ is shown in Figure~\ref{fig:ClosedWit}. 
\end{definition}

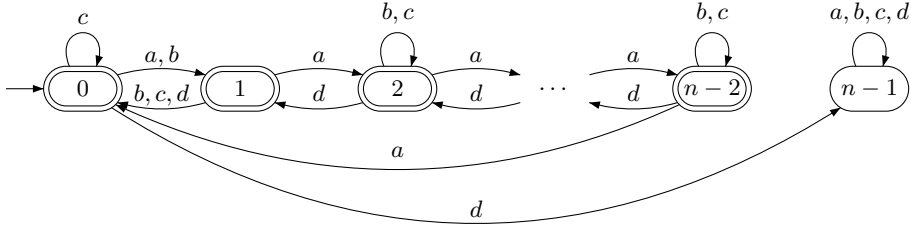
\begin{figure}[th]
\unitlength 8.5pt
\begin{center}\begin{picture}(37,8)(0,3)
\gasset{Nh=2.0,Nw=3.5,Nmr=1.25,ELdist=0.4,loopdiam=1.5}
\node(0)(1,7){0}\imark(0)\rmark(0)
\node(1)(8,7){1}\rmark(1)
\node(2)(15,7){2}\rmark(2)
\node[Nframe=n](3dots)(22,7){$\dots$}
	{\small
\node(n-2)(29,7){$n-2$}\rmark(n-2)
	}
	{\small
\node(n-1)(36,7){$n-1$}
	}
\drawedge[curvedepth= .9,ELdist=.1](0,1){$a,b$}
\drawedge[curvedepth= -6,ELdist=.3](0,n-1){$d$}
\drawedge[curvedepth= .9,ELdist=-1.1](1,0){$b,c,d$}
\drawedge[curvedepth= .9,ELdist=.3](1,2){$a$}
\drawedge[curvedepth= .9,ELdist=-1.1](2,1){$d$}
\drawedge[curvedepth= .9,ELdist=.3](2,3dots){$a$}
\drawedge[curvedepth= .9,ELdist=-1.1](3dots,2){$d$}
\drawedge[curvedepth= .9,ELdist=.3](3dots,n-2){$a$}
\drawedge[curvedepth= .9,ELdist=-1.1](n-2,3dots){$d$}
\drawedge[curvedepth= 3.6,ELdist=-1.0](n-2,0){$a$}

\drawloop(0){$c$}
\drawloop(2){$b,c$}
\drawloop(n-2){$b,c$}
\drawloop(n-1){$a,b,c,d$}
\end{picture}\end{center}
\caption{DFA of a most complex prefix-closed language.}
\label{fig:ClosedWit}
\end{figure}

\begin{theorem}[Most Complex Prefix-Closed Languages]
\label{thm:closed}
For each $n\ge 4$, the DFA of Definition~\ref{def:ClosedWit} is minimal and  $L_n(a,b,c,d)$ is a prefix-closed language of complexity $n$.
The stream $(L_m(a,b,c,d) \mid m \ge 4)$ with some dialect streams
is most complex in the class of prefix-closed languages.
At least four letters are required to meet the bounds below.
\begin{enumerate}
\item
The syntactic semigroup of $L_n(a,b,c,d)$ has cardinality $n^{n-1}$.  
\item
The quotients of $L_n(a,-,-,d)$ have complexity $n$, except for $\emptyset$, which has complexity 1.
\item
The reverse of $L_n(a,-,-,d)$ has complexity $2^{n-1}$, and $L_n(a,-,-,d)$ has $2^{n-1}$ atoms.
\item
Each atom $A_S$  of $L_n(a,b,c,d)$ has maximal complexity:
\begin{equation*}
\kappa(A_S) =
\begin{cases}
2^{n-1}, & \text{if $S=\emptyset$;}\\
1 + \sum_{x=1}^{n-|S|}\sum_{y=1}^{|S|}\binom{n-1}{x-1}\binom{n-x}{y},
& \text{if $\emptyset \subsetneq S \subsetneq Q_n$.}
\end{cases}
\end{equation*}

\item
The star of  $L_n(a,-,c,d)$  has complexity $2^{n-2}+1$.
\item
The product $L'_m(a,b,c,d) L_n(a,d,b,c)$ has complexity $(m+1)2^{n-2}$.
\item
For any proper binary boolean function $\circ$, the complexity of $L'_m(a,b,-,d) \circ L_n(b,a,-,d)$
is maximal. In particular, the complexity is $mn$ if $\circ\in \{\cup,\oplus\}$, 
 $mn-(n-1)$ if $\circ=\setminus$, and $mn-(m+n-2)$ if $\circ=\cap$.
\end{enumerate}

\end{theorem}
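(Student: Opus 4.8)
The plan is to follow the template of the proofs of Theorems~\ref{thm:regular} and~\ref{thm:rightideals}: verify the preliminary claims about $\mathcal{D}_n$ of Definition~\ref{def:ClosedWit}, dispatch the structural measures by appealing to cited results, and then give explicit reachability and distinguishability arguments for star, product, and the boolean operations. First I would note that $L_n$ is prefix-closed because the only non-final state $n-1$ is a sink (it loops under $a,b,c,d$), so a computation that leaves the final set $Q_n\setminus\{n-1\}$ never returns; equivalently, the complement is a right ideal. Minimality is immediate since $a$ reaches $0,\dots,n-2$ and $d$ reaches $n-1$ from $0$, while any two states are separated by a short word over $\{a,d\}$ that drives exactly one of them to the non-final sink.

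For Part~1 I would observe that every letter fixes $n-1$, so the transition semigroup lies in the monoid of the $n^{n-1}$ transformations of $Q_n$ that fix $n-1$; since $a,b$ generate the symmetric group on $\{0,\dots,n-2\}$, $c$ is a rank-decreasing contraction, and $d$ sends $0$ into the sink, the same generation argument used for right ideals (cf.~\cite{BrYe11,BSY15}) shows these four transformations generate the whole monoid. Parts~2--4 I would settle by citation and by the observation that $\mathcal{D}_n$ realizes \emph{all} transformations fixing $n-1$: redirecting the initial state of $\mathcal{D}_n(a,-,-,d)$ yields minimal DFAs for the nonempty quotients while the sink gives the empty quotient (Part~2); the reversal bound $2^{n-1}$ and the equinumerosity of atoms and reversal quotients follow from the prefix-closed analysis together with~\cite{BrTa14} (Part~3); and the atom-complexity formula follows because the computation of~\cite{BrDa15} applies verbatim once all transformations fixing $n-1$ are available (Part~4).

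The combinatorial core is Parts~5--7, and in each case the decisive simplification is that the non-final sink $n-1$ contributes nothing to acceptance and never leaves, so any subset $S$ of $Q_n$ is equivalent to $S\cup\{n-1\}$ and reachable subsets may be taken inside $\{0,\dots,n-2\}$. For Part~5 I would apply the standard star construction to $L_n(a,-,c,d)$ (a final initial state together with return transitions from every final state to $0$) and, using this reduction, show that the star-loop forces $0$ into every nonempty reachable set, leaving exactly $2^{n-2}+1$ pairwise-distinguishable states. For the product $L'_m(a,b,c,d)L_n(a,d,b,c)$ of Part~6 I would build the $\varepsilon$-NFA in which each final state $0',\dots,(m-2)'$ of $\mathcal{D}'_m$ has an $\varepsilon$-arc to $0$, so that subset states take the form $\{p'\}\cup S$ with $S\subseteq\{0,\dots,n-2\}$. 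When $p'\ne(m-1)'$ the $\varepsilon$-arc forces $0\in S$, giving $2^{n-2}$ choices over the $m-1$ live states of $\mathcal{D}'_m$; for the dead state $(m-1)'$ there are $2^{n-1}$ choices, for a total of $(m-1)2^{n-2}+2^{n-1}=(m+1)2^{n-2}$. I expect reachability of every such set to be the main obstacle: I would build $S$ one element at a time using the dialect's $(n-1)$-cycle and contraction on $Q_n$ (with $d$ acting there as a transposition) while steering $p'$ with the $\mathcal{D}'_m$-actions, and then check distinguishability with words that carry a single element of a symmetric difference onto a final state of $\mathcal{D}_n$.

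Finally, for Part~7 I would form the direct product of $\mathcal{D}'_m(a,b,-,d)$ and $\mathcal{D}_n(b,a,-,d)$, in which $a$ and $b$ act as a cycle on one coordinate and a transposition on the other while $d$ decrements both coordinates modulo $m$ and $n$. Reachability of all $mn$ states of $Q'_m\times Q_n$ proceeds as in the regular case---the permutations $a,b$ reach the $(m-1)(n-1)$ interior states (invoking~\cite{BBMR14} for the small coprimality-sensitive cases), and a single $d$ applied from a suitable interior state pushes a computation into the sink row or column. The four bounds then diverge only in distinguishability, which I would control by identifying the equivalence classes forced by the two non-final sinks: for intersection every state with $p'=(m-1)'$ or $q=n-1$ is dead, collapsing $m+n-1$ states to one and yielding $mn-(m+n-2)$; for difference only the row $p'=(m-1)'$ dies, removing $n-1$ states for $mn-(n-1)$; and for union and symmetric difference no nontrivial collapse occurs, so all $mn$ states survive. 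The recurring technical burden---and the step most likely to need care---is verifying that after each collapse the surviving states are genuinely pairwise distinguishable, which I would do with short words in $\{a,b\}^*$ followed by a single $d$.
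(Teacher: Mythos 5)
Your proposal follows the paper's proof almost step for step: the same NFA constructions and state counts for star and product, the same appeal to \cite{BJZ14} for the upper bounds, the same use of \cite[Theorem 1]{BBMR14} plus a single application of $d$ for the boolean direct product, and the same identification of the collapsed sink rows/columns giving $mn$, $mn-(n-1)$, and $mn-(m+n-2)$. The two genuine differences are presentational. For Part~1 the paper gets the semigroup in one line via the identity $ada=d'$, which shows the transition semigroup coincides with that of the right-ideal witness already known to be maximal; your direct generation argument works but you would still have to exhibit a rank-$(n-1)$ transformation sending some state to $n-1$ and compose it with the permutations of $Q_{n-1}$, which is exactly what $ada$ packages. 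For Part~7 the paper complements to right ideals and transfers the four bounds by De Morgan's laws, whereas you work with the prefix-closed DFAs directly; these are the same computation with final states flipped, and your identification of the dead rows and columns matches the paper's after the De Morgan swap. The one place where your plan as written does not close is Part~3: you propose to settle reversal ``by citation and by the observation that $\mathcal{D}_n$ realizes all transformations fixing $n-1$,'' but the reversal witness is the two-letter dialect $L_n(a,-,-,d)$, whose transition semigroup is far from maximal, and no cited result says this particular dialect attains $2^{n-1}$. The paper supplies the missing half-paragraph: in the subset automaton of the reversed complement, every set $\{n-1,q_2,\dots,q_k\}$ is reached from a smaller one by $da^{\,n-(q_2+1)}$, and sets differing in $q$ are separated by $a^q$. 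You need that explicit reachability/distinguishability argument (routine, and of the same flavour as the ones you do give for star and product), since neither the citation nor the full-semigroup observation covers it.
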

\begin{proof}
The DFA is minimal since state $p$ rejects $d^{q}$ if and only if $p < q$. It is prefix-closed because all non-empty states are final. 
\begin{enumerate}
\item {\bf Semigroup}
Let $d'$ induce the transformation $(_0^{n-2} q\to q+1)$ (this was called $d$ in the right-ideal section).
Since $ada =d'$, the transition semigroup of the DFA of Figure~\ref{fig:ClosedWit} is the same as that of the DFA of the right ideal of Figure~\ref{fig:RIdealWit}.

\item {\bf Quotients}
Obvious.

\item {\bf Reversal}
Since reversal commutes with complementation, we consider the complement of the language accepted by the DFA of Figure~\ref{fig:ClosedWit} restricted to the alphabet $\{a,d\}$. 
It was proved in~\cite{BJL13} that the reverse of a right ideal has complexity at most $2^{n-1}$, and  in~\cite{BrTa14} that the number of atoms is the same as the complexity of the reverse.
It remains to prove that all $2^{n-1}$ states of the DFA $\mathcal{D}^R$ obtained by the subset construction from the NFA 
$\mathcal{N}$ obtained by reversal of the DFA of the right ideal $\mathcal{D}$
are reachable and distinguishable. The proof is similar to that of~\cite{BJL13}. 
 Subset $\{n-1\}$ is the initial state of $\mathcal{N}$, and $n-1$ appears in every reachable state of $\mathcal{D}^R$.
 Every subset $\{n-1,q_{2},q_{3} \ldots,q_{k}\}$ of size $k$, 
 where $1\le k\le n-2$ and
 $0\le q_2<q_3<\dots <q_k\le n-2$,
 is reached from the subset $\{n-1,q_3-(q_2+1),\ldots,q_k-(q_2+1)\}$ of size $k-1$
 by $da^{n-(q_2+1)}$.
Since only state $q$, $0\le q\le n-2$, accepts $a^q$, any two subsets differing by $q$ are distinguishable by $a^q$.

\item {\bf Atoms}
Let $L$  be a prefix-closed language with quotients $K_0, \dots, K_{n-1}$, $n \ge 4$.
Recall that $\overline{L}$ is a right ideal with quotients $\overline{K_0}, \dots, \overline{K_{n-1}}$.
For  $S \subseteq \{0, \dots, n-1\}$, the atom of $L$ corresponding to $S$ is
$A_S = \bigcap_{i \in S} K_i \cap \bigcap_{i \in \overline{S}} \overline{K_i}$. This can be rewritten as $\bigcap_{i \in \overline{S}} \overline{K_i} \cap \bigcap_{i \in \overline{\overline{S}}} \overline{\overline{K_i}}$, which is the atom of $\overline{L}$ corresponding to $\overline{S}$; hence the sets of atoms of $L$ and $\overline{L}$ are the same.
The claim follows from the theorem for right ideals. The proof in~\cite{BrDa15} applies since the DFA that accepts the complement of the prefix-closed language 
of Figure~\ref{fig:ClosedWit} 
has all the transformations that fix $n-1$.

\item {\bf Star}
It was proved in~\cite{BJZ14} that $2^{n-2}+1$ is the maximal complexity of the star of a prefix-closed language. 
We now show that $L_n(a,-,c,d)$ meets this bound.
Since $L_n(a,-,c,d)$ accepts $\varepsilon$, no new initial state is needed and it suffices to delete the empty state and add an $\varepsilon$-transition from each final state to the initial state to get an NFA $\mathcal{N}$ for $L_n^*$. 
In this NFA all  $2^{n-2}$ subsets of $Q_{n-1}$ containing 0 are reachable and pairwise distinguishable. 
Any non-empty set $\{0, q_2, q_3, \dots, q_k\}$ of size $k$ with $0 < q_2 < q_3 < \dots < q_k \le n-2$ is reached from $\{0, q_3-q_2, \dots, q_k-q_2\}$ of size $k-1$ by $a(ac)^{q_2-1}$.
Moreover, the empty set is reached from $\{0\}$ by $d$, giving the required bound.
Sets  $\{0\} \cup S$ and $\{0\} \cup T$ with $q\in S\oplus T$ are distinguished by $a^{n- 2-q}d^{n- 2}$.

\item {\bf Product}
It was shown in~\cite{BJZ14} that the complexity of the product of prefix-closed languages is $(m+1)2^{n-2}$.
We now prove that our witness $L_m'(a,b,c,d)$ with minimal DFA $\mathcal{D}'_m(a,b,c,d)$ together with the dialect $L_n(a,d,b,c)$ 
with minimal DFA $\mathcal{D}_n(a,d,b,c)$ meets this bound. 
Construct the following NFA $\mathcal{N}$ for the product. Start with $\mathcal{D}'_m(a,b,c,d)$, but make all of its states non-final.
Delete the empty state from $\mathcal{D}_n(a,d,b,c)$ and all the transitions to the empty state, add an $\varepsilon$-transition from each state $p'\in Q'_{m-1}$ to the initial state $0$ of $\mathcal{D}_n(a,d,b,c)$.
We will show that $(m-1)2^{n-2}$ states of the form $\{p',0\}\cup S$, where $S\subseteq Q_{n-1}\setminus\{0\}$, and $2^{n-1}$ states of the form $\{(m-1)'\} \cup S$, where $S\subseteq Q_{n-1}$ are reachable and pairwise distinguishable.

The initial state of the subset automaton of $\mathcal{N}$ is $\{0',0\}$.
State $\{1', 0\}$ is reachable by $b$ and $\{p', 0\}$ for $2 \le p \le m-2$ is reachable from $\{1', 0\}$ by $(ab)^{p-1}$.
State $\{p', 0\} \cup S$ where $p' \in Q_{m-1}'$ and $S = \{q_1, \dots, q_k\}$ is reachable from $\{r', 0, q_2 - q_1, \dots, q_k -q_1\}$ by $a(ab)^{q_1-1}$ for some $r \in Q_{m-1}'$.
By induction, all $(m-1)2^{n-2}$ states $\{p', 0\} \cup S$ are reachable.
From $\{0', 0\} \cup S$ by $d^2$ we reach $\{ (m-1)', 0\} \cup S$. Further apply $ca$ to reach $\{(m-1)'\} \cup S$.
Hence all $2^{n-1}$ subsets of the form $\{(m-1)'\} \cup S$ are reachable.

We check that the states are pairwise distinguishable in four cases.
\begin{enumerate}
\item $\{(m-1)'\} \cup S$ and $\{(m-1)'\} \cup T$ with $r \in S \oplus T$ are distinguished by $a^{n-2-r}c^{n-2}$.
\item $\{p'\} \cup S$ and $\{p'\} \cup T$ with $r \in S \oplus T$ reduces to Case (a) by $a^{n-2-r}d^m$.
\item $\{p'\} \cup S$ and $\{(m-1)'\} \cup T$ with $p \in Q_{m-1}'$ are distinguished by $c^n$.
\item $\{p'\} \cup S$ and $\{q'\} \cup T$ with $p < q  \le m-2$ reduces to Case (c) by $d^{p+1}$.
\end{enumerate}

\item {\bf Boolean Operations}
It is again convenient to consider the ideal languages defined  by the complements of the prefix-closed languages of Figure~\ref{fig:ClosedWit} restricted to the alphabet $\{a,b,d\}$ and then use De Morgan's laws. Since every prefix-closed language has an empty quotient, it is sufficient to consider boolean operations on languages over the same alphabet. 
The problems are the same as those in~\cite{BDL15}, except that there the transformation induced by $d$ is $d:(n-2\to n-1)$.

Let $\mathcal{D}_n(a,b,c,d)$ denote the DFA for the complement  of the prefix-closed language of Definition~\ref{def:ClosedWit} of complexity $n$ and let $L_n$ be the language accepted by $\mathcal{D}_n$. We consider boolean operations on right ideals $L'_m$ and $L_n$.

\begin{figure}[ht]
\unitlength 8.5pt
\begin{center}\begin{picture}(35,20)(0,-2)
\gasset{Nh=2.6,Nw=2.6,Nmr=1.2,ELdist=0.3,loopdiam=1.2}
	{\scriptsize
\node(4'4)(2,15){$4',4$}
\node(0'4)(2,10){$0',4$}\rmark(0'4)
\node(1'4)(2,5){$1',4$}\rmark(1'4)
\node(2'4)(2,0){$2',4$}\rmark(2'4)

\node(4'0)(10,15){$4',0$}\rmark(4'0)
\node(0'0)(10,10){$0',0$}\imark(0'0)
\node(1'0)(10,5){$1',0$}
\node(2'0)(10,0){$2',0$}

\node(4'1)(18,15){$4',1$}\rmark(4'1)
\node(0'1)(18,10){$0',1$}
\node(1'1)(18,5){$1',1$}
\node(2'1)(18,0){$2',1$}

\node(4'2)(26,15){$4',2$}\rmark(4'2)
\node(0'2)(26,10){$0',2$}
\node(1'2)(26,5){$1',2$}
\node(2'2)(26,0){$2',2$}

\node(4'3)(34,15){$4',3$}\rmark(4'3)
\node(0'3)(34,10){$0',3$}
\node(1'3)(34,5){$1',3$}
\node(2'3)(34,0){$2',3$}
	}

\drawedge(0'0,4'4){$d$}
\drawedge(0'1,4'0){$d$}
\drawedge(0'2,4'1){$d$}
\drawedge(0'3,4'2){$d$}

\drawedge(1'0,0'4){$d$}
\drawedge(2'0,1'4){$d$}

\drawedge(4'0,4'4){$d$}
\drawedge(0'4,4'4){$d$}

\drawedge[curvedepth=1,ELdist=0.2](4'3,4'2){$d$}
\drawedge[curvedepth=1,ELdist=0.2](4'2,4'1){$d$}
\drawedge[curvedepth=1,ELdist=0.2](4'1,4'0){$d$}

\drawedge[curvedepth=1,ELdist=0.2](4'2,4'3){$b$}
\drawedge[curvedepth=1,ELdist=0.2](4'1,4'2){$b$}
\drawedge[curvedepth=1,ELdist=0.2](4'0,4'1){$b$}
\drawedge[curvedepth=-4,ELdist=0.5](4'3,4'0){$b$}

\drawedge[curvedepth=-1,ELdist=-0.7](2'4,1'4){$d$}
\drawedge[curvedepth=-1,ELdist=-0.7](1'4,0'4){$d$}

\drawedge[curvedepth=-1,ELdist=-0.7](1'4,2'4){$a$}
\drawedge[curvedepth=-1,ELdist=-0.7](0'4,1'4){$a$}
\drawedge[curvedepth=3,ELdist=0.5](2'4,0'4){$a$}

\drawedge(1'3,2'3){$a$}
\drawedge(0'3,1'3){$a$}
\drawedge[curvedepth=-3,ELdist=0.5](2'3,0'3){$a$}

\drawedge(2'2,2'3){$b$}
\drawedge(2'1,2'2){$b$}
\drawedge(2'0,2'1){$b$}
\drawedge[curvedepth=2,ELdist=0.5](2'3,2'0){$b$}
\end{picture}\end{center}
\caption{Direct product for symmetric difference of right ideals with DFAs $\mathcal{D}'_4(a,b,-,d)$ and $\mathcal{D}_5(b,a,-,d)$ shown partially.}
\label{fig:closedcross}
\end{figure}

The direct product is illustrated in Figure~\ref{fig:closedcross}.
The states in $Q'_{m-1} \times Q_{n-1}$ are reachable from the initial state $(0',0)$ by~\cite[Theorem 1]{BBMR14}.
Then $((m-1)', 0)$ is reached from $(0',1)$ by $d$ and states of the form $((m-1)', q)$, $0\le q\le n-2$, are then reached by words in $b^*$.
Similarly,  $(0',n-1)$ is reached from $(1',0)$ by $d$ and states of the form $(p',n-1)$, $0\le p\le m-2$, are then reached by words in $a^*$. Finally, $((m-1)',n-1)$ is reached from $((m-1)',0)$ by $d$. Hence all $mn$ states are reachable.

Let $S = Q'_{m-1} \times Q_{n-1}$, $R = \{(m-1)'\} \times Q_n$, and $C = Q'_m \times \{n-1\}$.
The final states of the direct product to recognize $L'_m(a,b,-,d) \circ L_n(b,a,-,d)$ are $R \circ C$.

Consider the following DFAs: $D'_{m-1}(a,b) = (Q'_{m-1},\{a,b\}, \delta, 0', \{0'\})$ and $D_{n-1}(b,a) = (Q_{n-1},$ 
$\{a,b\}, \delta, 0, \{0\})$.
By~\cite[Theorem 1]{BBMR14}, the states of $S$ are pairwise distinguishable with respect to the states
$\left(\{0'\} \times Q_{n-1}\right) \circ \left(Q'_{m-1} \times \{0\}\right)$ for any $\circ \in \{\cup, \oplus, \setminus, \cap\}$.
One can verify that if $w$ distinguished two states of $S$ with respect to $\left(\{0'\} \times Q_{n-1}\right) \circ \left(Q'_{m-1} \times \{0\}\right)$, then $wd$ distinguishes them with respect to $R \circ C$ for each $\circ = \{\cup, \oplus, \setminus, \cap\}$. The rest of the argument depends on the operation $\circ \in \{\cup, \oplus, \setminus, \cap\}$.

$\cap, \oplus$: All $mn$ states are pairwise distinguishable. The states of $R$ are distinguished by words in $d^*$. The states of $C$ are similarly distinguishable. The states of $R$ are distinguished from the states of $C$ by words in $\{a,d\}^*$. Every state of $S$ is sent to a state of $R$ by a word in $\{a,d\}^*$, and similarly to a state of $C$ by a word in $\{b,d\}^*$; thus the states of $S$ are distinguishable from the states of $R$ or $C$.

$\setminus$: The states of $C$ are all empty, leaving $m(n-1) + 1$ distinguishable states. The states of $R$ are distinguished by words in $d^*$.

$\cup$: The states of $R$ and $C$ are equivalent final states accepting all words, leaving $(m-1)(n-1) + 1$ distinguishable states.

By De Morgan's laws we have $\kappa(L'_m \cup L_n) =\kappa(\overline{L'_m} \cap \overline{L_n})$, $\kappa(L'_m \oplus L_n) =\kappa(\overline{L'_m} \oplus \overline{L_n})$, $\kappa(L'_m \setminus L_n) =\kappa(\overline{L_n} \setminus \overline{L'_m})$, and $\kappa(L'_m \cap L_n) = \kappa(\overline{L'_m} \cup \overline{L_n})$. Thus the prefix-closed witness meets the bounds for boolean operations.

\end{enumerate}
Since the semigroup of a prefix-closed language is the same as that of its complement, which is a right ideal, at least four letters are required to meet all the bounds.
\end{proof}

\section{Prefix-Free Languages}
The complexity of operations on prefix-free languages was studied in~\cite{HSW09,JiKr10,Kra11}, but most complex prefix-free languages were not considered.
As every prefix-free language has an empty quotient, the restricted and unrestricted complexities are the same for binary operations. 
\begin{definition}
\label{def:prefix-free}
For $n\ge 4$, 
let $\Sigma_n=\{a,b,c,d,e_0,\dots,e_{n-3}\}$ and define the DFA 
$\mathcal{D}_n(\Sigma_n) = (Q_n,\Sigma_n,$
$\delta_n,0,\{n-2\}),$ 
where  
$\delta_n$ is defined by the transformations
$ a \colon (n-2 \to n-1) (0,\dots,n-3) $,
$ b \colon (n-2 \to n-1) (0,1) $, 
$ c \colon (n-2 \to n-1) (1\to 0)  $,
$ d \colon (0\to n-2) (Q_n\setminus \{0\} \to n-1)$,
$e_q\colon (n-2\to n-1)(q \to n-2)$ for $q=0,\dots,n-3$.
The transformations induced by $a$ and $b$ coincide when $n=4$.
Let $L_n(\Sigma_n)$ be the language accepted by $\mathcal{D}_n(\Sigma_n)$.
The structure of $\mathcal{D}_n(\Sigma_n)$ is shown in Figure~\ref{fig:free}.
\end{definition}

\begin{figure}[ht]
\unitlength 8pt
\begin{center}\begin{picture}(40,16)(0,0)
\gasset{Nh=2.0,Nw=3.5,Nmr=1.25,ELdist=0.4,loopdiam=1.5}
\node(0)(1,10){0}\imark(0)
\node(1)(9,10){1}
\node(2)(17,10){2}
\node[Nframe=n](3dots)(25,10){$\dots$}
\node(n-4)(33,10){$n-4$}
\node(n-3)(41,10){$n-3$}
\node(n-2)(17,1){$n-2$}\rmark(n-2)
\node(n-1)(33,1){$n-1$}

\drawedge[curvedepth=1,ELdist=.3](0,1){$a,b$}
\drawedge(1,2){$a$}
\drawedge(2,3dots){$a$}
\drawedge(3dots,n-4){$a$}
\drawedge(n-4,n-3){$a$}
\drawedge[curvedepth=-7.0,ELdist=.6](n-3,0){$a$}

\drawedge[ELdist=-2.3](0,n-2){$e_0$}
\drawedge(1,n-2){$e_1$}
\drawedge(2,n-2){$e_2$}
\drawedge[ELdist=-1.6](n-4,n-2){$e_{n-4}$}
\drawedge[ELdist=.7](n-3,n-2){$e_{n-3}$}
\drawedge[ELdist=-1.5](n-2,n-1){$\Sigma_n$}

\drawedge[curvedepth=1,ELdist=-1.25](1,0){$b,c$}

\end{picture}\end{center}
\caption{DFA of a most complex prefix-free language. 
Input $d$ not shown; other missing transitions are self-loops.}
\label{fig:free}
\end{figure}
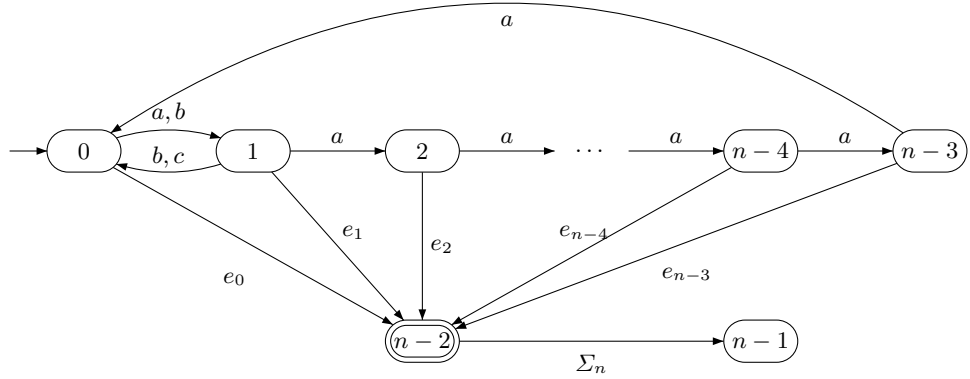

\begin{theorem}
\label{thm:Prefix-free_witness}
For $n\ge 4$, the DFA of Definition~\ref{def:prefix-free} is minimal and  $L_n(\Sigma)$ is a prefix-free language of complexity $n$.
The stream $(L_m(a,b,c,d,e_0,\dots,e_{m-3}) \mid m \ge 4)$  with dialect stream 
$(L_n(b,a,-,-,e_0,$ 
$e_{m-3}) \mid n \ge 4)$
is most complex in the class of  prefix-free languages.
At least $n+2$ inputs are required to meet all the bounds below.

\begin{enumerate}
\item
The syntactic semigroup of $L_n(a,b,c,-,e_0,\dots,e_{n-3})$   has cardinality $n^{n-2}$.  
There is only one maximal transition semigroup of minimal DFAs accepting prefix-free languages. 
Moreover, fewer than $n+1$ inputs do not suffice to meet this bound. 
\item 
The quotients of $L_n(a,-,-,d)$ have complexity $n$, except for  $\varepsilon$ and $\emptyset$, which have complexity 2 and 1, respectively.
\item 
The reverse of $L_n(a, -,c,-,e_0)$ has complexity $2^{n-2}+1$, and $L_n(a, -,c,-,e_0)$ has $2^{n-2}+1$ atoms.
\item
Each atom  $A_S$ of  $L_n(a,b,c,-,e_0)$ has maximal complexity:
\begin{equation*}
	\kappa(A_S) =
	 \begin{cases}
		2, 				& \text{if $S=\{n-2\}$;}\\
		2^{n-1}, 			& \text{if $S= \emptyset$;}\\
		2^{n-2}+1, 		& \text{if $S= Q_{n-2}$;}\\
		2 + \sum_{x=1}^{|S|}\sum_{y=1}^{n-2-|S|}\binom{n-2}{x}\binom{n-2-x}{y},
		 			& \text{if $\emptyset \subsetneq S \subsetneq Q_{n-2}$.}
		\end{cases}
\end{equation*}
\item 
The star of  $L_n(a,-,-,d)$  has complexity $n$.
\item 
The product  $L'_m(a,-,-,d) L_n(a,-,-,d)$ has complexity $m+n-2$.
\item 
For $m,n\ge 4$ but $(m,n)\neq (4,4)$, 
and for any proper binary boolean function $\circ$, the complexity of $L_m(a,b,-,-,e_0,e_{m-3}) \circ L_n(b,a,-,-,e_0,e_{m-3})$
is maximal. In particular,
 these languages meet the bounds
$mn-2$ for union and symmetric difference, $mn-2(m+n-3)$ for intersection,  and $mn-(m+2n-4)$ for difference.
\end{enumerate}
\end{theorem}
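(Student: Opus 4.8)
The plan is to verify the seven claims in turn, throughout exploiting the rigid structure of the witness: its minimal DFA has a unique accepting state $n-2$ with quotient $\{\varepsilon\}$ and a dead state $n-1$, so every transformation in the transition semigroup fixes $n-1$ and maps $n-2$ to $n-1$. This confines all the combinatorics to the $n-2$ \emph{active} states of $Q_{n-2}=\{0,\dots,n-3\}$. I would first record minimality and prefix-freeness: state $q$ is the only active state accepting $a^{\,(n-2)-q}d$, the single final state has quotient $\{\varepsilon\}$ which forces prefix-freeness, and $n-1$ is the sink.

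For the syntactic semigroup (1), I would show that $a$, $b$, $c$ restricted to $Q_{n-2}$ realize every transformation of the active states (the standard fact that an $(n-2)$-cycle, a transposition, and a merge generate $\mathcal{T}_{Q_{n-2}}$), then use each $e_q$ to inject state $q$ into the accepting state $n-2$ and any letter to push $n-2$ into the sink. Composing these realizes every transformation fixing $n-1$ and sending $n-2\mapsto n-1$; this is the maximal transition semigroup for prefix-free languages, and since each of the $n-2$ active states then maps freely to one of $n$ targets its size is $n^{n-2}$. The lower bound of $n+1$ letters is a counting argument: no smaller alphabet can simultaneously supply the permutation-generating maps and the $n-2$ distinct ``inject into $n-2$'' maps. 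Claims (2), (5), (6) are routine. For (2), redirecting the initial state of $L_n(a,-,-,d)$ keeps the DFA minimal except that $n-2$ yields $\{\varepsilon\}$ (complexity $2$) and $n-1$ yields $\emptyset$ (complexity $1$). For (5), since the language contains $\varepsilon$ I would add an $\varepsilon$-transition from $n-2$ to the start and check that exactly $n$ states survive. For (6), splicing the two prefix-free automata with an $\varepsilon$-transition from the first accept state leaves the subset automaton ``thin'' because that accept state has no onward transitions, yielding $m+n-2$ states after merging the common sink.

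For reversal (3), I would run the subset construction on the reversed NFA of $L_n(a,-,c,-,e_0)$ from the start state $\{n-2\}$: the reverse of $e_0$ produces $\{0\}$, and the reverses of $a$ and $c$ then generate all $2^{n-2}$ subsets of $Q_{n-2}$; together with the initial class $\{n-2\}$ this gives $2^{n-2}+1$ reachable, pairwise distinguishable states, and the atom count follows from the identity ``number of atoms $=$ reverse complexity'' \cite{BrTa14}. For the atoms (4), I would construct for each $S$ a DFA recognizing $A_S$ and count its reachable, inequivalent states; the four cases isolate the special roles of the accept-quotient $\{\varepsilon\}$ and the sink-quotient $\emptyset$, and I expect the general-$S$ count to follow the same double-sum template already established for maximal transformation semigroups in the cited atom computations, specialized by the constraints $n-2\mapsto n-1$ and $n-1\mapsto n-1$.

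The boolean operations (7) are where the real work lies, and I expect them to be the main obstacle. I would take the direct product of the two dialects $L_m(a,b,-,-,e_0,e_{m-3})$ and $L_n(b,a,-,-,e_0,e_{m-3})$ (the swap of $a,b$ and the shared $e_0,e_{m-3}$ being what distinguishes the two factors while keeping the combined alphabet small), invoke \cite[Theorem~1]{BBMR14} to reach the core grid $Q'_m\times Q_n$, and then track how the prefix-free structure collapses the remaining states. The delicate point is to pin down, for each operation, exactly which product states become equivalent: because each factor contributes both an accept state and a sink, these singular states interact differently under $\cup,\oplus,\setminus,\cap$, and the equivalences among the accept rows and columns and the empty states are precisely what produce the subtracted terms $mn-2$, $mn-2(m+n-3)$, and $mn-(m+2n-4)$ against the upper bounds of \cite{HSW09,JiKr10,Kra11}. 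The exclusion of $(m,n)=(4,4)$ enters here as well, since for $n=4$ the letters $a$ and $b$ induce the same transformation and the two dialects no longer carry enough distinct transformations to reach and separate the full product.
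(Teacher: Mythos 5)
Your overall plan follows the paper's route closely (generate all transformations fixing $n-1$ and sending $n-2\mapsto n-1$, reverse via the subset construction, direct product plus \cite[Theorem 1]{BBMR14} for the boolean operations, and the correct reading of why $(4,4)$ is excluded), but two substantive pieces are missing. First, you never address the claim that \emph{at least $n+2$ inputs} are required for a most complex prefix-free stream, which is part of the theorem statement. The paper proves this inside the star item: if the syntactic semigroup is maximal with only $n+1$ letters, then every letter must either permute $Q_{n-2}$ or inject some $q$ into $n-2$, so no letter can send an active state directly to the sink $n-1$; consequently the subset automaton for $L^*$ can never reach the empty set and $\kappa(L^*)\le n-1$, contradicting the star bound $n$. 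An extra letter (like $d$) is therefore forced. Your sketch of the $n+1$ lower bound for the semigroup is in the right spirit, but this separate $n+2$ argument is a distinct idea you would not stumble onto by ``counting'' alone. Relatedly, your star sketch starts from ``since the language contains $\varepsilon$'' --- it does not: the prefix-free witness has non-final initial state, so the construction needs a new final initial state $0'$ in addition to the $\varepsilon$-transition from $n-2$.

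Second, item (4) on atoms is asserted rather than proved. The paper's argument here is genuinely nontrivial: it derives the four-case upper bound by analyzing which intersections $J_{X,Y}=\bigcap_{i\in X}K_i\cap\bigcap_{i\in Y}\overline{K_i}$ can occur (using that $K_{n-1}=\emptyset$ forces $n-1\in Y$, that $X=\{n-2\}$ forces $J_{X,Y}=\{\varepsilon\}$, and that sets with $n-2$ in the complemented part collapse), and then proves tightness by exhibiting, for each admissible pair $(X,Y)$, a word $w_X\in\{a,b,c,e_0\}^*$ mapping $X\to\{n-2\}$ and its complement to $\{n-1\}$, which serves both to reach $J_{X,Y}$ and to separate distinct quotients. ``I expect the count to follow the same double-sum template'' does not substitute for this; the template for prefix-free languages differs from the regular and ideal cases precisely because of the two singular quotients, and the boundary cases $S=\{n-2\}$, $S=\emptyset$, $S=Q_{n-2}$ each need their own treatment. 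The boolean-operations item is likewise only a plan, but there your plan coincides with the paper's and the remaining work is routine case analysis; the atoms and the $n+2$ lower bound are the places where an actual idea is absent.
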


\begin{proof}
Since  only state $q$ accepts $a^{n-2-q}d$ for $0\le q\le n-3$, DFA $\mathcal{D}_n(a,-,-,d)$ is minimal.
Since it has only one final state and that state accepts $\{\varepsilon\}$, $L_n(a,-,-,d)$ is prefix-free.
\begin{enumerate}
\item {\bf Semigroup}
The proof that the size of the semigroup is $n^{n-2}$ is very similar to that in~\cite{BLY12}. Inputs $a$, $b$, and $c$ generate all transformations of $Q_{n-2}$.
Moreover, any state $q\in Q_{n-2}$ can be sent to $n-2$ by $e_q$ and to $n-1$ by $e_qe_q$. 
Hence we have all $n^{n-2}$ transformations of $Q_n$ that fix $n-1$ and send $n-2$ to $n-1$.
The maximal transition semigroup  is unique, since it must contain all these transformations.

To prove that at least $n+1$ inputs are necessary, 
we see that 
$e_q\colon (n-2\to n-1) (q\to n-2)$ is in the transition semigroup of $\mathcal{D}_n$. 
There are two types of states in $q \in Q_{n-2}$: those of Type 1, for which $e_q$ is a generator (that is, the transformation of $e_q$ is induced by a single letter), and those of Type 2, for which it is not.
If $e_q$ and $e_p$ are generators, then clearly $e_p\neq e_q$.

If $e_q$ is not a generator, then it must be a composition, $e_q=u_qv_q$, where $u_q$ is in the semigroup and $v_q$ is a generator.  No state can be mapped by $u_q$ to $n-2$ because then $v_q$ would map $n-2$ to $n-1$. Hence $u_q$ must be a permutation of $Q_{n-2}$.
If $q\neq q'$ and $e_q$ and $e_{q'}$ are not generators, then there exist $u_q,v_q$ and $u_{q'},v_{q'}$ as above, such that $e_q=u_qv_q$ and $e_{q'}=u_{q'}v_{q'}$. Then we must have $qu_q\neq q'u_{q'}$; otherwise both $q$ and $q'$ would be mapped to $n-2$. Hence $v_q\neq v_{q'}$ and all the generators of this type are distinct. 

Finally, if $e_q$ is a generator and $v_{q'}$ is as above, then $e_q\neq v_{q'}$, for otherwise $u_{q'}$ would be the identity and  $q'$ would be of Type 1. Therefore, $n-2$ generators are required in addition to those induced by $a$, $b$ and $c$.
\item {\bf Quotients}
This is clear from the definition.

\item {\bf Reversal}
This was proved in~\cite{BLY12}.

\item {\bf Atoms} 
First we establish an upper bound on the complexity of atoms of prefix-free languages. Consider the intersection
$A_S=\bigcap_{i\in S} K_i \cap \bigcap_{i\in \overline{S}} \overline{K_i}$, where $S\subseteq Q_n$, and $\overline{S}=Q_{n}\setminus S$. 
Clearly $n-1$ must be in $\overline{S}$ if $A_S$ is an atom, for $K_{n-1}=\emptyset$.
Since a prefix-free language has only one final state and that state accepts $\varepsilon$, if $n-2\in S$,  no other quotient is in $S$, for then $A_S$ would not be an atom. Hence if $S=\{n-2\}$ then $A_S=\{\varepsilon\}$, and $\kappa(A_S)=2$.

Now suppose $S=\emptyset$; then $A_S=\bigcap_{i\in \overline{S}}\overline{K_i}$. Since $K_{n-1}$ appears in every quotient of $A_S$, there are at most $2^{n-1}$ subsets of $Q_{n-1}$ that can be reached from $A_S$ together with $n-1$. Hence $\kappa(A_S)\le 2^{n-1}$.

If $S=Q_{n-2}$, then $\overline{S}=\{n-2,n-1\}$ and $\bigcap_{i \in \overline{S} }\overline{K_i}=\Sigma^+$. If we reach $\overline{K_{n-1}}=\Sigma^*$, then 
any intersection which has $\{n-2,n-1\}$ in the complemented part is equivalent to one that has only $\{n-1\}$, since no quotient other than $K_{n-2}$ contains $\varepsilon$.
Hence we can reach at most $2^{n-2}-1$ subsets of $Q_{n-2}$, along with the intersection $K_{n-2}\cap \overline{K_{n-1}}= \varepsilon$, and the empty quotient, for a total of $2^{n-2}+1$ states.

Finally, consider the case where $\emptyset \subsetneq S \subsetneq Q_{n-2}$. Then we have from 1 to $|S|$ uncomplemented quotients $K_i$ with $i\in Q_{n-2}$, and from 1 to $n-2 -|S|$ quotients $K_i$ with $i\in Q_{n-2}$ in the complemented part; this leads to the formula given in the theorem.

It remains to be proved that the atoms of $L_n(a,b,c,-,e_0)$ meet these bounds.
Atom $A_{\{n-2\}}$ is equal to $\{\varepsilon\}$ and thus has two quotients as required; assume now that $S \subseteq Q_{n-2}$.
We are interested in the number of distinct quotients of $A_S=\bigcap_{i\in S} K_i \cap \bigcap_{i\in \overline{S}} \overline{K_i}$, where $S\subseteq Q_n \setminus \{n-1\}$.
The quotients $w^{-1}A_S$ have the form $J_{X,Y}= \bigcap_{i\in X} K_i \cap \bigcap_{i\in Y} \overline{K_i}$ where $X = \{i \mid K_i = w^{-1}K_j \text{ for some $j \in S$}\}$ and $Y = \{i \mid K_i = w^{-1}K_j \text{ for some $j \in \overline{S}$}\}$. For brevity, we write $S \xrightarrow{w} X$ and $\overline{S} \xrightarrow{w} Y$; this notation is in agreement with the action of $w$ on the states of $\mathcal{D}_n$ corresponding to $S$ and $\overline{S}$.

Notice $J_{X,Y} = J_{X, Y \cup \{n-2\}}$ for all $X$ and $Y$, except for the case $X = \{n-2\}$ in which $J_{X,Y} \in \{\{\varepsilon\}, \emptyset\}$.
Thus it is sufficient to assume $n-2 \not\in Y$ from now on, as $\{J_{X,Y} \mid n-2 \not\in Y, n-1 \in Y\}$ contains every quotient of $A_S$.
We show that whenever $|X| \le |S|$, $|Y| \le |\overline{S}|$, $n-2 \not\in Y$, and $n-1 \in Y$, there is a word $w \in \{a,b,c,e_0\}^*$ such that $S \xrightarrow{w} X$ and $\overline{S} \xrightarrow{w} Y$ and hence $J_{X,Y}$ is a quotient of $A_S$.
When $S = Q_{n-2}$ we reach all quotients $J_{X,\{n-1\}}$ where $\emptyset \subsetneq X \subseteq Q_{n-2}$ by words in $\{a,b,c\}^*$, we reach $J_{\{n-2\}, \{n-1\}}$ from $J_{\{0\}, \{n-1\}}$ by $e_0$, and from there we reach the empty quotient by $e_0$.
Similarly, when $\emptyset \subseteq S \subsetneq Q_{n-2}$, we reach $J_{X, Y}$ for $X \subseteq Q_{n-2}$ and $Y \cap Q_{n-2} \not= \emptyset$ by words in $\{a,b,c\}^*$,
and the remaining quotients are easily reached using $e_0$.

It remains to show that non-empty quotients $J_{X,Y}$ and $J_{X',Y'}$ are distinct whenever $X \not=X'$ or $Y \not= Y'$.
Notice $J_{X,Y} = \emptyset$ if either $X \cap Y \not= \emptyset$ or $\{n-2\} \subsetneq X$, and $J_{X,Y} = \{\varepsilon\}$ if and only if $X= \{n-2\}$.
Apart from these special cases, every $J_{X,Y}$ is non-empty and does not contain $\varepsilon$.

For any $X \subseteq Q_{n-2}$, let $w_X$ denote a word that maps $X \to \{n-2\}$ and $Q_n\setminus X \to \{n-1\}$;
there is such a word in $\{a,b,c,e_0\}^*$ because $\{a,b,c\}^*$ contains $u \colon (n-2 \rightarrow n-1)(X \rightarrow n-3)(Q_{n-2}\setminus X \rightarrow 0)$, and then $w_X = ue_0ae_0$.
Observe that $w_X \in K_i$ for all $i \in X$ and $w_X \not\in K_j$ for all $j \not\in X$.
Hence, if $X \subseteq Q_{n-2}$ and $Y \subseteq Q_n \setminus X$, then $w_X \in J_{X,Y}$ and $w_{\overline{Y} \cap Q_{n-2}} \in J_{X,Y}$.

Let $X'$ and $Y'$ be any disjoint subsets of $Q_n$ where $n-1 \in Y'$ and $J_{X',Y'} \not= \emptyset$.
If $X' \not= X$ then either $w_X \not\in J_{X',Y'}$ or  $w_{X'} \not\in J_{X,Y}$.
Similarly, if $Y' \not= Y$ (and $Y \oplus Y' \not= \{n-2\}$) then either $w_{\overline{Y}\cap Q_{n-2}} \not\in J_{X',Y'}$ or $w_{\overline{Y'}\cap Q_{n-2}} \not\in J_{X,Y}$.
Thus, any two quotients $J_{X,Y}$ and $J_{X',Y'}$, where $(X,Y) \not= (X',Y')$, are distinct.

When we established the upper bound on $\kappa(A_S)$, we counted the number of reachable, potentially distinct quotients $J_{X,Y}$ of each $A_S$. We have now shown that every reachable $J_{X,Y}$ is a quotient of $A_S$ and determined all the cases when $J_{X,Y} = J_{X',Y'}$. It follows that every bound is met by $L_n(a,b,c,-,e_0)$.

\item {\bf Star}
Proved in~\cite{HSW09}. For the purpose of proving that $n+2$ inputs are required for a most complex prefix-free witness, an outline of the proof is repeated here.

Suppose that $L$ is a prefix-free language with $n$ quotients whose syntactic semigroup is maximal, and $L^*$ has maximal complexity.
We show that $L$ requires an alphabet of size $n+2$.
Towards a contradiction, let $\mathcal{D}=(Q_n, \Sigma, \delta, 0, n-2)$ be a DFA for $L$ where $|\Sigma| = n+1$. Assume $0, 1, \dots, n-3$ are non-final, non-empty states, $n-2$ is the unique final state, and $n-1$ is the empty state. By~\cite{HSW09}, $\mathcal{D}$ must have this structure and $\delta(n-2, w) = n-1$ for any $w \in \Sigma^+$.

Since the syntactic semigroup of $L$ is maximal, each letter of $\Sigma$ has a specific role in $\mathcal{D}$ as described in {\bf 1} of this theorem.
Three letters $a'$, $b'$, and $c'$ are required to induce the transformations on $Q_{n-2}$; these letters cannot map any state of $Q_{n-2}$ to $n-2$ or to $n-1$.
An additional $n-2$ letters $v_0, v_1, \dots, v_{n-3}$ are required to generate $e_q\colon (n-2\to n-1) (q\to n-2)$ for each $q \in Q_{n-2}$, where the action of $e_q$ is induced by a word in $\{a',b',c'\}^*v_q$. Notice $v_q$ cannot map any state of $Q_{n-2}$ to $n-1$, since $e_q$ does not.
In summary, $\Sigma = \{a', b', c', v_0, \dots, v_{n-3}\}$ and for all $\ell \in \Sigma$ and $q \in Q_{n-2}$, $\delta(q, \ell) \not= n-1$.

An NFA for $L^*$ is produced by adding to $\mathcal{D}$ a new initial state $0'$, which is final, adding an $\varepsilon$-transition from $n-2$ to $0$, and deleting the empty state $n-1$.
The transitions from $0'$ are exactly the same as the transitions from $0$. Perform the subset construction on this NFA.
The $n-1$ states $\{0'\}, \{0\}, \{1\}, \dots, \{n-3\}$ are all reachable and distinguishable by words in $\{a',b',c', v_0\}$.
The only way to reach a set containing more than one state is by moving to $n-2$ and using the $\varepsilon$-transition.
This leads to the state $\{0, n-2\}$, but applying any word $w \in \Sigma^+$ deletes $n-2$; thus, $\{0, n-2\}$ is the only reachable set with two or more states.
However, $\{0'\}$ and $\{0, n-2\}$ are indistinguishable, since both are final and $\delta(\{0'\}, w) = \delta(\{0\}, w) = \delta(\{0, n-2\}, w)$ for $w \in \Sigma^+$.

So far, there are only $n-1$ reachable, distinguishable states in the subset construction. The remaining state is $\emptyset$, which can only be reached if there is a letter $\ell$ that moves from $q \in Q_{n-2}$ to $n-1$ in $\mathcal{D}$; a transition from $n-2$ to $n-1$ is not sufficient to reach the empty state. We showed that in our witness no $\ell \in \Sigma$ has $\delta(q, \ell) = n-1$. Therefore, $\kappa(L^*) \le n-1$, a contradiction. To achieve $\kappa(L^*) = n$, an additional letter is required. Therefore, any most complex prefix-free language stream requires $n+2$ inputs.

\item {\bf Product}
Proved in~\cite{HSW09}.

\item {\bf Boolean Operations}
Let $S = Q'_{m-2}\times Q_{n-2}$.  For $0 \le p \le m-1$, let $R_p = \{(p',q) \mid q \in Q_n\}$, and for $0 \le q \le n-1$ let $C_q = \{(p',q) \mid p' \in Q'_m\}$. These are the sets of states in the rows and columns of Figure~\ref{fig:freecross}.
The states in $S$ are reachable from the initial state $(0',0)$ by~\cite[Theorem 1]{BBMR14}.
Every other state in the direct product is reachable from some state in $S$, as illustrated in Figure~\ref{fig:freecross}.

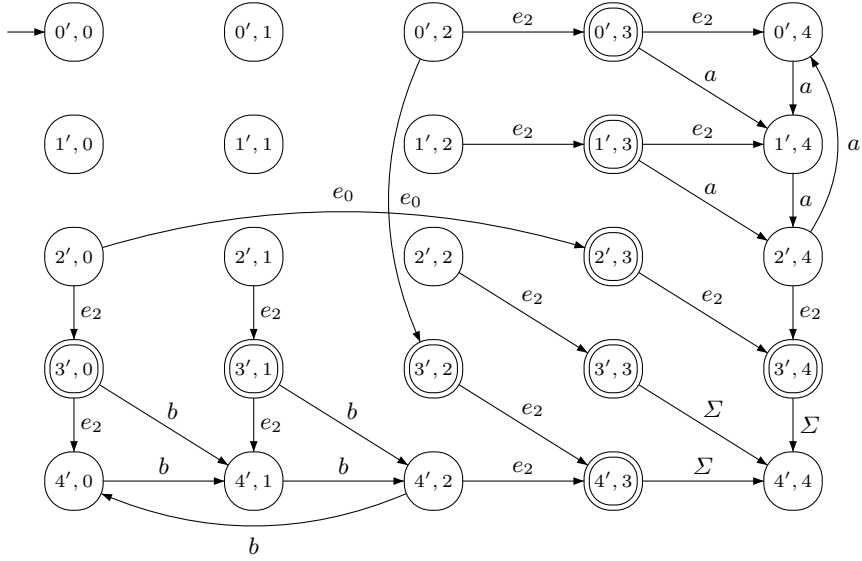
\begin{figure}[ht]
\unitlength 8.5pt
\begin{center}\begin{picture}(35,25)(0,-9)
\gasset{Nh=2.6,Nw=2.6,Nmr=1.2,ELdist=0.3,loopdiam=1.2}
	{\scriptsize
\node(0'0)(2,15){$0',0$}\imark(0'0)
\node(1'0)(2,10){$1',0$}
\node(2'0)(2,5){$2',0$}
\node(3'0)(2,0){$3',0$}\rmark(3'0)
\node(4'0)(2,-5){$4',0$}

\node(0'1)(10,15){$0',1$}
\node(1'1)(10,10){$1',1$}
\node(2'1)(10,5){$2',1$}
\node(3'1)(10,0){$3',1$}\rmark(3'1)
\node(4'1)(10,-5){$4',1$}

\node(0'2)(18,15){$0',2$}
\node(1'2)(18,10){$1',2$}
\node(2'2)(18,5){$2',2$}
\node(3'2)(18,0){$3',2$}\rmark(3'2)
\node(4'2)(18,-5){$4',2$}

\node(0'3)(26,15){$0',3$}\rmark(0'3)
\node(1'3)(26,10){$1',3$}\rmark(1'3)
\node(2'3)(26,5){$2',3$}\rmark(2'3)
\node(3'3)(26,0){$3',3$}\rmark(3'3)
\node(4'3)(26,-5){$4',3$}\rmark(4'3)

\node(0'4)(34,15){$0',4$}
\node(1'4)(34,10){$1',4$}
\node(2'4)(34,5){$2',4$}
\node(3'4)(34,0){$3',4$}\rmark(3'4)
\node(4'4)(34,-5){$4',4$}
	}

\drawedge[curvedepth=-2,ELdist=.5](0'2,3'2){$e_0$}
\drawedge[curvedepth=2,ELdist=0.5](2'0,2'3){$e_0$}

\drawedge(0'2,0'3){$e_2$}
\drawedge(1'2,1'3){$e_2$}
\drawedge(2'2,3'3){$e_2$}
\drawedge(2'0,3'0){$e_2$}
\drawedge(2'1,3'1){$e_2$}

\drawedge(3'0,4'0){$e_2$}
\drawedge(3'1,4'1){$e_2$}
\drawedge(0'3,0'4){$e_2$}
\drawedge(1'3,1'4){$e_2$}

\drawedge(3'0,4'1){$b$}
\drawedge(3'1,4'2){$b$}
\drawedge(3'2,4'3){$e_2$}
\drawedge(4'0,4'1){$b$}
\drawedge(4'1,4'2){$b$}
\drawedge(4'2,4'3){$e_2$}
\drawedge[curvedepth=2,ELdist=0.5](4'2,4'0){$b$}

\drawedge(0'3,1'4){$a$}
\drawedge(1'3,2'4){$a$}
\drawedge(2'3,3'4){$e_2$}
\drawedge(0'4,1'4){$a$}
\drawedge(1'4,2'4){$a$}
\drawedge(2'4,3'4){$e_2$}
\drawedge[curvedepth=-2,ELdist=-1](2'4,0'4){$a$}

\drawedge(3'3,4'4){$\Sigma$}

\drawedge(4'3,4'4){$\Sigma$}

\drawedge(3'4,4'4){$\Sigma$}
\end{picture}\end{center}
\caption{Direct product for union of prefix-free languages with DFAs $\mathcal{D}'_5(a,b,-,-, e_0, e_2)$
and $\mathcal{D}_5(b,a,-, -,e_0, e_2)$ shown partially.}
\label{fig:freecross}
\end{figure}

For $\circ \in \{\cup, \oplus, \setminus, \cap\}$, the direct product recognizes $L'_m \circ L_n$ if the final states are set to be $R_{m-2} \circ C_{n-2}$.
Now we must determine which states are distinguishable with respect to $R_{m-2} \circ C_{n-2}$ for each value of $\circ$.
Consider the DFAs $D'_m = (Q'_{m-2}, \{a,b\}, \delta, 0', \{(m-3)'\})$ and $D_n = (Q_{n-2}, \{b, a\}, \delta, 0, \{n-3\})$.
By~\cite[Theorem 1]{BBMR14}, the states of $S$ are pairwise distinguishable with respect to $(R_{m-3} \circ C_{n-3}) \cap S$.
For any pair of states in $S$, let $w$ be a word that distinguishes them in $(R_{m-3} \circ C_{n-3}) \cap S$; one verifies that further applying $e_{m-3}$ distinguishes them with respect to $R_{m-2} \circ C_{n-2}$. The rest of the distinguishability argument depends on $\circ \in \{\cup, \oplus, \setminus, \cap\}$.

$\cup$: States $((m-1)',n-2)$, $((m-2)',n-1)$, and $((m-2)',n-2)$ are equivalent, since all three are final and any letter sends them to $((m-1)',n-1)$.

States of $R_{m-1}$ are distinguished by words in $b^*e_{m-3}$. States of $C_{n-1}$ are distinguished by words in $a^*e_{m-3}$.
Excluding $((m-1)',n-2)$ and $((m-2)',n-1)$, which are equivalent, states of $R_{m-1}$ are distinguished from states of $C_{n-1}$ by words in $a^*e_{m-3}$.

States of $R_{m-2} \cup C_{n-2}$ are moved to states of $R_{m-1} \cup C_{n-1}$ by applying $e_{m-3}$. Excluding $((m-1)',n-2)$, $((m-2)',n-1)$, and $((m-2)',n-2)$, which are equivalent, every state is mapped by $e_{m-3}$ to a different state of $R_{m-1} \cup C_{n-1}$; hence they are distinguishable.

Finally, we must show that states of $S$ are distinguishable from the states of $R_{m-1} \cup C_{n-1}$.
For any $(p',q) \in S$, there exists $w \in \{a,b\}^*$ such that $(p',q) \xrightarrow{w} (0', n-3)$,
 since both $(p', q)$ and $(0', n-3)$ are reached from $(0',0)$ by words in $\{a,b\}^*$, and $a$ and $b$ permute $S$.
Then $(0', n-3)  \xrightarrow{e_{m-3}} (0', n-2)$
and we have already shown that $(0', n-2)$ is distinguishable from all states in $R_{m-1} \cup C_{n-1}$. Thus, the $mn-2$ remaining states are pairwise distinguishable.

$\oplus$: States $((m-1)',n-2)$ and $((m-2)',n-1)$ are equivalent, and states $((m-2)',n-2)$ and $((m-1)',n-1)$ are equivalent.
The rest of the states are distinguishable by an argument similar to that of union.

$\cap$: State $((m-2)', n-2)$ is the only final state. The remaining non-final states of $R_{m-2} \cup R_{m-1} \cup C_{n-2} \cup C_{n-1}$ are all empty.
Clearly the states of $S$ are non-empty, since $((m-3)', n-3) \xrightarrow{e_{m-3}} ((m-2)', n-2)$. Thus, the remaining $mn - 2(m+n-3)$ states are pairwise distinguishable.

$\setminus$:
The states of $R_{m-1}$ and $((m-2)',n-2)$ are all equivalent.
States $((m-1)',q)$ and $((m-2)',q)$ are equivalent for $0 \le q \le m-3$.
The final states ($R_{m-2} \setminus \{((m-2)', n-2)\}$) are all equivalent.

The states of $C_{n-1}$ are distinguished by words in $a^*e_{m-3}$.
It remains to show that states of $S$ are distinguishable from the states of $C_{n-1}$.
Notice $((m-3)', n-3)$ is distinguished from $((m-3)', n-1)$ by $e_{m-3}$, and from every other state of $C_{n-1}$ by $be_{m-3}$.
For any state of $S$, there exists $w \in \{a,b\}^*$ that sends that state to $((m-3)', n-3)$, and notice $C_{n-1}w \subseteq C_{n-1}$.
So all $mn - (m+2n-4)$ remaining states are pairwise distinguishable.
\end{enumerate}
Note that the  stream  $L_m(a,b,c,d,e_0,e_{m-3})$ with dialect stream $L_n(b,a,c,d,e_0,e_{m-3})$ meets  the bounds for quotients, reversal, atomic complexity, star, product and boolean operations. 
\end{proof}

Using some results from~\cite{JiKr10,Kra11} we define another prefix-free witness stream that meets all the bounds except those for syntactic complexity and atom complexity.
\emph{Moreover, all the bounds are met by dialects over minimal alphabets.}
\begin{definition}
\label{def:prefix-free_small}
For $n\ge 4$,
let
$\mathcal{D}_n(a,c,d,e,f,g) =(Q_n,\Sigma,\delta_n,0,\{n-2\}),$
where
$\Sigma=\{a,c,d,e,f,g\}$,
and $\delta_n$ is defined by the transformations
\begin{itemize}
\item
$ a \colon (n-2 \to n-1) (0,\dots,n-3) $,
\item
$ c \colon (n-2 \to n-1) (1\to 0)  $.
\item
$ d \colon (0\to n-2) (Q_n\setminus \{0\} \to n-1)$,
\item
$e\colon (n-2\to n-1)(n-3 \to n-2)$,
\item
$f\colon (n-2\to n-1)(_0^{n-2}\; q\to q+1)$,
\item
$g\colon (n-2\to n-1)$.
\end{itemize}
Note that $b$ is not used, $a$,  $c$, $d$, and $e$ induce the same transformations as $a$, $c$, $d$, and $e_{n-3}$ in Definition~\ref{def:prefix-free}.
DFA $\mathcal{D}_n(\Sigma)$ is shown in Figure~\ref{fig:small}.
Let $L_n(\Sigma)$ be the language accepted by $\mathcal{D}_n(\Sigma)$.
\end{definition}

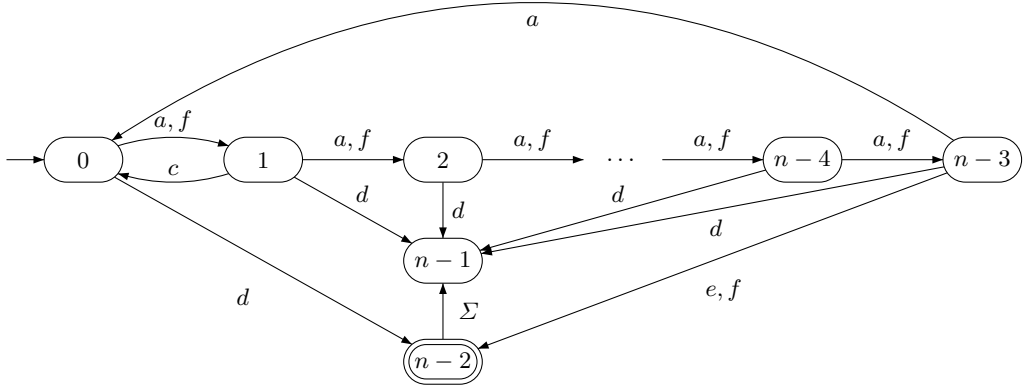
\begin{figure}[ht]
\unitlength 8.5pt
\begin{center}\begin{picture}(40,16)(0,0)
\gasset{Nh=2.0,Nw=3.5,Nmr=1.25,ELdist=0.4,loopdiam=1.5}
\node(0)(1,10){0}\imark(0)
\node(1)(9,10){1}
\node(2)(17,10){2}
\node[Nframe=n](3dots)(25,10){$\dots$}
\node(n-4)(33,10){$n-4$}
\node(n-3)(41,10){$n-3$}
\node(n-2)(17,1){$n-2$}\rmark(n-2)
\node(n-1)(17,5.5){$n-1$}

\drawedge[curvedepth=1,ELdist=.27](0,1){$a,f$}
\drawedge(1,2){$a,f$}
\drawedge(2,3dots){$a,f$}
\drawedge(3dots,n-4){$a,f$}
\drawedge(n-4,n-3){$a,f$}
\drawedge[curvedepth=-7.0,ELdist=.6](n-3,0){$a$}

\drawedge[ELdist=-2.3](0,n-2){$d$}
\drawedge(1,n-1){$d$}
\drawedge(2,n-1){$d$}
\drawedge[ELdist=-1.2](n-4,n-1){$d$}
\drawedge(n-3,n-1){$d$}
\drawedge[ELdist=.7](n-3,n-2){$e,f$}
\drawedge[ELdist=-1.6](n-2,n-1){$\Sigma$}

\drawedge[curvedepth=1,ELdist=-.8](1,0){$c$}
\end{picture}\end{center}
\caption{DFA $\mathcal{D}_n(\Sigma)$  of Definition~\ref{def:prefix-free_small}; missing transitions are self-loops.}
\label{fig:small}
\end{figure}

\begin{proposition}
\label{prop:small}
For $n\ge 4$, the DFA of Definition~\ref{def:prefix-free_small} is minimal and  $L_n(\Sigma)$ is a prefix-free language of complexity $n$. Moreover, all the witnesses for individual operations have minimal alphabets.
\begin{enumerate}
\item
The quotients of $L_n(a,-,-,-,f)$ have complexity $n$, except for the quotient $\varepsilon$ and the empty quotient, which have complexity 2 and 1 respectively.
\item
The reverse of $L_n(a,c,-,e)$ has complexity $2^{n-2}+1$, and $L_n(a,-,c,d)$ has $2^{n-2}+1$ atoms.
\item
The star of  $L_n(a,-,-,d)$  has complexity $n$.
\item
For  $m,n\ge 4$, the product of $L_m(-,-,-,-,f)$ and $ L_n(-,-,-,-,f)$ has complexity $m+n-2$.
\item
\begin{enumerate}
        \item
         $L_m(-,-,-,-,f,g) \cup  L_n(-,-,-,-,g,f)$  and $L_m(-,-,-,-,f,g) \oplus  L_n(-,-,-,-,g,f)$ have complexity  $mn-2$.
        \item
        $L_m(a,-,-,e,-,-) \setminus L_n(-,-,-,-,e,a)$ has complexity $mn-(m+2n-4)$.
        \item
        $L_m(a,-,-,e,-,-) \cap L_n(-,-,-,-,e,a)$ has complexity $mn-2(m+n-3)$.
        \end{enumerate}

\end{enumerate}
\end{proposition}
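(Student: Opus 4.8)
The plan is to reuse the template of the proof of Theorem~\ref{thm:Prefix-free_witness}: the upper bounds for every operation on prefix-free languages are already available from~\cite{HSW09,JiKr10,Kra11}, so for each part it suffices to exhibit the indicated dialect of Definition~\ref{def:prefix-free_small} and verify that the associated subset automaton or direct product attains the bound. First I would dispose of minimality, prefix-freeness, and Part~1 together. State $n-2$ is the only final state and every letter sends it to the empty state $n-1$, so it accepts exactly $\{\varepsilon\}$ and $L_n$ is prefix-free. Restricting to the two letters $a$ and $f$, each internal state $q\in Q_{n-2}$ has the single shortest accepted word $f^{n-2-q}$, so the internal states are pairwise distinguishable; together with the facts that $n-2$ accepts $\varepsilon$ and $n-1$ is empty, this makes the DFA minimal. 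Moving the initial state to any $q\le n-3$ leaves a minimal DFA, giving complexity $n$ for those quotients, while $\varepsilon$ and $\emptyset$ have complexities $2$ and $1$; this settles Part~1.

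For Parts~2 and~3 I would cite the bounds $2^{n-2}+1$ for reversal and $n$ for star from~\cite{JiKr10,Kra11,HSW09}. For reversal I would apply the subset construction to the reverse NFA of $\mathcal{D}_n(a,c,-,e)$ and argue, exactly as in the reversal step of Theorem~\ref{thm:Prefix-free_witness}, that the $2^{n-2}+1$ relevant subsets are reachable by words shifting a marker through the internal states and distinguishable because only state $q$ accepts the corresponding power of $a$. Since the number of atoms equals the complexity of the reverse~\cite{BrTa14}, the count $2^{n-2}+1$ for the atoms of $L_n(a,-,c,d)$ follows once its reverse is shown to meet the same bound. For star I would delete the empty state, add an $\varepsilon$-transition from $n-2$ to $0$, and perform the subset construction; because the shift letter never sends an internal state into $n-1$, the only way to reach $\emptyset$ is through $d$, and a direct count leaves exactly $n$ reachable, distinguishable states.

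Part~4 is immediate: over the single letter $f$ the two factors accept the singletons $\{f^{m-2}\}$ and $\{f^{n-2}\}$, so their product is $\{f^{m+n-4}\}$, whose minimal DFA is a chain of $m+n-2$ states, matching the bound of~\cite{HSW09}. For Part~5 I would form the direct product of the two dialect DFAs, adjoining an empty state to each factor where the alphabets differ. In every case the core block $Q'_{m-2}\times Q_{n-2}$ is reachable: the swapped active letters ($f$ and $g$ for union and symmetric difference, $a$ and $e$ for difference and intersection) shift the two coordinates independently, so $(i',j)$ is reached from $(0',0)$ by an appropriate word, and~\cite[Theorem~1]{BBMR14} supplies pairwise distinguishability of the core with respect to the corner targets. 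I would then reach the remaining rows and columns using the extra letters, lift a core-distinguishing word by appending a suitable letter sending the block into the final row, and finally handle each operation separately, identifying the collapses of equivalent final or empty states that reduce the count to $mn-2$ for union and symmetric difference, $mn-2(m+n-3)$ for intersection, and $mn-(m+2n-4)$ for difference.

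The main obstacle will be Part~5, since the four Boolean operations impose different sets of final states in the direct product and therefore different patterns of equivalent states among the border rows $R_p$ and columns $C_q$; these collapses must be tracked carefully with the two-letter dialects so that the surviving states are counted exactly and shown pairwise distinguishable. Everything else reduces to bookkeeping already carried out for Theorem~\ref{thm:Prefix-free_witness}, now with the transformations of Definition~\ref{def:prefix-free_small} substituted in, which is also what makes the minimal-alphabet claim transparent.
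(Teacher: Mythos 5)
Your overall route differs from the paper's: the paper disposes of this proposition almost entirely by citation --- parts 2 and 3 are inherited from Theorem~\ref{thm:Prefix-free_witness} because $a,c,d,e$ of Definition~\ref{def:prefix-free_small} induce the same transformations as $a,c,d,e_{n-3}$ there, part 4 is taken from~\cite{JiKr10}, and part 5 from~\cite{JiKr10} and~\cite{Kra11} --- whereas you re-derive everything from the automata. That is a legitimate and more self-contained alternative, and your part 4 observation (the two unary dialects are the singletons $\{f^{m-2}\}$ and $\{f^{n-2}\}$, so the product is a single word of length $m+n-4$) is genuinely immediate. Parts 1--3 of your sketch are fine.

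However, part 5 as sketched has a concrete gap. The appeal to \cite[Theorem 1]{BBMR14} for distinguishability of the core $Q'_{m-2}\times Q_{n-2}$ does not apply here: that theorem requires the shared letters to induce generators of the symmetric groups on the two state sets (a cycle and a transposition, as in the $a,b$ witnesses of Theorem~\ref{thm:Prefix-free_witness}), but the active letters of Definition~\ref{def:prefix-free_small} are not of this kind --- $f$ and $g$ are non-permutations (a shift into the dead state and a reset $(n-2\to n-1)$), and $a,e$ generate only a cyclic action plus a shift. Relatedly, your blanket claim that the whole core is reachable is false for these dialects: for the union/symmetric-difference pair, a word reaching $((m-2)',(n-2))$ would have to end with a letter that either bumps the first coordinate past $(m-2)'$ or the second past $n-2$, so that state is unreachable (and analogous corner states fail for the $a,e$ pair). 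The stated bounds are still correct --- the missing reachable state happens to be exactly compensated by one fewer merge among the equivalent border states --- but this means the counting cannot be inherited from the Theorem~\ref{thm:Prefix-free_witness} template; you must redo reachability and the equivalence classes exactly for each operation, and supply a direct distinguishability argument for the core (e.g., via the shortest accepted words in $f^*$ and $g^*$) in place of the \cite{BBMR14} citation. As written, the proposal would not compile into a correct proof of part 5 without this additional work.
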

\begin{proof}
The first claim is obvious.
The second and third claims were proved in Theorem~\ref{thm:Prefix-free_witness}.
(A ternary witness was also used in~\cite{JiKr10} for the reverse, but it had more complicated transitions than our witness.)
The fourth claim is from~\cite{JiKr10}.
The results for union, symmetric difference and intersection were proved in~\cite{JiKr10}, and that for difference in~\cite{Kra11}.
\end{proof}

\section{Prefix-Convex Languages}\label{convexsection}

\begin{lemma}\label{idealorempty}
Let $L$ be a prefix-convex language over $\Sigma$. Either $L$ is a right ideal or $L$ has an empty quotient.
\end{lemma}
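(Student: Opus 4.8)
The plan is to prove the contrapositive: assuming $L$ is \emph{not} a right ideal, I would exhibit a word whose quotient is empty. Since the degenerate case $L = \emptyset$ has itself as an empty quotient (and is not a right ideal), I would dispose of it immediately and assume $L \neq \emptyset$ for the remainder.

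The first substantive step is to extract a useful witness from the failure of the right-ideal condition. Because $L\Sigma^*$ always contains $L$ (take the empty suffix), the equation $L = L\Sigma^*$ can fail only through the strict inclusion $L \subsetneq L\Sigma^*$. Thus there is a word in $L\Sigma^*$ that lies outside $L$, which I would write as $uv$ with $u \in L$, $v \in \Sigma^*$, and $uv \notin L$; note that $v \neq \varepsilon$, since otherwise $uv = u \in L$. So I obtain $u \in L$ and a nonempty $v$ with $uv \notin L$.

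The crux is then to show that $(uv)^{-1}L = \emptyset$, giving the desired empty quotient. I would argue by contradiction: if this quotient were nonempty, there would exist $z \in \Sigma^*$ with $uvz \in L$. Now I apply prefix-convexity with the splitting $x = u$, $y = v$, and $z = z$: since $u \in L$ and $uvz \in L$, convexity forces $uv = xy \in L$, contradicting $uv \notin L$. Hence no such $z$ exists, $(uv)^{-1}L = \emptyset$, and $L$ has an empty quotient.

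I do not expect a serious obstacle here; the lemma reduces to a single clean application of the convexity hypothesis. The only points requiring care are (i) confirming $v \neq \varepsilon$ so that $uv$ is a genuine proper extension of a word in $L$, and (ii) choosing the correct word, namely $uv$ rather than $u$, to take the quotient by, together with the correct factorization of $uvz$ so that prefix-convexity yields precisely the forbidden word $uv$. Once these choices are made the contradiction is immediate.
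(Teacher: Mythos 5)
Your proof is correct and follows essentially the same route as the paper: handle $L=\emptyset$ separately, extract $u\in L$ and $v$ with $uv\notin L$ from the failure of $L=L\Sigma^*$ (the paper phrases this as some quotient $w^{-1}L\neq\Sigma^*$, which is equivalent), and then show $(uv)^{-1}L=\emptyset$ by one application of prefix-convexity. The observation that $v\neq\varepsilon$ is harmless but not actually needed for the argument.
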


\begin{proof}
Suppose that $L$ is not a right ideal. If $L = \emptyset$, then $\varepsilon^{-1}L=L$ is an empty quotient of $L$. 
If $L \not= \emptyset$, we cannot have  $w^{-1}L = \Sigma^*$ for all $w \in L$, because then $L$ would be a right ideal. 
Hence there exists some $w \in L$ such that $w^{-1}L \not= \Sigma^*$. 
Pick any $x \in \Sigma^* \setminus w^{-1}L$; then $w \in L$, but $wx \not\in L$. There cannot be a word $y \in \Sigma^*$ such that $wxy \in L$, because then $wx$ would be in $L$ by prefix convexity. Therefore, $(wx)^{-1}L$ is an empty quotient.
\end{proof}

\begin{proposition}\label{prefixchar}
Let $L$ be a regular language having $n$ quotients and $k$ final quotients, and let $\mathcal D = (Q_n, \Sigma, \delta, 0, F)$ be a minimal DFA recognizing $L$. The following statements are equivalent:
\begin{enumerate}
\item $L$ is prefix-convex.
\item For all $p,q,r \in Q_n$, if $p$ and $r$ are final, $q$ is reachable from $p$, and $r$ is reachable from $q$,  then $q$ is final.
\item Every state reachable in $\mathcal D$ from any final state is either final or empty.
\end{enumerate}
\end{proposition}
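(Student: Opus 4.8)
The plan is to establish the cycle of implications $(1) \Rightarrow (2) \Rightarrow (3) \Rightarrow (1)$. The essential tool is the standard dictionary between words and state-paths in a DFA: reading a word $w$ from a state $s$ lands in $\delta(s,w)$; a state $t$ is reachable from $s$ exactly when $t=\delta(s,w)$ for some $w$; and $w\in L$ iff $w$ drives the initial state $0$ into $F$. I would also use the characterization of empty states: $q$ is empty precisely when $L^q=\emptyset$, i.e.\ no word carries $q$ into $F$, so from an empty state one can never reach a final state. With this dictionary in hand each implication reduces to a short transcription.

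For $(1)\Rightarrow(2)$, suppose $L$ is prefix-convex and take final states $p,r$ with $q$ reachable from $p$ via a word $y$ and $r$ reachable from $q$ via a word $z$, so $\delta(p,y)=q$ and $\delta(q,z)=r$. Since $\mathcal D$ is minimal, every state is reachable from $0$; I would pick $x$ with $\delta(0,x)=p$. Then $x\in L$ (as $p\in F$) and $xyz\in L$ (as $\delta(0,xyz)=r\in F$), so prefix-convexity yields $xy\in L$, i.e.\ $q=\delta(0,xy)\in F$. This is the only implication that invokes minimality, and it does so solely to realize the final state $p$ as $\delta(0,x)$ for an actual word $x$. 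For $(2)\Rightarrow(3)$, let $q$ be reachable from a final state $p$ and assume $q$ is not empty; then some $z$ gives $\delta(q,z)=r\in F$, and applying $(2)$ to $p,q,r$ forces $q\in F$, so $q$ is final or empty. For $(3)\Rightarrow(1)$, take $x,xyz\in L$, set $p=\delta(0,x)\in F$ and $q=\delta(0,xy)$; then $q$ is reachable from $p$, so by $(3)$ it is final or empty, but $\delta(q,z)=\delta(0,xyz)\in F$ rules out empty, whence $q\in F$ and $xy\in L$.

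The individual implications are short, so the main obstacle is conceptual bookkeeping rather than any hard computation. I must keep straight the two distinct uses of reachability — from the initial state $0$ versus between arbitrary states — since that distinction is exactly what separates $(1)$ from $(2)$, and I must handle the empty-state alternative in $(3)$ with care, as that alternative is precisely what prevents $(2)$ and $(3)$ from being trivially identical. It is also worth noting, as a sanity check on where hypotheses are needed, that the equivalence $(2)\Leftrightarrow(3)$ holds for an arbitrary DFA, whereas minimality enters only through $(1)\Rightarrow(2)$.
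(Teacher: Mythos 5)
Your proof is correct and follows essentially the same route as the paper: the same cycle $(1)\Rightarrow(2)\Rightarrow(3)\Rightarrow(1)$, with each implication carried out by the same transcription between words and state-paths (including the use of reachability from $0$ in $(1)\Rightarrow(2)$ and the empty-state dichotomy in the other two steps). No gaps.
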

\begin{proof} 
({\bf 1} $\implies$ {\bf 2}) Assume {\bf 1} is true. Suppose there exist $p, r \in F$ and $q \in Q_n$ such that $q$ is reachable from $p$ and $r$ is reachable from $q$. Let $w,x,y \in \Sigma^*$ be such that $0 \xrightarrow{w} p$, $p \xrightarrow{x} q$, and $q \xrightarrow{y} r$. It follows that $w$ and $wxy$ are both in $L$, and thus $wx$ is  in $L$ by prefix convexity. Since $\delta(0, wx) = q$,  state $q$ is final.\\
({\bf 2} $\implies$ {\bf 3}) Assume {\bf 2} is true. Take any $p \in F$, $q \in Q_n$, and $x \in \Sigma^*$ such that $\delta(p,x) = q$. If a final state $r$ is reachable from $q$, then $q$ is final by {\bf 2}. Otherwise, $q$ is the empty state.\\
({\bf 3} $\implies$ {\bf 1}) Assume {\bf 3} is true. Let $w, x$, and $y$ be words in $\Sigma^*$ such that $w \in L$ and $wxy \in L$. There are states $p,q$, and $r$ in $Q_n$ such that $\delta(0, w) = p \in F$, $\delta(0, wx) = q$, and $\delta(0, wxy) = r \in F$. 
State $q$ cannot be empty, since  final state $r$ is reachable from $q$. Since $q$ is reachable from final state $p$, it follows from {\bf 3} that $q$ is final. Thus, $wx \in L$. Therefore $L$ is prefix-convex.
\end{proof}

\begin{proposition}\label{convexclasses}
Let $L$ be a non-empty prefix-convex language having $n$ quotients and $k$ final quotients, and let $\mathcal D = (Q_n, \Sigma, \delta, 0, F)$ be a minimal DFA recognizing $L$.
\begin{enumerate}
\item  $L$ is prefix-closed if and only if $0 \in F$.
\item $L$ is prefix-free if and only if $\mathcal D$ has a unique final state $p$, an empty state $p'$, and $\delta(p, a) = p'$ for all $a \in \Sigma$.
\item $L$ is a right ideal if and only if $\mathcal D$ has a unique final state $p$ and $\delta(p, a) = p$ for all $a \in \Sigma$.
\end{enumerate}
\end{proposition}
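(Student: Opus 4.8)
The plan is to prove each of the three biconditionals separately, treating the forward and reverse implications in turn. The unifying idea is that Proposition~\ref{prefixchar} already pins down the coarse shape of a minimal DFA for a prefix-convex language: every state reachable from a final state is final or empty. I would exploit this together with minimality, which forces any two states having the same language to coincide. In each case the forward direction extracts structural information about $F$ and about the transitions on final states, while the reverse direction checks directly that the stated DFA shape forces the defining closure property of the language class.

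For the prefix-closed case I would argue as follows. If $0 \in F$, then since every state of a minimal DFA is reachable, and every reachable state is reachable from $0$, Proposition~\ref{prefixchar} (condition 3) shows that every state is final or empty; reading a prefix $x$ of an accepted word $w$ lands in a non-empty, hence final, state, so $x \in L$ and $L$ is prefix-closed. Conversely, if $L$ is prefix-closed and non-empty, then $\varepsilon$ is a prefix of any word of $L$ and so $\varepsilon \in L$, i.e. $0 = \delta(0,\varepsilon) \in F$.

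The prefix-free and right-ideal cases I would handle dually. The key observation for prefix-freeness is that, from a final state $p$ reached by some $w \in L$, any non-empty word $x$ must lead to the empty state: otherwise some $wxy \in L$ would have the proper prefix $w \in L$, contradicting prefix-freeness. Hence $\delta(p,a)$ is empty for every $a \in \Sigma$, so every final state has language $\{\varepsilon\}$; by minimality there is a unique final state $p$, the common target $p'$ of its outgoing transitions is the empty state, and $\delta(p,a)=p'$ for all $a$. The converse is then immediate: with this shape any $w \in L$ reaches $p$, and any non-empty extension reaches the sink $p' \neq p$, so no word of $L$ properly extends another. For the right-ideal case the analogous observation is that from a final state every word stays final (since $L = L\Sigma^*$), so every final state has language $\Sigma^*$; minimality collapses these to a single $p$ with $\delta(p,a)=p$, and conversely a final sink $p$ fixed by every letter makes $L\Sigma^* \subseteq L$, whence $L = L\Sigma^*$.

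The main obstacle I anticipate is the careful bookkeeping in the prefix-free case, where three facts---uniqueness of the final state, existence of the empty state, and the transition behaviour $\delta(p,a)=p'$---must be established together, each drawing on the interplay between prefix-freeness and minimality; in particular one must verify $p \neq p'$ (so that $p'$ really is the empty state, distinct from the accepting state) using the non-emptiness of $L$. The other two cases are routine once the appropriate consequence of Proposition~\ref{prefixchar} and the collapsing effect of minimality are invoked.
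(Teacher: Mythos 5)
Your proposal is correct and follows essentially the same route as the paper: each direction is handled via Proposition~\ref{prefixchar} together with minimality, with only cosmetic differences (you invoke condition 3 of that proposition where the paper uses condition 2 for the prefix-closed case, and you argue directly that transitions out of a final state of a prefix-free language hit the empty state where the paper first rules out finality and then applies the proposition). No gaps.
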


\begin{proof}
Note that $|F| = k$ and $k \ge 1$ since $L$ is non-empty.
\begin{enumerate}

\item Suppose $L$ is prefix-closed. Clearly, $\varepsilon$ is a prefix of some word in $L$, since $L$ is non-empty. Thus $\varepsilon \in L$, and so $0 \in F$.
Conversely, suppose $0 \in F$.
For any $wx \in L$, there are states $q, r \in Q_n$ such that $0 \xrightarrow{w} q \xrightarrow{x} r$, and $r$ is final.
By Proposition \ref{prefixchar}, since $0,r \in F$, $q$ is reachable from $0$, and $r$ is reachable from $q$,  we have $q \in F$.
Hence $w \in L$, and therefore $L$ is prefix-closed.

\item Suppose $L$ is prefix-free. If $q \in Q_n$ and $p \in F$ are distinct and $q$ is reachable from $p$, then $q$ cannot be final as that would imply $p \not\in F$. In particular, for any $p \in F$ and $a \in \Sigma$, $\delta(p,a) \not\in F$. By Proposition \ref{prefixchar}, $\delta(p,a)$ must be the empty state for all $a \in \Sigma$. Thus, the transitions from all final states are identical, and hence all final states are equivalent. By minimality, $\mathcal D$ has a unique final state $p$, an empty state $p'$, and $\delta(p, a) = p'$ for all $a \in \Sigma$.

For the converse, suppose $F = \{p\}$, $p' \in Q_n$ is an empty state, and $\delta(p,a) = p'$ for all $a \in \Sigma$. Then $w \in L$ if and only if $\delta(0,w) = p$. For all $w \in L$ and $a \in \Sigma$, we have $\delta(0, wa) = p'$. Thus, whenever $w \in L$ and $wx \in L$, we have $x = \varepsilon$. Therefore, $L$ is prefix-free.

\item Suppose $L$ is a right ideal. 
For all $w \in L$ we have $L\supseteq w \Sigma^*$, and hence $w^{-1}L \supseteq \Sigma^*$, meaning that 
$w^{-1}L = \Sigma^*$.
Hence, for any final state $q \in F$ and $x \in \Sigma^*$, $\delta(q, x) \in F$. This implies that all final states are equivalent. By minimality, there is a unique final state $p$. Since $\delta(p, a) \in F$ for all $a \in \Sigma$, it follows that $\delta(p, a) = p$ for all $a \in \Sigma$.
For the converse, suppose $F = \{p\}$ and $\delta(p, a) = p$ for all $a \in \Sigma$. Then $w \in L$ if and only if $\delta(0, w) = p$. Hence, for all $w \in L$ and $x \in \Sigma^*$, we have $\delta(0, wx) = p$. Thus, $w\Sigma^* \subseteq L$ for all $w \in L$, and so $L = L \Sigma^*$. Therefore, $L$ is a right ideal.
\end{enumerate}
\end{proof}

\section{Proper Prefix-Convex Languages}\label{sec:prefix-convex}
Recall that a prefix-convex language $L$ is \textit{proper} if it is not a right ideal and it is neither prefix-closed nor prefix-free. Moreover, it is \emph{$k$-proper} if it has $k$ final states, $1\le k\le n-2$.
Every minimal DFA for a $k$-proper language with complexity $n$ has the same general structure: there are $n-1-k$ non-final, non-empty states, $k$ final states, and one empty state. Every letter fixes the empty state and, by Proposition~\ref{prefixchar}, no letter sends a final state to a non-final, non-empty state. 
As every proper language has an empty quotient, the restricted and unrestricted complexities are the same for binary operations. 

Next we define a stream of particular $k$-proper DFAs and languages.

\begin{definition}\label{def:proper}
For $n \ge 3$ and $1 \le k \le n-2$, let $\mathcal D_{n,k}(\Sigma) = (Q_n, \Sigma, \delta_{n,k}, 0, F_{n,k})$ where $\Sigma = \{a, b, c_1, c_2, d_1,d_2,e\}$, $F_{n,k} = \{n-1-k, \dots , n-2\}$, and $\delta_{n,k}$ is defined by the transformations  \begin{align*}
a &\colon \begin{cases}
(1, \dots, n-2-k)(n-1-k, n-k), &\text{ \emph{if} $n-1-k$ \emph{is even and} $k \ge 2$;} \\
(0, \dots, n-2-k)(n-1-k, n-k), &\text{ \emph{if} $n-1-k$ \emph{is odd and} $k \ge 2$;} \\
(1, \dots, n-2-k), &\text{ \emph{if} $n-1-k$ \emph{is even and} $k = 1$;} \\
(0, \dots, n-2-k), &\text{ \emph{if} $n-1-k$ \emph{is odd and} $k = 1$.} \\
\end{cases}\\ 
b &\colon \begin{cases}
(n-k, \dots, n-2)(0, 1), &\text{ \emph{if} $k$ \emph{is even and} $n-1-k \ge 2$;} \\
(n-1-k, \dots, n-2)(0, 1), &\text{ \emph{if} $k$ \emph{is odd and} $n-1-k \ge 2$;} \\
(n-k, \dots, n-2), &\text{ \emph{if} $k$ \emph{is even and} $n-1-k = 1$;} \\
(n-1-k, \dots, n-2), &\text{ \emph{if} $k$ \emph{is odd and} $n-1-k = 1$.} \\
\end{cases}\\
c_1 &\colon \begin{cases}(1 \to 0), &\text{\emph{if} $n-1-k \ge 2$;} \\ \quad~ \mathbbm{1}, &\text{\emph{if} $n-1-k = 1$.} \end{cases}\\
c_2 &  \colon \begin{cases}(n-k \to n-1-k), &\text{\emph{if} $k \ge 2$;} \\ \quad\quad\quad~ \mathbbm{1}, &\text{\emph{if} $k = 1$.} \end{cases} \\
d_1 &\colon (n-2-k \rightarrow n-1)(_0^{n-3-k} \;\; q\rightarrow q+1).\\
d_2 &\colon (_{n-1-k}^{n-2} \;\; q\rightarrow q+1).\\
e &\colon (0 \to n-1-k).
\end{align*}
Additionally define $E_{n,k} = \{0, \dots, n-2-k\}$; it is often useful to partition $Q_n$ into $E_{n,k}$, $F_{n,k}$, and $\{n-1\}$.
Note that letters $a$ and $b$ have complementary behaviours on $E_{n,k}$ and $F_{n,k}$, depending on the parities of $n$ and $k$.
Letters $c_1$, and $d_1$ act on $E_{n,k}$ in exactly in the same way as $c_2$, and $d_2$ act on $F_{n,k}$.
In addition, $d_1$ and $d_2$ send states $n-2-k$ and $n-2$, respectively, to state $n-1$, and letter $e$ connects the two parts of the DFA.
The structure of $\mathcal{D}_{n,k}(\Sigma)$ is shown in Figures~\ref{fig:ProperWit1} and \ref{fig:ProperWit2} for certain parities of $n-1-k$ and $k$. 
Let $L_{n,k}(\Sigma)$ be the language recognized by $\mathcal D_{n,k}(\Sigma)$.
\end{definition}

\begin{figure}[th]
\unitlength 8.5pt
\begin{center}\begin{picture}(40,12)(0,-1.6)
\gasset{Nh=2.0,Nw=5.6,Nmr=1.25,ELdist=0.4,loopdiam=1.5}
\node(0)(-1,7){0}\imark(0)
\node(1)(9.5,7){1}
\node(2)(18,7){2}
\node[Nframe=n](3dots)(26.5,7){$\dots$}
	{\small
\node(n-2-k)(35,7){$n-2-k$}
	}
	{\small
\node(n-1)(41,3.5){$n-1$}
	}

	{\small
\node(n-1-k)(-1,0){$n-1-k$}\rmark(n-1-k)
	}
	{\small
\node(n-k)(9.5,0){$n-k$}\rmark(n-k)
	}
	{\small
\node(n-k+1)(18,0){$n-k+1$}\rmark(n-k+1)
	}
\node[Nframe=n](3dots1)(26.5,0){$\dots$}
\node(n-2)(35,0){$n-2$}\rmark(n-2)
\drawedge[curvedepth= 1.2,ELdist=-1.3](0,1){$a, b, d_1$}
\drawedge(n-2-k,n-1){$d_1$}
\drawedge(0,n-1-k){$e$}
\drawedge[curvedepth= 1.2,ELdist=.2](1,0){$b,c_1$}
\drawedge[curvedepth= -4.5,ELdist=.3](n-2-k,0){$a$}
\drawedge(1,2){$a, d_1$}
\drawedge(2,3dots){$a, d_1$}
\drawedge(3dots,n-2-k){$a, d_1$}
\drawedge[curvedepth= 1.2,ELdist=-1.3](n-1-k,n-k){$a, d_2$}
\drawedge[curvedepth= 1.2,ELdist=.25](n-k,n-1-k){$a,c_2$}
\drawedge(n-k,n-k+1){$b, d_2$}
\drawedge(n-k+1,3dots1){$b, d_2$}
\drawedge(3dots1,n-2){$b, d_2$}
\drawedge(n-2,n-1){$d_2$}
\drawedge[curvedepth= -4.5,ELdist=.3](n-2,n-k){$b$}
\end{picture}\end{center}
\caption{DFA $\mathcal{D}_{n,k}(a,b,c_1,c_2, d_1, d_2,e)$ of Definition~\ref{def:proper} when $n-1-k$ is odd, $k$ is even, and both are at least $2$; missing transitions are self-loops.}
\label{fig:ProperWit1}
\end{figure}
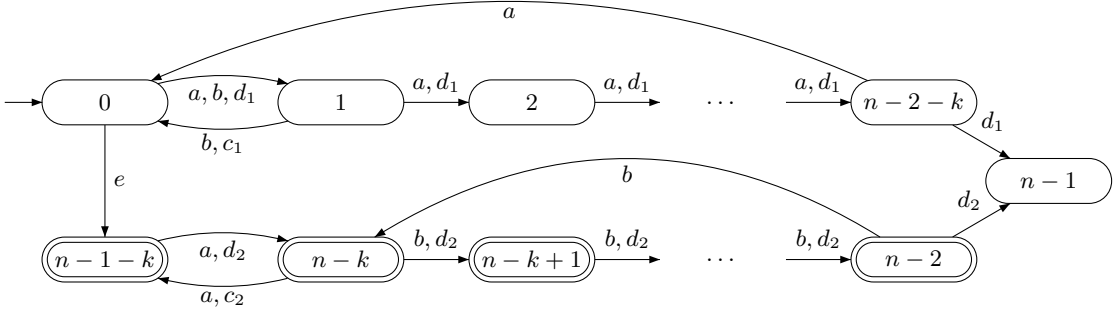

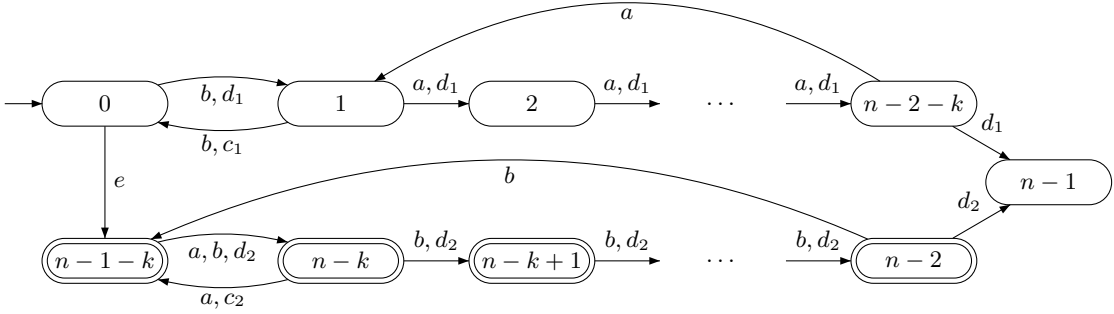
\begin{figure}[th]
\unitlength 8.5pt
\begin{center}\begin{picture}(40,14)(0,-1.6)
\gasset{Nh=2.0,Nw=5.6,Nmr=1.25,ELdist=0.4,loopdiam=1.5}
\node(0)(-1,7){0}\imark(0)
\node(1)(9.5,7){1}
\node(2)(18,7){2}
\node[Nframe=n](3dots)(26.5,7){$\dots$}
	{\small
\node(n-2-k)(35,7){$n-2-k$}
	}
	{\small
\node(n-1)(41,3.5){$n-1$}
	}

	{\small
\node(n-1-k)(-1,0){$n-1-k$}\rmark(n-1-k)
	}
	{\small
\node(n-k)(9.5,0){$n-k$}\rmark(n-k)
	}
	{\small
\node(n-k+1)(18,0){$n-k+1$}\rmark(n-k+1)
	}
\node[Nframe=n](3dots1)(26.5,0){$\dots$}
\node(n-2)(35,0){$n-2$}\rmark(n-2)
\drawedge[curvedepth= 1.2,ELdist=-1.3](0,1){$b, d_1$}
\drawedge(n-2-k,n-1){$d_1$}
\drawedge(0,n-1-k){$e$}
\drawedge[curvedepth= 1.2,ELdist=.2](1,0){$b,c_1$}
\drawedge[curvedepth= -4.5,ELdist=.3](n-2-k,1){$a$}
\drawedge(1,2){$a, d_1$}
\drawedge(2,3dots){$a, d_1$}
\drawedge(3dots,n-2-k){$a, d_1$}
\drawedge[curvedepth= 1.2,ELdist=-1.3](n-1-k,n-k){$a, b, d_2$}
\drawedge[curvedepth= 1.2,ELdist=.25](n-k,n-1-k){$a,c_2$}
\drawedge(n-k,n-k+1){$b, d_2$}
\drawedge(n-k+1,3dots1){$b, d_2$}
\drawedge(3dots1,n-2){$b, d_2$}
\drawedge(n-2,n-1){$d_2$}
\drawedge[curvedepth= -4.5,ELdist=.3](n-2,n-1-k){$b$}
\end{picture}\end{center}
\caption{DFA $\mathcal{D}_{n,k}(a,b,c_1,c_2, d_1, d_2,e)$ of Definition~\ref{def:proper} when $n-1-k$ is even, $k$ is odd, and both are at least $2$; missing transitions are self-loops.}
\label{fig:ProperWit2}
\end{figure}

\begin{theorem}[Proper Prefix-Convex Languages]
\label{thm:propermain}
For $n\ge 3$ and $1 \le k \le n-2$, the DFA $\mathcal{D}_{n,k}(\Sigma)$ of Definition~\ref{def:proper} is minimal and  $L_{n,k}(\Sigma)$ is a $k$-proper language of complexity $n$.  
This language is most complex in the class of $k$-proper prefix-convex languages; in particular, it meets all the complexity bounds below.
At least 7 letters are required to meet these bounds.
\begin{enumerate}
\item
The syntactic semigroup of $L_{n,k}(\Sigma)$ has cardinality $n^{n-1-k}(k+1)^k$; the maximal value $n(n-1)^{n-2}$ is reached only when $k=n-2$. At least 7 letters are required to meet this bound.
\item
The non-empty, non-final quotients of $L_{n,k}(a, b, -, -, -, d_2, e)$ have complexity $n$, the final quotients have complexity $k+1$, and the empty quotient has complexity 1.
\item
The reverse of $L_{n,k}(a,b,-,-,-,d_2,e)$ has complexity $2^{n-1}$; moreover, the language has $2^{n-1}$ atoms for all $k$.
\item
For each atom $A_S$ of $L_{n,k}(\Sigma)$, write $S = X_1 \cup X_2$, where $X_1 \subseteq E_{n,k}$ and $X_2 \subseteq F_{n,k}$.
Let $\overline{X_1} = E_{n,k}\setminus X_1$ and $\overline{X_2} = F_{n,k}\setminus X_2$.
If $X_2 \not= \emptyset$, then $\kappa(A_S) =$
$$1 + \sum_{x_1=0}^{|X_1|}\sum_{x_2=1}^{|X_1| + |X_2| - x_1}\sum_{y_1=0}^{|\overline{X_1}|}\sum_{y_2=0}^{|\overline{X_1}|+|\overline{X_2}| -y_1}\binom{n-1-k}{x_1}\binom{k}{x_2}\binom{n-1-k-x_1}{y_1}\binom{k-x_2}{y_2}.$$
If $X_1 \not= \emptyset$ and $X_2 = \emptyset$, then $\kappa(A_S) =$
\begin{multline*}
1 + \sum_{x_1=0}^{|X_1|}\sum_{x_2=0}^{|X_1| - x_1}\sum_{y_1=0}^{|\overline{X_1}|}\sum_{y_2=0}^{k}\binom{n-1-k}{x_1}\binom{k}{x_2}\binom{n-1-k-x_1}{y_1}\binom{k-x_2}{y_2} \\
-2^k\sum_{y=0}^{|\overline{X_1}|}\binom{n-1-k}{y}.
\end{multline*}
Otherwise, $S = \emptyset$ and $\kappa(A_S) = 2^{n-1}$.
\item
The star of $L_{n,k}(a,b,-,-,d_1,d_2, e)$  has complexity $2^{n-2}+2^{n-2-k}+1$. The maximal value $2^{n-2}+2^{n-3}+1$ is reached only when $k=1$.
\item
The product of $L'_{m,j}(a,b,c_1,-, d_1, d_2, e)$ and  $ L_{n,k}(a,d_2, c_1,-,d_1, b, e)$ has complexity $m-1-j +j2^{n-2}+2^{n-1}$. The maximal value $m 2^{n-2} + 1$ is reached only when $j=m-2$.
\item
For $m,n\ge 3$, $1 \le j \le m-2$, and $1 \le k \le n-2$, let $L'_{m,j} = L'_{m,j}(a, b, c_1, -, d_1, d_2, e)$ and $L_{n,k} = L_{n,k}(a, b, e, -, d_2, d_1, c_1)$. For any proper binary boolean function $\circ$, the complexity of $L'_{m,j} \circ L_{n,k}$ is maximal. In particular,
	\begin{enumerate}
	\item
	$L'_{m,j} \cup L_{n,k}$ and $L'_{m,j} \oplus L_{n,k}$ have complexity  $mn$.
	\item
	$L'_{m,j} \setminus L_{n,k}$ has complexity $mn-(n-1)$.
	\item
	$L'_{m,j} \cap L_{n,k}$ has complexity $mn-(m+n-2)$.
	\end{enumerate}
\end{enumerate}

\end{theorem}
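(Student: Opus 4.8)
The plan is to follow the template of the earlier most-complex streams (Theorems~\ref{thm:regular}--\ref{thm:Prefix-free_witness}): first establish minimality and properness of $\mathcal{D}_{n,k}$, and then treat each measure by invoking (or re-deriving) the class-specific upper bound and verifying that the prescribed dialect attains it through explicit reachability and distinguishability arguments. The structural backbone is Proposition~\ref{prefixchar}: in any minimal DFA of a $k$-proper language every letter fixes the empty state $n-1$ and maps $F_{n,k}$ into $F_{n,k}\cup\{n-1\}$, while the states of $E_{n,k}$ are unconstrained. Thus $\mathcal{D}_{n,k}$ decomposes into two regular-type gadgets — one on $E_{n,k}$ driven by $a,b,c_1,d_1$ and one on $F_{n,k}$ driven by $a,b,c_2,d_2$ — linked by $e$ (which injects $0$ into the final block) and by $d_1,d_2$ (which leak the boundary states $n-2-k$ and $n-2$ into the sink). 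I would check minimality by noting that only state $q$ of $E_{n,k}$ accepts the appropriate $d_1$-word and only state $q$ of $F_{n,k}$ accepts the appropriate $d_2$-word, so all states are distinguishable; properness then follows from Proposition~\ref{convexclasses}, since there are $k\ge 1$ final states with a transition leaving $F_{n,k}$ and an empty quotient, so $L_{n,k}$ is neither a right ideal, nor prefix-closed, nor prefix-free.

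For the syntactic semigroup (item 1), the target $n^{n-1-k}(k+1)^k$ is exactly the number of transformations $t$ of $Q_n$ with $(n-1)t=n-1$ and $F_{n,k}t\subseteq F_{n,k}\cup\{n-1\}$, that is, precisely the transformations allowed by Proposition~\ref{prefixchar}. I would show the generators realise all of them: as in Theorem~\ref{thm:regular}, a cycle, a transposition, and an elementary merge suffice to generate the full transformation monoid of a set, so $a,b,c_1$ generate that of $E_{n,k}$ and $a,b,c_2$ that of $F_{n,k}$; then $e,d_1$ redirect any single state of $E_{n,k}$ into $F_{n,k}$ or to $n-1$, and $d_2$ sends any single state of $F_{n,k}$ to $n-1$, and composing with the two permutation groups yields every admissible $t$. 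Items 2 and 3 are then routine: re-rooting at a non-final non-empty state keeps all $n$ states reachable and distinct, while re-rooting at a final state collapses $E_{n,k}$ and leaves the $k+1$ states $F_{n,k}\cup\{n-1\}$; and because $K_{n-1}=\emptyset$ forces $A_S=\emptyset$ whenever $n-1\in S$, there are at most $2^{n-1}$ atoms, which I would match by exhibiting $2^{n-1}$ reachable, pairwise distinguishable subsets in the reverse subset automaton, using $\kappa(L^R)=\#\text{atoms}$ from~\cite{BrTa14}.

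The principal obstacle is the atom computation (item 4). Writing a generic quotient of $A_S$ as $J_{X,Y}=\bigcap_{i\in X}K_i\cap\bigcap_{i\in Y}\overline{K_i}$ and splitting $X=X_1\cup X_2$, $Y=Y_1\cup Y_2$ across $E_{n,k}\cup F_{n,k}$ yields the four summation indices $x_1,x_2,y_1,y_2$. The restriction that words act on $F_{n,k}$ only within $F_{n,k}\cup\{n-1\}$ is what forces $x_2\ge 1$ in the generic case $X_2\neq\emptyset$ and what produces the subtracted term $2^k\sum\binom{n-1-k}{y}$ in the case $X_2=\emptyset$, where configurations that have entirely emptied the final block become indistinguishable. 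Following \cite{BrDa14,BrDa15}, I would (a) bound $\kappa(A_S)$ by counting admissible $(X,Y)$ subject to the cardinality caps and the structural restriction; (b) exhibit, for each admissible $(X,Y)$, a word in $\{a,b,c_1,c_2,e\}^{*}$ together with $d_1,d_2$ driving $S\mapsto X$ and $\overline S\mapsto Y$; and (c) separate distinct configurations by witness words $w_X$ of the form used in Theorem~\ref{thm:Prefix-free_witness}. Reconciling the boundary cases $S=\emptyset$, $X_2=\emptyset$, and the degenerate merges through $n-1$ with the stated closed forms is the delicate step where the bookkeeping must be done with care.

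The remaining items use the standard subset and direct-product constructions with the class-specific bounds. For the star (item 5) I would delete the sink, add $\varepsilon$-transitions from each final state to $0$, and count reachable subsets, the two exponential terms $2^{n-2}$ and $2^{n-2-k}$ arising according to whether the current subset has already entered the final block, plus one extra set. For the product (item 6) I would build the concatenation NFA, making the $j$ final states of $L'_{m,j}$ non-final with an $\varepsilon$-transition to $0$ of $L_{n,k}$, and account for the $m-1-j$ pre-final singletons, the $j\,2^{n-2}$ configurations pairing a final state with a subset of the second component containing $0$, and the $2^{n-1}$ configurations reached after the first component dies; specialising to $j=m-2$ gives $m2^{n-2}+1$. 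For the boolean operations (item 7) I would take the direct product, obtain reachability and distinguishability of the permutation-generated core via \cite[Theorem~1]{BBMR14} and extend to the final rows and columns and the empty row and column using $d_1,d_2,e$, then finish per operation, the reductions for intersection and difference coming from collapsing empty and all-accepting states. The secondary subtle point, analogous to the alphabet lower bounds in Theorems~\ref{thm:ideal} and~\ref{thm:Prefix-free_witness}, is the necessity of all seven letters: each of the two permutation roles, the two merges $c_1,c_2$, the two sink-leaks $d_1,d_2$, and the connector $e$ must be shown irreplaceable by compositions of the others, which I expect to argue directly from the generation analysis in item~1.
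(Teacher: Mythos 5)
Your plan follows the same route as the paper---minimality and properness via Propositions~\ref{prefixchar} and~\ref{convexclasses}, the two-block decomposition of $Q_n$ into $E_{n,k}$, $F_{n,k}$ and the sink, the $J_{X,Y}$ bookkeeping for atoms, and the standard subset/direct-product constructions against class-specific upper bounds---but it has one genuine gap. You assert that ``$a,b,c_1$ generate the transformation monoid of $E_{n,k}$ and $a,b,c_2$ that of $F_{n,k}$'' as if this were the usual cycle-plus-transposition-plus-merge fact. It is not: each of $a$ and $b$ acts \emph{simultaneously} on both blocks, so what you actually need is that the pair of induced permutations generates all of $S_{n-1-k}\times S_k$, i.e.\ that one can realize an arbitrary permutation of $E_{n,k}$ while fixing $F_{n,k}$ pointwise, and vice versa. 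This is exactly why Definition~\ref{def:proper} splits the actions of $a$ and $b$ into parity cases: the paper's Lemma~\ref{lem:properpermutations} isolates the transposition component of $s_a$ by raising it to the power $n-1-k$ or $n-2-k$ according to parity, and only then recovers the cycle component. Without this lemma (or a substitute) your semigroup count does not go through, and neither do the reachability arguments for reversal, star, product, and the atom witnesses, all of which repeatedly invoke ``$a$ and $b$ generate every block-preserving permutation.''

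A secondary problem is your plan for the boolean operations: you propose to obtain the ``permutation-generated core'' of the direct product from \cite[Theorem 1]{BBMR14}. That theorem does not apply here, because words in $\{a,b\}^*$ act on the direct product through a subgroup of $S_{m-1-j}\times S_j\times S_{n-1-k}\times S_k$; the permutation induced on $E_{m,j}'$ cannot be prescribed independently of the one induced on $E_{n,k}$, so the hypothesis of independent symmetric-group actions on the two coordinates fails. The paper avoids this entirely: in the chosen dialects the letter $d_1$ advances the first coordinate while fixing the relevant block of the second, and $d_2$ does the opposite, so $W=E_{m,j}'\times E_{n,k}$ is reached by words $d_1^pd_2^q$ and states within the blocks $W,X,Y,Z$ are separated by words in $d_1^*$ and $d_2^*$. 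The rest of your outline (quotients, reversal, star, product, the $J_{X,Y}$ case analysis for atoms, and the seven-letter lower bound) matches the paper's argument at the level of detail you give.
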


\begin{proof}
The remainder of this paper is devoted to the proof of this main theorem. 
The longer parts of the proof are separated into individual propositions and lemmas.

DFA $\mathcal{D}_{n,k}(a,b,-,-,-,d_2,e)$ is easily seen to be minimal.
Language $L_{n,k}(\Sigma)$ is $k$-proper by Propositions~\ref{prefixchar} and~\ref{convexclasses}.
\begin{enumerate}
\item
See Proposition~\ref{prop:propersyntactic}.
\item
If the initial state of $\mathcal{D}_{n,k}(a, b, -, -, -, d_2, e)$ is changed to $q \in E_{n,k}$, the new DFA accepts a quotient of $L_{n,k}$ and is still minimal; hence the complexity of that quotient is $n$. If the initial state is changed to $q \in F_{n,k}$ then states in $E_{n,k}$ are unreachable, but the DFA on $\{n-1-k, \dots, n-1\}$ is minimal; hence the complexity of that quotient is $k+1$. The remaining quotient is empty, and hence has complexity 1. By Proposition~\ref{prefixchar}, these are maximal for $k$-proper languages.
\item
See Propositions~\ref{prop:properreversebound} and~\ref{prop:properreverse} for the reverse. It was shown in~\cite{BrTa14} that the number of atoms is equal to the complexity of the reverse.
\item
See Proposition~\ref{prop:properatoms}.
\item
See Propositions~\ref{prop:properstarbound} and~\ref{prop:properstar}.
\item
See Propositions~\ref{prop:properproductbound} and~\ref{prop:properproduct}. The maximal value of  $m 2^{n-2}+1$ occurs only when $j = m-2$, since the complexity is increasing with $j$.
\item
By~\cite[Theorem 2]{Brz10}, all boolean operations on regular languages over the same alphabet have the upper bound $mn$. Since all proper languages have an empty quotient this gives the bound for {\bf \emph a}; the bounds for {\bf \emph b} and {\bf \emph c} follow from~\cite[Theorem 5]{Brz10}. See Proposition~\ref{prop:properboolean} for the proof that all these bounds are tight for  $L'_{m,j} \circ L_{n,k}$.
\end{enumerate}
\end{proof}

\begin{lemma}
\label{lem:properpermutations} Let $n \ge 1$ and $1 \le k \le n-2$. For any permutation $t$ of $Q_n$ such that $E_{n,k}t = E_{n,k}$, $F_{n,k}t = F_{n,k}$, and $(n-1)t = n-1$, there is a word $w \in \{a,b\}^*$ that induces $t$ on $\mathcal{D}_{n,k}$.
\end{lemma}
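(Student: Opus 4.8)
My plan is to recast the claim as a statement about the group $H=\langle a,b\rangle$ generated by the transformations that $a$ and $b$ induce on $Q_n$. Writing $S_E$ and $S_F$ for the symmetric groups on $E_{n,k}$ and $F_{n,k}$, every word in $\{a,b\}^*$ fixes $n-1$ and preserves the blocks $E_{n,k}$ and $F_{n,k}$ — this is immediate from Definition~\ref{def:proper} — so $H\subseteq S_E\times S_F$, and the permutations $t$ described in the statement are precisely the elements of $S_E\times S_F$. Thus it suffices to prove the reverse inclusion $H\supseteq S_E\times S_F$. I will write $a=(a_E,a_F)$ and $b=(b_E,b_F)$ for the pairs of block restrictions, and set $m_E=n-1-k$ and $m_F=k$.

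The key is an order/parity dichotomy that the four parity cases of Definition~\ref{def:proper} are engineered to produce: $a_E$ and $b_F$ are cycles of \emph{odd} length $\ell_E$ and $\ell_F$ (the largest odd numbers not exceeding $m_E$ and $m_F$, using the full block or the block minus one fixed point according to parity), while $a_F$ and $b_E$ each have order dividing $2$ (a transposition, or the identity in the degenerate blocks). Consequently the two coordinates of $a$ have coprime orders, and likewise for $b$, so each letter splits by the Chinese remainder theorem: since $\ell_E$ is odd,
\[
a^{\ell_E+1}=(a_E,\mathbbm 1),\qquad a^{\ell_E}=(\mathbbm 1,a_F),
\]
and symmetrically $b^{\ell_F}=(b_E,\mathbbm 1)$ and $b^{\ell_F+1}=(\mathbbm 1,b_F)$. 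All four are words in $\{a,b\}^*$, so $H$ contains $(a_E,\mathbbm 1)$, $(b_E,\mathbbm 1)$, $(\mathbbm 1,a_F)$ and $(\mathbbm 1,b_F)$.

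It then remains to show $\langle a_E,b_E\rangle=S_E$ and $\langle a_F,b_F\rangle=S_F$; combined with the previous step this yields $H\supseteq(S_E\times\{\mathbbm 1\})\cdot(\{\mathbbm 1\}\times S_F)=S_E\times S_F$, completing the proof. For $S_E$ the generators are a cycle $a_E$ and the transposition $b_E=(0,1)$. When $m_E$ is odd, $a_E$ is a full cycle and this is the classical fact that a full cycle together with an adjacent transposition generates the symmetric group; when $m_E$ is even, $a_E$ fixes $0$ and cyclically permutes $\{1,\dots,m_E-1\}$, so conjugating $(0,1)$ by powers of $a_E$ yields every star transposition $(0,i)$, which again generate $S_E$. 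The argument for $\langle a_F,b_F\rangle=S_F$ is the mirror image, with $b_F$ the cycle and $a_F=(n-1-k,n-k)$ the transposition.

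The main obstacle is organizational rather than conceptual: the odd-order/order-two dichotomy and the two generation statements must be verified \emph{uniformly} over all four parity regimes and, in particular, over the degenerate small blocks $m_E,m_F\in\{1,2\}$, where the relevant cycle collapses to the identity and a single generator must already give the whole symmetric group. These boundary cases are exactly where the parity conventions earn their keep — for instance when $m_E=2$ we have $a_E=\mathbbm 1$ and $b_E=(0,1)$ alone generates $S_2$, yet $(b_E,\mathbbm 1)$ is still recovered as $b^{\ell_F}$ precisely because $\ell_F$ is odd — and they warrant explicit case-checking.
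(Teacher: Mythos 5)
Your proposal is correct and follows essentially the same route as the paper's proof: both reduce the claim to showing that $\langle s_a,s_b\rangle=S_{n-1-k}\times S_k$, exploit the deliberate odd-cycle/order-two parity design to split each letter into its block components by taking suitable powers, and then invoke standard generation of the symmetric group by a cycle and a transposition, with the degenerate small blocks checked separately. The only cosmetic differences are that you phrase the splitting via coprime orders (CRT) and handle the even-length block with star transpositions, whereas the paper isolates the transposition component by an explicit power and reduces the even case to the full-cycle case via $(0,1)\ast(1,\dots,m-1)=(0,\dots,m-1)$.
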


\begin{proof}
Only $a$ and $b$ induce permutations of $Q_n$; every other letter induces a properly injective map. Furthermore, $a$ and $b$ permute $E_{n,k}$ and $F_{n,k}$ separately, and both fix $n-1$. Hence every $w \in \{a,b\}^*$ induces a permutation  on $Q_n$ such that $E_{n,k}w = E_{n,k}$, $F_{n,k} w = F_{n,k}$, and $(n-1)w = n-1$. Each such permutation naturally corresponds to an element of $S_{n-1-k} \times S_k$, where $S_m$ denotes the symmetric group on $m$ elements. To be consistent with the DFA, assume $S_{n-1-k}$ contains permutations of $\{0, \dots, n-2-k\}$ and $S_k$ contains permutations of $\{n-1-k, \dots, n-2\}$. Let $s_a$ and $s_b$ denote the group elements corresponding to the transformations induced by $a$ and $b$ respectively. We show that $s_a$ and $s_b$ generate $S_{n-1-k} \times S_k$.

It is well known that $(0, \dots, m-1)$, and $(0,1)$ generate the symmetric group on $\{0, \dots, m-1\}$ for any $m \geq 2$. Note that $(1, \dots, m-1)$ and $(0, 1)$ are also generators, since $(0, 1)\ast(1, \dots, m-1) = (0, \dots, m-1)$.

If $n-1-k=1$ and $k=1$ then $S_{n-1-k} \times S_k$ is the trivial group. If $n-1-k = 1$ and $k \ge 2$, then $s_a = (\mathbbm{1}, (n-1-k,n-k))$ and $s_b$ is either $(\mathbbm{1}, (n-1-k, \dots, n-2))$ or $(\mathbbm{1}, (n-k, \dots, n-2))$, and either pair generates the group. There is a similar argument when $k = 1$.

Assume now $n-1-k \ge 2$ and $k \ge 2$. If $n-1-k$ is odd then $s_a = ((0,\dots, n-2-k), (n-1-k,n-k))$, and hence $s_a^{n-1-k} = ((0,\dots, n-2-k)^{n-1-k}, (n-1-k,n-k)^{n-1-k}) = (\mathbbm{1}, (n-1-k,n-k))$. Similarly if $n-1-k$ is even then $s_a = ((1,\dots, n-2-k), (n-1-k,n-k))$, and hence $s_a^{n-2-k}= (\mathbbm{1}, (n-1-k,n-k))$. Therefore $(\mathbbm{1}, (n-1-k,n-k))$ is always generated by $s_a$. By symmetry, $((0,1), \mathbbm{1})$ is always generated by $s_b$ regardless of the parity of $k$.

Since we can isolate the transposition component of $s_a$, we can isolate the other component as well: $(\mathbbm{1}, (n-1-k,n-k))s_a$ is either $((0, \dots, n-2-k),\mathbbm{1})$ or $((1, \dots, n-2-k),\mathbbm{1})$. Paired with $((0,1), \mathbbm{1})$, either element is sufficient to generate $S_{n-1-k} \times \{\mathbbm{1}\}$. Similarly, $s_a$ and $s_b$ generate $\{\mathbbm{1}\} \times S_k$. Therefore $s_a$ and $s_b$ generate $S_{n-1-k} \times S_k$. It follows that $a$ and $b$ generate all permutations $t$ of $Q_n$ such that $E_{n,k}t = E_{n,k}$, $F_{n,k}t = F_{n,k}$, and $(n-1)t = n-1$.
\end{proof}

\begin{proposition}[Syntactic Semigroup]\label{prop:propersyntactic}
The syntactic semigroup of $L_{n,k}(\Sigma)$ has cardinality $n^{n-1-k}(k+1)^k$, which is maximal for a $k$-proper language. Furthermore, seven letters are required to meet this bound. The maximum value $n(n-1)^{n-2}$ is reached only when $k=n-2$.
\end{proposition}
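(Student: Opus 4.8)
The plan is to prove four things in turn: that $N:=n^{n-1-k}(k+1)^k$ is an upper bound for any $k$-proper language, that the transition semigroup of $\mathcal{D}_{n,k}$ attains it, that no fewer than seven letters suffice, and that $N$, maximized over $1\le k\le n-2$, equals $n(n-1)^{n-2}$ only at $k=n-2$. For the upper bound I would invoke Proposition~\ref{prefixchar} together with the structural description of $k$-proper DFAs: every transformation $t$ in the transition semigroup of a minimal DFA of a $k$-proper language must fix the empty state $n-1$ and satisfy $F_{n,k}t\subseteq F_{n,k}\cup\{n-1\}$, while the states of $E_{n,k}$ are unconstrained. Counting the independent choices, namely $n$ images for each of the $n-1-k$ states of $E_{n,k}$ and $k+1$ images for each of the $k$ states of $F_{n,k}$, gives exactly $N$ transformations. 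Writing $M_{n,k}$ for this monoid, it is the unique candidate for a maximal transition semigroup, so it remains to show $T_{\mathcal{D}_{n,k}}=M_{n,k}$.

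For attainability, Lemma~\ref{lem:properpermutations} already supplies $\langle a,b\rangle=G=S_{n-1-k}\times S_k$, the block-preserving permutations fixing $n-1$. I would then extract five families of rank-$(n-1)$ maps. Conjugating $c_1=(1\to 0)$ by $G$ yields every $(p\to p')$ with $p,p'\in E_{n,k}$; conjugating $c_2$ by $G$ yields every merge inside $F_{n,k}$; conjugating $e=(0\to n-1-k)$ by $G$ yields every $(p\to q)$ with $p\in E_{n,k}$, $q\in F_{n,k}$. The maps into the empty state come from cancelling the shift in $d_1$ and $d_2$: if $\sigma\in S_{n-1-k}$ is the cyclic shift $j\mapsto j-1$ on $E_{n,k}$, then $d_1\sigma=(n-2-k\to n-1)$, and conjugating by $G$ gives all $(p\to n-1)$ with $p\in E_{n,k}$; symmetrically $d_2$ gives all $(q\to n-1)$ with $q\in F_{n,k}$. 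A rank induction finishes the job: given $t\in M_{n,k}$ of rank $r<n$, pick a state $z\notin\operatorname{Im}(t)$ and a doubly-covered image point, and write $t=s\cdot t'$ where $s$ is one of the elementary maps above and $t'\in M_{n,k}$ has rank $r+1$. The only subtlety is choosing the merged pair and $z$ so that both $s$ and $t'$ respect $F_{n,k}\to F_{n,k}\cup\{n-1\}$, which a short case analysis on the blocks of $z$ and of the image point handles. Since the rank-$n$ elements of $M_{n,k}$ are exactly $G$, induction gives $M_{n,k}\subseteq T_{\mathcal{D}_{n,k}}$, hence equality.

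For the seven-letter bound, note that the group of units of $M_{n,k}$ is $G$; since image size never increases under composition, every generator that is a unit lies in $G$, so the unit-inducing letters must already generate $G$, which needs two generators whenever both symmetric factors are nontrivial. The non-units are generated by rank-$(n-1)$ elements, and these fall into five orbits under multiplication by $G$, indexed by the blocks of the merged pair: $EE$, $FF$, $EF$, $E\!\to\!(n-1)$, and $F\!\to\!(n-1)$. I would argue that no family of generators produces a rank-$(n-1)$ element of another orbit via a kernel-and-reachability invariant: an $F$-state reaches $n-1$ only through an $F\!\to\!(n-1)$ generator, an $E$-state reaches $F$ only through an $EF$ generator, and a merge inside a block cannot be routed through the other block. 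The delicate case, and the main obstacle, is ruling out the simulation of $E\!\to\!(n-1)$ by first sending an $E$-state into $F$ and then to $n-1$: any such route applies an $EF$ map, which forces the $E$-state into the same kernel class as an $F$-state, whereas $(p\to n-1)$ places $p$ only in the class of $n-1$; tracking this constraint through an arbitrary product forbids the simulation. These give $2+5=7$.

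Finally, for the extremal value I set $f(k)=n^{n-1-k}(k+1)^k$ and $g(k)=\ln f(k)=(n-1-k)\ln n+k\ln(k+1)$. Then $g''(k)=\tfrac{1}{k+1}+\tfrac{1}{(k+1)^2}>0$, so $g$ is strictly convex on $[1,n-2]$ and its maximum is attained at an endpoint. Comparing the endpoints, $f(n-2)=n(n-1)^{n-2}$ and $f(1)=2n^{n-2}$, the inequality $f(n-2)>f(1)$ reduces to $(n-1)(1-\tfrac1n)^{n-3}>2$, which holds for all $n\ge 4$ (and the range is a single point when $n=3$). By strict convexity every interior integer value is strictly below $f(n-2)$, so the maximum is reached only at $k=n-2$, as claimed.
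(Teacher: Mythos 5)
Your upper bound, your seven-letter count, and your convexity analysis of $f(k)=n^{n-1-k}(k+1)^k$ are sound and essentially match the paper's (which establishes log-convexity via the increasing ratio $s_n(k+1)/s_n(k)$ rather than a second derivative, and compares the endpoints the same way). The gap is in the attainability step. The induction ``given $t\in M_{n,k}$ of rank $r<n$, write $t=s\cdot t'$ with $s$ elementary and $t'\in M_{n,k}$ of rank $r+1$'' fails: rank is not the right induction measure for this block-structured monoid. Take $n=4$, $k=1$, so $E_{4,1}=\{0,1\}$, $F_{4,1}=\{2\}$, and consider $t\colon 0\to 2,\ 1\to 0,\ 2\to 3,\ 3\to 3$. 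This lies in $M_{4,1}$ and has rank $n-1=3$, so your first induction step would have to produce an elementary collapse $s$ and a rank-$4$ element $t'$ of $M_{4,1}$ --- necessarily a block permutation, i.e.\ an element of $G$ --- with $t=st'$ or $t=t's$. Neither exists. For $t=st'$ we need $\ker s=\ker t$, whose unique non-singleton class is $\{2,3\}$, forcing $s=(2\to 3)$; but then $\operatorname{Im}(st')=\{0,1,3\}t'=\{0,1,3\}\neq\{0,2,3\}=\operatorname{Im}(t)$, since $t'$ preserves the blocks. For $t=t's$ we need $\operatorname{Im}(s)=\operatorname{Im}(t)=Q_4\setminus\{1\}$, forcing $s=(1\to q)$, whose kernel pair $\{1,q\}$ cannot be $\{2,3\}$. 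The obstruction is that $t$ simultaneously moves an $E$-state into $F$ and an $F$-state to $n-1$, so it requires two elementary factors even though its rank defect is only one; no ``short case analysis on the blocks of $z$ and of the image point'' can rescue a single rank induction here.

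The paper avoids this with a two-stage argument: it first generates every transformation satisfying $E_{n,k}t\subseteq E_{n,k}\cup\{n-1\}$ and $F_{n,k}t\subseteq F_{n,k}\cup\{n-1\}$ (using $a,b,c_1,d_1$ on the $E$-side and $c_2,d_2$ on the $F$-side, exactly as in the right-ideal case), and only then inducts on $j_t$, the number of $E$-states mapped into $F_{n,k}$, peeling off one $G$-conjugate of $e$ per step via a factorization $t=s_2\ast s_1$ with $j_{s_2}=j_t-1$. Your proof can be repaired by replacing rank with a measure such as $(n-\operatorname{rank}(t))+j_t$, or by adopting the paper's decomposition; as written, the induction step is false.
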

\begin{proof}
Let $L$ be a $k$-proper language of complexity $n$ and let $\mathcal{D}$ be a minimal DFA recognizing $L$.  By Lemma~\ref{idealorempty}, $\mathcal{D}$ has an empty state. By Proposition~\ref{prefixchar}, the only states that can be reached from one of the $k$ final states are either final or empty. Thus, a transformation in the transition semigroup of $\mathcal{D}$ may map each final state to one of $k+1$ possible states, while each non-final, non-empty state may be mapped to any of the $n$ states. Since the empty state can only be mapped to itself, we are left with $n^{n-1-k}(k+1)^k$ possible transformations in the transition semigroup. Therefore the syntactic semigroup of any $k$-proper language has size at most $n^{n-1-k}(k+1)^k$.

Now consider the transition semigroup of $\mathcal{D}_{n,k}(\Sigma)$. Every transformation $t$ in the semigroup must satisfy $F_{n,k}t \subseteq F_{n,k} \cup \{n-1\}$ and $(n-1)t = n-1$, since any other transformation would violate prefix-convexity. We show that the semigroup contains every such transformation, and hence the syntactic semigroup of $L_{n,k}(\Sigma)$ is maximal.

First, consider the transformations $t$ such that $E_{n,k}t \subseteq E_{n,k} \cup\{n-1\}$ and $qt =q$ for all $q \not\in E_{n,k}$.
By Lemma~\ref{lem:properpermutations}, $a$ and $b$ generate every permutation of $E_{n,k}$. When $t$ is not a permutation, we can use $c_1$ to combine any states $p$ and $q$: apply a permutation on $E_{n,k}$ so that $p \to 0$ and $q \to 1$, and then apply $c_1$ so that $1 \to 0$. Repeat this method to combine any set of states, and further apply permutations to induce the desired transformation while leaving the states of $F_{n,k} \cup \{n-1\}$ in place. The same idea applies with $d_1$; apply permutations and $d_1$ to send any states of $E_{n,k}$ to $n-1$.
Hence $a$, $b$, $c_1$, and $d_1$ generate every transformation $t$ such that $E_{n,k}t \subseteq E_{n,k}\cup\{n-1\}$ and $qt = q$ for all $q \in F_{n,k} \cup \{n-1\}$.

We can make the same argument for transformations that act only on $F_{n,k}$ and fix every other state. Since $c_2$ and $d_2$ act on $F_{n,k}$ exactly as $c_1$ and $d_1$ act on $E_{n,k}$, $a$, $b$, $c_1$, and $d_1$ generate every transformation $t$ such that $F_{n,k}t \subseteq F_{n,k} \cup \{n-1\}$ and $qt = q$ for all $q \in E_{n,k}\cup\{n-1\}$.
It follows that $a$, $b$, $c_1$, $c_2$, $d_1$, and $d_2$ generate every transformation $t$ such that $E_{n,k}t \subseteq E_{n,k} \cup \{n-1\}$, $F_{n,k}t \subseteq F_{n,k} \cup \{n-1\}$, and $(n-1)t = n-1$.

Note the similarity between this DFA restricted to the states $E_{n,k} \cup \{n-1\}$ (or $F_{n,k} \cup \{n-1\}$) and the most complex witness for right ideals.
The argument for the size of the syntactic semigroup of right ideals is similar to this; see~\cite{BrYe11}.

Finally, consider an arbitrary transformation $t$ such that $F_{n,k}t \subseteq F_{n,k} \cup \{n-1\}$ and $(n-1)t = n-1$. Let $j_t$ be the number of states $p \in E_{n,k}$ such that $pt \in F_{n,k}$. We show by induction on $j_t$ that $t$ is in the transition semigroup of $\mathcal{D}$. If $j_t = 0$, then $t$ is generated by $\Sigma \setminus \{e\}$.
If $j_t \ge 1$, there exist $p,q \in E_{n,k}$ such that $pt \in F_{n,k}$ and $q$ is not in the image of $t$.
Consider the transformations $s_1$ and $s_2$ defined by
$qs_1 = pt$ and $rs_1 = r$ for $r\not=q$, and
$ps_2 = q$ and $rs_2 = rt$ for $r\not=p$.
Notice that $t = s_1 \ast s_2$ and $j_{s_2} = j_t -1$. One can verify that $s_1 = (n-1-k,pt) \ast (0,q) \ast  (0 \to n-1-k) \ast (0,q)\ast  (n-1-k,pt)$. From this expression, we see that $s_1$ is the composition of transpositions induced by words in $\{a,b\}^*$ and the transformation $(0 \to n-1-k)$ induced by $e$, and hence $s_1$ is generated by $\Sigma$. Thus, $t = s_1 \ast s_2$ is in the transition semigroup. By induction on $j_t$, it follows that the syntactic semigroup of $L_{n,k}$ is maximal.

Now we show that seven letters are required to meet this bound. Two letters (like $a$ and $b$) that induce permutations are required to generate permutations, since clearly one letter is not sufficient. Every other letter will induce a properly injective map.
A letter (like $c_1$) that induces a properly injective map on $E_{n,k}$ and permutes $F_{n,k}$ is required.
Similarly, a letter (like $c_2$) that permutes $E_{n,k}$ and induces a properly injective map on $F_{n,k}$ is required.
A letter (like $d_1$) that sends a state in $E_{n,k}$ to $n-1$ and permutes $F_{n,k}$ is required.
Similarly, a letter (like $d_2$) that sends a state in $F_{n,k}$ to $n-1$ and permutes $E_{n,k}$ is required.
Finally, a letter (like $e$) that connects $E_{n,k}$ and $F_{n,k}$ is required.

For a fixed $n$, we may want to know which $k \in \{1 , \dots, n-2\}$ maximizes $s_n(k) = n^{n-1-k}(k+1)^k$; this corresponds to the largest syntactic semigroup of a proper prefix-convex language with $n$ quotients. We show that $s_n(k)$ is largest at $k = n-2$. Consider the ratio $\frac{s_n(k+1)}{s_n(k)} = \frac{(k+2)^{k+1}}{n(k+1)^k}$. Notice this ratio is increasing with $k$, and hence $s_n$ is a convex function on $\{1, \dots ,n-2\}$. It follows that the maximum value of $s_n$ must occur at one the endpoints, $1$ and $n-2$.

Now we show that $s_n(n-2) \ge s_n(1)$ for all $n \ge 3$. We can check this explicitly for $n = 3, 4, 5$.
When $n \ge 6$, $s_n(n-2)/s_n(1) = \frac{n}{2} \left(\frac{n-1}{n}\right)^{n-2} \ge 3 \left(1/e\right) > 1$; hence the largest syntactic semigroup of $L_{n,k}(\Sigma)$ occurs only at $k = n-2$ for all $n \ge 3$.
\end{proof}

\begin{proposition}[Reverse Bound]\label{prop:properreversebound}
For any regular language $L$ of complexity $n$ with an empty quotient, the reversal has complexity at most $2^{n-1}$.
\end{proposition}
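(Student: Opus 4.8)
The plan is to bound the number of reachable subsets in the subset construction applied to the reversal NFA, exploiting the empty quotient to eliminate one state from every reachable subset. Let $\mathcal{D} = (Q_n, \Sigma, \delta, 0, F)$ be a minimal DFA for $L$, and let $e \in Q_n$ be the state with $L^e = \emptyset$. For each $a \in \Sigma$ we have $L^{ea} = a^{-1}L^e = \emptyset$, and since a minimal DFA has at most one empty state, $ea = e$; that is, $e$ is a sink. Also $e \notin F$, because a final state accepts $\varepsilon$ whereas $L^e = \emptyset$.

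I would then form the reversal NFA $\mathcal{N}$ by reversing every transition of $\mathcal{D}$, taking $F$ as the set of initial states and $\{0\}$ as the unique final state; then $L(\mathcal{N}) = L^R$. Determinizing $\mathcal{N}$ by the subset construction yields a DFA whose states are subsets of $Q_n$, and the number of reachable such subsets is an upper bound on $\kappa(L^R)$. In $\mathcal{N}$, the image of a subset $S$ under a letter $a$ is the preimage $\{p \in Q_n \mid pa \in S\}$.

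The crucial step is to observe that the sink $e$ never enters a reachable subset. Because $ea = e$, the state $e$ lies in the image of $S$ under $a$ if and only if $e \in S$; hence any subset that omits $e$ is sent by every letter to another subset omitting $e$. The initial subset of the determinization is $F$, and $e \notin F$, so by induction on the length of the input word no reachable subset contains $e$. Therefore every reachable subset is a subset of $Q_n \setminus \{e\}$, and there are exactly $2^{n-1}$ of these. Since the minimal DFA of $L^R$ has at most as many states as the reachable part of this determinization, we conclude $\kappa(L^R) \le 2^{n-1}$.

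There is essentially no obstacle here beyond bookkeeping: the only points needing care are that the empty quotient really corresponds to a single sink state of the minimal DFA (uniqueness follows since two states with empty language would be equivalent) and that this state is non-final, so that it is absent from the start subset $F$ and, by the sink property read backwards, remains absent under every letter. These two facts together pin down the $2^{n-1}$ bound.
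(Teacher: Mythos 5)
Your proof is correct and follows essentially the same route as the paper: both reverse the transitions of the minimal DFA, observe that the empty state is a non-final sink and hence never occurs in any subset reachable from the initial subset $F$ in the determinization, and conclude that at most $2^{n-1}$ subsets arise. You simply spell out in more detail the induction that the paper compresses into the remark that the empty state is unreachable in the reversed NFA and may be deleted.
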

\begin{proof}
Let $\mathcal{D} = (Q_n, \Sigma, \delta, 0, F)$ be a minimal DFA recognizing $L$. Construct an NFA $\mathcal N$ recognizing the reverse of $L$ from $\mathcal{D}$ by reversing each transition, letting the initial state $0$ be the unique final state, and letting the final states $F$ be the initial states. Notice that the empty state of $\mathcal{D}$ is not reachable in $\mathcal N$ since the transitions have been reversed; thus we may delete it. Applying the subset construction to $\mathcal N$ yields a DFA whose states are subsets of $Q_{n-1}$, and hence has at most $2^{n-1}$ states.
\end{proof}

\begin{proposition}[Reverse]\label{prop:properreverse}
The reverse of $L_{n,k}(a,b,-,-,-,d_2, e)$ has complexity $2^{n-1}$ for $n \ge 3$ and $1 \le k \le n-2$.
\end{proposition}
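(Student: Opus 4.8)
The plan is to prove the lower bound; the upper bound $2^{n-1}$ is already supplied by Proposition~\ref{prop:properreversebound}, since every proper language has an empty quotient. First I would form the reversal NFA $\mathcal N$ from $\mathcal D_{n,k}(a,b,-,-,-,d_2,e)$ by reversing every transition, taking $0$ as the unique final state and $F_{n,k}$ as the set of initial states. Exactly as in Proposition~\ref{prop:properreversebound}, the empty state $n-1$ is unreachable in $\mathcal N$ and may be deleted, so the subset automaton $\mathcal D^R$ has states contained in $2^{Q_{n-1}}$, where $Q_{n-1}=\{0,\dots,n-2\}=E_{n,k}\cup F_{n,k}$. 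The key observation is that in $\mathcal D^R$ a subset $S$ goes under a letter $\ell$ to the preimage $S\ell^{-1}=\{q\in Q_{n-1}\mid q\ell\in S\}$ computed in $\mathcal D_{n,k}$, that the initial state of $\mathcal D^R$ is $F_{n,k}$, and that $S$ is a final state of $\mathcal D^R$ iff $0\in S$.

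Distinguishability is then automatic. Reading a word $u$ in $\mathcal D^R$ from a subset $S$ yields $\{q\mid \delta(q,u^R)\in S\}$, where $u^R$ denotes the reversal of $u$, so the resulting subset is final iff $\delta(0,u^R)\in S$. Since $\mathcal D_{n,k}$ is minimal, hence accessible, for any two distinct subsets $S\ne T$ I pick $q\in S\oplus T$ and a word $v$ with $\delta(0,v)=q$; then $u=v^R$ distinguishes $S$ from $T$. Thus everything reduces to showing that all $2^{n-1}$ subsets of $Q_{n-1}$ are reachable in $\mathcal D^R$.

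For reachability I would analyse the preimage actions of the four letters. The preimages of $a$ and $b$ are again permutations of $Q_{n-1}$ preserving $E_{n,k}$ and $F_{n,k}$ setwise, and since $a,b$ generate all such permutations by Lemma~\ref{lem:properpermutations}, their preimages generate the full group $S_{n-1-k}\times S_k$ acting block-wise; in particular I can permute within one block while fixing the other pointwise. The preimage of $e$ fixes every state other than $0$ and places $0$ in the set precisely when $n-1-k$ is in the set, so it injects the state $0$ of $E_{n,k}$ whenever $n-1-k$ lies in the current set. The preimage of $d_2$ fixes $E_{n,k}$ pointwise and acts on the $F_{n,k}$-block as a downward shift that discards the top state $n-2$, exactly as in the reversal of a right ideal. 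Starting from the initial set $F_{n,k}$ (the full $F$-block, empty in $E_{n,k}$), repeated application of $d_2^{-1}$ peels the $F$-block down to every bottom segment, and combining this with block-wise permutations of $F_{n,k}$ yields every subset of $F_{n,k}$, handling the case $S\cap E_{n,k}=\emptyset$. To reach a general subset $S=S_E\cup S_F$ with $S_E\subseteq E_{n,k}$ and $S_F\subseteq F_{n,k}$, I keep the $F$-block full (so $n-1-k$ is always present) and alternately apply $e^{-1}$ to inject $0$ and an $E_{n,k}$-only permutation to reposition the injected state, repeating $|S_E|$ times to install $S_E$ exactly without disturbing the full $F$-block; finally, since $d_2^{-1}$ and $F_{n,k}$-only permutations fix $E_{n,k}$ pointwise, I shrink the full $F$-block down to $S_F$ as before, leaving $S_E$ untouched. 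This produces $S$, so all $2^{n-1}$ subsets are reachable and the reverse has complexity exactly $2^{n-1}$.

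The main obstacle is getting the order of operations right in the reachability argument: the $F$-block preimage $d_2^{-1}$ can only shrink the block and cannot be reversed, while injecting states into $E_{n,k}$ via $e^{-1}$ requires $n-1-k$ to be present. The construction must therefore install all of $S_E$ while the $F$-block is still full and only afterward reduce the $F$-block, relying on the block-wise control over permutations provided by Lemma~\ref{lem:properpermutations} to prevent the two blocks from interfering.
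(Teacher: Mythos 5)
Your proposal is correct and follows essentially the same route as the paper: reachability is established in the same order (first build an arbitrary $E_{n,k}$-part via $e$ and block-wise permutations from Lemma~\ref{lem:properpermutations} while the $F_{n,k}$-block stays full, then shrink the $F_{n,k}$-block using $d_2$ and permutations, which fix $E_{n,k}$ pointwise). The only difference is cosmetic: you dispatch distinguishability by the general observation that in the determinization of the reversal of an accessible DFA any two distinct subsets are separated by the reversal of a word reaching a state in their symmetric difference, whereas the paper exhibits explicit distinguishing words in $u_1^*$ and $u_2^*e$.
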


\begin{proof}
Let $\mathcal{D}_{n,k} = (Q_n, \{a,b,d_2,e\}, \delta_{n,k}, 0, F_{n,k})$ denote the DFA $\mathcal{D}_{n,k}(a,b,-,-,-,d_2,e)$ of Definition~\ref{def:proper} and let $L_{n,k} = L(D_{n,k})$. As before, construct an NFA $\mathcal N$ recognizing the reverse of $L_{n,k}$.
Applying the subset construction to $\mathcal N$ yields a DFA $\mathcal{D}^R$ whose states are subsets of $Q_{n-1}$, with initial state $F_{n,k}$ and final states $\{U \subseteq Q_{n-1} | 0 \in U\}$. We show that $\mathcal{D}^R$ is minimal, and hence the reverse of $L_{n,k}$ has complexity $2^{n-1}$. We check reachability and distinguishability.

Recall from Lemma~\ref{lem:properpermutations} that $a$ and $b$ generate all permutations of $E_{n,k}$ and $F_{n,k}$ in $\mathcal{D}_{n,k}$ and, although the transitions are reversed in $\mathcal{D}^R$, they still generate all such permutations. Let $u_1, u_2 \in \{a,b\}^*$ be such that $u_1$ induces $(0, \dots, n-2-k)$ and $u_2$ induces $(n-1-k, \dots, n-2)$ in $\mathcal{D}^R$.

Consider a state $U = \{q_1, \dots, q_h, n-1-k, \dots, n-2\}$ where $0 \le q_1 < q_2 < \dots < q_h \le n-2-k$. 
If $h = 0$, then $U$ is the initial state. When $h \ge 1$, $\{q_2 - q_1, q_3 - q_1, \dots, q_h - q_1, n-1-k, \dots, n-2\}eu_1^{q_1} = U$. By induction, all such states are reachable.

Now we show that any state $U = \{q_1, \dots, q_h, p_1, \dots, p_i\}$ where $0 \le q_1 < q_2 < \dots < q_h \le n-2-k$ and $n-1-k \le p_1 < p_2 < \dots < p_i \le n-2$ is reachable. 
If $i = k$, then $U = \{q_1, \dots, q_h, n-1-k, \dots, n-2\}$ is reachable by the argument above. 
When $0 \le i < k$, choose $p \in F_{n,k} \setminus U$ and see that $U$ is reached from $U \cup \{p\}$ by $u_2^{n-1-p}d_2u_2^{p - (n-1-k)}$. By induction, every state is reachable.

To prove distinguishability, consider distinct states $U$ and $V$.
Choose $q \in U \oplus V$. If $q \in E_{n,k}$, then $U$ and $V$ are distinguished by $u_1^{n-1-k-q}$. When $q \in F_{n,k}$, they are distinguished by $u_2^{n -1-q} e$. Hence, the states are pairwise distinguishable and $\mathcal{D}^R$ is minimal.
\end{proof}

\begin{proposition}[Atomic Complexity]\label{prop:properatoms}
For each atom $A_S$ of $L_{n,k}(\Sigma)$, write $S = X_1 \cup X_2$, where $X_1 \subseteq E_{n,k}$ and $X_2 \subseteq F_{n,k}$.
Let $\overline{X_1} = E_{n,k}\setminus X_1$ and $\overline{X_2} = F_{n,k}\setminus X_2$.
If $X_2 \not= \emptyset$, then $\kappa(A_S) =$
$$1 + \sum_{x_1=0}^{|X_1|}\sum_{x_2=1}^{|X_1| + |X_2| - x_1}\sum_{y_1=0}^{|\overline{X_1}|}\sum_{y_2=0}^{|\overline{X_1}|+|\overline{X_2}| -y_1}\binom{n-1-k}{x_1}\binom{k}{x_2}\binom{n-1-k-x_1}{y_1}\binom{k-x_2}{y_2}.$$
If $X_1 \not= \emptyset$ and $X_2 = \emptyset$, then $\kappa(A_S) =$
\begin{multline*}
1 + \sum_{x_1=0}^{|X_1|}\sum_{x_2=0}^{|X_1| - x_1}\sum_{y_1=0}^{|\overline{X_1}|}\sum_{y_2=0}^{k}\binom{n-1-k}{x_1}\binom{k}{x_2}\binom{n-1-k-x_1}{y_1}\binom{k-x_2}{y_2} \\
-2^k\sum_{y=0}^{|\overline{X_1}|}\binom{n-1-k}{y}.
\end{multline*}
Otherwise, $S = \emptyset$ and $\kappa(A_S) = 2^{n-1}$.
The atomic complexity is maximal for $k$-proper languages.
\end{proposition}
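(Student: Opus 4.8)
The plan is to adapt the method used for the atoms of prefix-free languages (Theorem~\ref{thm:Prefix-free_witness}, item~4) to the two-block state structure $Q_n = E_{n,k}\cup F_{n,k}\cup\{n-1\}$. Every quotient of $A_S$ has the form $J_{X,Y}=\bigcap_{i\in X}K_i\cap\bigcap_{i\in Y}\overline{K_i}$, where $X=Sw$ and $Y=\overline S w$ for some word $w$ (writing the action of $w$ on the states of $\mathcal D_{n,k}$). Since $K_{n-1}=\emptyset$, such a quotient is non-empty only if $n-1\notin X$ and $X\cap Y=\emptyset$; moreover $n-1\in Y$ always, because $n-1\in\overline S$ (as $A_S$ is an atom) and $n-1$ is fixed by every letter. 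First I would bound $\kappa(A_S)$ by counting the attainable pairs $(X,Y)$, and then show by reachability and distinguishability that this bound is met.

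For the count, write $x_1=|X\cap E_{n,k}|$, $x_2=|X\cap F_{n,k}|$, $y_1=|Y\cap E_{n,k}|$, $y_2=|Y\cap F_{n,k}|$. By Proposition~\ref{prefixchar}, every letter maps $F_{n,k}$ into $F_{n,k}\cup\{n-1\}$ and fixes $n-1$, so only the states of $X_1$ can contribute to $X\cap E_{n,k}$ and only those of $\overline{X_1}$ can contribute to $Y\cap E_{n,k}$; hence $x_1\le|X_1|$ and $y_1\le|\overline{X_1}|$, while the image sizes obey $x_1+x_2\le|S|$ and $y_1+y_2\le|\overline S|-1$. Disjointness of $X$ and $Y$ forces the $F$-part of $Y$ to be chosen from $F_{n,k}\setminus(X\cap F_{n,k})$ and its $E$-part from $E_{n,k}\setminus(X\cap E_{n,k})$. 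These constraints hold for the minimal DFA of \emph{any} $k$-proper language, so they give the binomial products in the statement and bound the atomic complexity uniformly over the whole class. The three cases arise as follows. If $X_2\neq\emptyset$, then to keep $n-1\notin X$ every state of $X_2$ must map into $F_{n,k}$, forcing $x_2\ge1$; this yields the first formula, the $+1$ accounting for the empty quotient. If $S=\emptyset$, then $A_\emptyset$ is the negative atom and $X=\emptyset$ for every $w$, so its quotients are exactly the sets $\bigcap_{i\in Q_nw}\overline{K_i}$, which correspond to the reachable states of the reverse automaton and give $2^{n-1}$ as in Proposition~\ref{prop:properreverse}.

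The remaining case $X_1\neq\emptyset$, $X_2=\emptyset$ is where the main care is needed. Here $S=X_1\subseteq E_{n,k}$ may still map partly into $F_{n,k}$, so $x_2$ ranges from $0$; but a non-empty $S$ can never have empty image, so the slice $x_1=x_2=0$ (that is, $X=\emptyset$) is spurious. Summing that slice over all admissible $Y$ gives exactly $\binom{n-1-k}{0}\binom{k}{0}\bigl(\sum_{y_1=0}^{|\overline{X_1}|}\binom{n-1-k}{y_1}\bigr)\bigl(\sum_{y_2=0}^{k}\binom{k}{y_2}\bigr)=2^{k}\sum_{y=0}^{|\overline{X_1}|}\binom{n-1-k}{y}$, which is precisely the subtracted correction term; this is the reason the second formula carries a correction that the first does not. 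I would then verify that no other slice is spurious and that no two counted pairs yield the same non-empty quotient, so that the count is exact.

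For reachability I would invoke Proposition~\ref{prop:propersyntactic}: the transition semigroup of $\mathcal D_{n,k}$ contains every transformation $t$ with $F_{n,k}t\subseteq F_{n,k}\cup\{n-1\}$ and $(n-1)t=n-1$, so any pair $(X,Y)$ meeting the size and disjointness constraints above is realized by some $w\in\Sigma^*$, whence $J_{X,Y}$ is a genuine quotient. For distinguishability I would, as in the prefix-free proof, build for each admissible $X$ a word $w_X$ mapping $X$ into $F_{n,k}$ and $Q_n\setminus X$ into $E_{n,k}\cup\{n-1\}$, so that membership of $w_X$ controls the first coordinate, together with a symmetric witness controlling $Y$; since $a$ and $b$ generate all permutations of $E_{n,k}$ and of $F_{n,k}$ (Lemma~\ref{lem:properpermutations}) and $c_1,c_2,d_1,d_2,e$ supply the required merges and transfers, such witnesses exist separately on the two blocks. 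The main obstacle is the bookkeeping for the $X_2=\emptyset$ case, namely pinning down the single spurious slice and confirming it accounts for the entire discrepancy, together with checking that the $E$/$F$ split never forces an unexpected coincidence of quotients; once these are settled, matching the upper bound completes the proof and establishes maximality.
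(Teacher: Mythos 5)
Your proposal follows essentially the same route as the paper's proof: representing quotients of $A_S$ as $J_{X,Y}$, counting admissible pairs $(X,Y)$ via the parameters $x_1,x_2,y_1,y_2$ subject to the same size and disjointness constraints, deriving the three cases (with the $x_2\ge 1$ restriction when $X_2\neq\emptyset$, the subtracted $x_1=x_2=0$ slice when $X_2=\emptyset\neq X_1$, and the $2^{n-1}$ count for $S=\emptyset$), and then invoking the maximality of the syntactic semigroup from Proposition~\ref{prop:propersyntactic} for reachability and membership-witness words for distinguishability. The identification of the correction term $2^k\sum_{y=0}^{|\overline{X_1}|}\binom{n-1-k}{y}$ as exactly the spurious $X=\emptyset$ slice matches the paper's argument, so the plan is correct and not a genuinely different approach.
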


\begin{proof}
Let $L$ be a $k$-proper language with quotients $K_0, K_1, \dots, K_{n-1}$ where $K_0, \dots, K_{n-2-k}$ are non-final quotients, $K_{n-1-k}, \dots, K_{n-2}$ are final quotients, and $K_{n-1} = \emptyset$. For $S \subseteq Q_{n-1}$, we have $A_S = \bigcap_{i \in S} K_i \cap \bigcap_{i \in \overline{S}} \overline{K_i}$; note $n-1 \not\in S$ since $A_S$ must be non-empty.

The quotients are $w^{-1}A_S = \bigcap_{i \in S} w^{-1}K_i \cap \bigcap_{i \in \overline{S}} \overline{w^{-1}K_i}$. However $w^{-1}K_i$ is always another quotient $K_j$.
Thus $w^{-1}A_S$ has the form $J_{X,Y} = \bigcap_{i \in X} K_i \cap \bigcap_{i \in Y} \overline{K_i}$ where $X = \{i \mid K_i = w^{-1}K_j \text{ for some $j \in S$}\}$ and $Y = \{i \mid K_i = w^{-1}K_j \text{ for some $j \in \overline{S}$}\}$. For brevity, we write $S \xrightarrow{w} X$ and $\overline{S} \xrightarrow{w} Y$; this notation is in agreement with the action of $w$ on the states of $\mathcal{D}_{n,k}$ corresponding to $S$ and $\overline{S}$.

To establish the upper bound, we just count the number of possible distinct $J_{X,Y}$ for each value of $S$.
Notice $n-1 \in U$ and if $X \cap Y \not=\emptyset$ then $J_{X,Y}$ is the empty quotient.
Write $S = X_1 \cup X_2$ where $X_1 \subseteq E_{n,k}$ and $X_2 \subseteq F_{n,k}$, and let $\overline{X_1} = E_{n,k}\setminus X_1$ and $\overline{X_2} = F_{n,k}\setminus X_2$.
By Proposition~\ref{prefixchar} every word $w$ maps $X_1$ to a subset of $Q_n$ and $X_2$ to a subset of $F_{n,k}\cup \{n-1\}$.
Similarly, $w$ maps $\overline{X_1}$ to a subset of $Q_n$, $\overline{X_2}$ to a subset of $F_{n,k} \cup \{n-1\}$, and $n-1$ to itself.

The number of disjoint sets $X, Y \subseteq Q_n$ that are reached from $S$ and $\overline{S}$ respectively by some word $w$ is an upper bound on the number of non-empty quotients of $A_S$. Specifically, we require $n-1 \in Y$, $|X| \le |S|$, $|Y| \le |\overline{S}|$, $|X \cap E_{n,k}| \le |X_1|$, $|Y \cap E_{n,k}| \le |\overline{X_1}|$, and $X \cap Y= \emptyset$.
Thus we have the initial estimate
$$\sum_{x_1=0}^{|X_1|}\sum_{x_2=0}^{|X_1| - x_1}\sum_{y_1=0}^{|\overline{X_1}|}\sum_{y_2=0}^{k}\binom{n-1-k}{x_1}\binom{k}{x_2}\binom{n-1-k-x_1}{y_1}\binom{k-x_2}{y_2},$$
 where $x_1$ counts $|X \cap E_{n,k}|$, $x_2$ counts $|X \cap F_{n,k}|$, $y_1$ counts $|Y \cap E_{n,k}|$, and $y_2$ counts $|Y \cap F_{n,k}|$.
With some refinements, this estimate leads to the three cases in the statement.
 
Note if $S \not= \emptyset$ then $X \not= \emptyset$. Also, if $X_2 \not= \emptyset$, then any non-empty quotient $J_{X,Y}$ must have $X \cap F_{n,k} \not= \emptyset$ since $X_2$ cannot be mapped to $n-1$; this has the effect that $x_2$ cannot be 0 in our initial estimate.
Since the estimate only counts non-empty quotients, we add 1 to represent the empty quotient achieved when $X$ and $Y$ intersect, yielding the expression for the case $X_2 \not= \emptyset$.

If $X_1 \not= \emptyset$ and $X_2 = \emptyset$, then we cannot have $x_1 = x_2 = 0$ since that would correspond to $X = \emptyset$; the subtracted term in the statement is the value of the estimate when $x_1=x_2=0$. As before, add 1 for the empty quotient.

Finally, if $S=\emptyset$, then $X =\emptyset$ and $Y \subseteq Q_n$ with $n-1 \in Y$. There are $2^{n-1}$ possible values of $Y$, hence $\kappa(A_S) \le 2^{n-1}$. 
Since $X$ and $Y$ cannot intersect in this case, there is no need to add 1 for an empty quotient, as in the previous two cases.
Thus we have the three cases in the statement.

It remains to prove that $L_{n,k}(\Sigma)$ of Definition~\ref{def:proper} meets this upper bound.
Let the quotient $K_q$ of $L_{n,k}$ be the language accepted by state $q$ in $\mathcal{D}_{n,k}$.
We must show that every $J_{X,Y}$ can be reached from $A_S$ by some word in $\Sigma^*$, and that every non-empty $J_{X,Y}$ is distinct from $J_{X',Y'}$ whenever $(X,Y) \not= (X', Y')$.
By Proposition~\ref{prop:propersyntactic}, the syntactic semigroup is as large as possible for $k$-proper languages.
Hence, whenever $n-1 \in Y$, $|X| \le |S|$, $|Y| \le |\overline{S}|$, $|X \cap E_{n,k}| \le |X_1|$, and $|Y \cap E_{n,k}| \le |\overline{X_1}|$, there is a word $w \in \Sigma^*$ such that $S \xrightarrow{w} X$ and $\overline{S} \xrightarrow{w} Y$.
Thus each quotient $J_{X,Y}$ counted by the upper bound is reachable in $A_S$.

Consider $J_{X,Y}$ where $X \cap Y = \emptyset$ and $n-1 \in Y$. If $X \not= \emptyset$ then there exists $w\in \Sigma^*$ such that $X \xrightarrow{w} \{n-2\}$ and $Y \xrightarrow{w} \{n-1\}$; hence $w \in J_{X,Y}$ since $\varepsilon \in K_{n-2}$. If $X = \emptyset$ choose $w$ such that $Y \xrightarrow{w} \{n-1\}$; hence $w \in J_{X,Y}$. Thus $J_{X,Y}$ is non-empty.

Now take $J_{X',Y'}$ where $(X,Y) \not= (X', Y')$, $X' \cap Y' = \emptyset$ and $n-1 \in Y'$.
We must show that $J_{X,Y}$ and $J_{X',Y'}$ are distinct.
Note $w^{-1}J_{X,Y} = J_{Xw,Yw}$ where $X \xrightarrow{w} Xw$ and $Y \xrightarrow{w} Yw$.
If $r \in X' \setminus X$, then choose $w$ that maps $r \to n-1$ in $\mathcal{D}_{n,k}$; $J_{Xw,Yw}$ is non-empty, since $Xw \cap Yw = \emptyset$, and $J_{X'w,Y'w} = \emptyset$ since $n-1 \in X'w$.
Similarly, if $X = X'$ and $r \in Y' \setminus Y$, then choose $w$ that maps $X \cup \{r\} \to \{n-2\}$ and $Q_n \setminus (X \cup \{r\}) \to \{n-1\}$. Then $J_{Xw,Yw} = J_{\{n-2\}, \{n-1\}}$ is non-empty and $J_{X'w,Y'w} = J_{\{n-2\}, \{n-2, n-1\}} = \emptyset$. Finally, if $X = X' = \emptyset$ and $r \in Y' \setminus Y$, then distinguish $J_{X,Y}$ and $J_{X',Y'}$ by a word that sends $r \to n-2$ and $Q_n\setminus\{r\} \to \{n-1\}$. Hence, $J_{X,Y}$ and $J_{X',Y'}$ are distinct in all cases. Therefore, the quotients of $A_S$ counted in the upper bound are pairwise distinct and $L_{n,k}$ has maximal atomic complexity.
\end{proof}

\begin{proposition}[Star Bound]\label{prop:properstarbound}
Suppose $L$ is a language with $n$ quotients including an empty quotient. Let $k \ge 1$ be the number of final quotients. Then $\kappa(L^*) \le 2^{n-2} + 2^{n-2-k} + 1$. This bound also holds for all prefix-convex languages when $n \ge 3$.
\end{proposition}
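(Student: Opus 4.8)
The plan is to apply the standard $\varepsilon$-NFA construction for $L^{*}$, perform the subset construction, and count the reachable subsets while exploiting two economies: the empty quotient and the $\varepsilon$-transitions. Let $\mathcal{D}=(Q_n,\Sigma,\delta,0,F)$ be a minimal DFA for $L$ with empty state $\partial$ and $|F|=k$. I would form an $\varepsilon$-NFA $\mathcal{N}$ for $L^{*}$ by adjoining a new final initial state $0'$ with the same outgoing transitions as $0$, and an $\varepsilon$-transition from each $f\in F$ to $0$. After the subset construction I would discard $\partial$ from every reachable subset, since it contributes nothing; so every reachable state is $\{0'\}$ or a subset $T$ of the $n-1$ live states.

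The crucial structural observation is that the $\varepsilon$-transitions force every reachable $T$ to satisfy $T\cap F=\emptyset$ or $0\in T$: the instant a run enters a final state, the $\varepsilon$-closure injects $0$. Counting the subsets of the $n-1$ live states that obey this constraint then finishes the primary case $0\notin F$, which by Proposition~\ref{convexclasses} covers the proper and prefix-free witnesses, where the initial state is non-final. There are $2^{n-2}$ subsets containing $0$, and $2^{n-2-k}$ subsets that avoid both $0$ and $F$ (these are the subsets of the $n-2-k$ live states outside $\{0\}\cup F$, and they include the empty set). Adjoining $\{0'\}$ gives the bound $2^{n-2}+2^{n-2-k}+1$.

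To extend the bound to \emph{all} prefix-convex languages I would appeal to Lemma~\ref{idealorempty}: such a language is either a right ideal or has an empty quotient. If $L$ is a right ideal then $L^{*}=L\cup\{\varepsilon\}$, so $\kappa(L^{*})\le n+1$, and since a right ideal has $k=1$ one checks $n+1\le 2^{n-2}+2^{n-3}+1$ for all $n\ge3$. If $L$ has an empty quotient and $0\notin F$, the count above applies verbatim. The remaining case, $0\in F$, is precisely $L$ being prefix-closed (Proposition~\ref{convexclasses}); then by Proposition~\ref{prefixchar} every live state reachable from $0$ is final, so $k=n-1$, the term $2^{n-2-k}$ is at most $1$, and the sharper prefix-closed bound $\kappa(L^{*})\le 2^{n-2}+1$ from Theorem~\ref{thm:closed} lies below the stated value.

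I expect the case $0\in F$ to be the main obstacle. For a general language with an empty quotient and a final initial state, the constraint count by itself yields only $2^{n-2}+2^{n-1-k}$, because a restart can inject $0$ and then $0$ may drift off a final state, allowing $F$-avoiding subsets to grow to size $n-1-k$; this exceeds the stated value when $k<n-2$. My strategy is to avoid closing that gap by brute force and instead isolate $0\in F$ as a separate case, dispatching it through the prefix-closed characterisation, where a stronger bound is already available. This keeps the counting argument clean in the case $0\notin F$, which is the one that actually arises for the proper and prefix-free streams of interest.
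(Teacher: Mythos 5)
Your construction and count follow essentially the same route as the paper's proof: build the $\varepsilon$-NFA with a new final initial state $0'$ and $\varepsilon$-transitions from $F$ to $0$, absorb the empty state via the equivalence $U \equiv U \cup \{\partial\}$, observe that every reachable subset other than $\{0'\}$ must satisfy ``$T \cap F = \emptyset$ or $0 \in T$,'' and count $1 + 2^{n-2} + 2^{n-2-k}$. Where you diverge is in explicitly isolating the case $0 \in F$, and this is a refinement rather than a detour: the paper counts the reachable final subsets as $2^{n-2} - 2^{n-2-k}$ (subsets containing $0$, minus those containing $0$ and avoiding $F$), a formula that is only valid when the initial quotient is non-final, so its argument for the first sentence silently assumes $\varepsilon \notin L$ --- exactly the soft spot you identified. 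Your dispatch of $0 \in F$ through Proposition~\ref{convexclasses} (a prefix-convex language with final initial quotient is prefix-closed, hence $k = n-1$ and the known bound $2^{n-2}+1$ applies) cleanly covers every prefix-convex language, and your right-ideal branch via Lemma~\ref{idealorempty} matches the paper's. The one point neither argument settles is the proposition's opening claim for an arbitrary, non-prefix-convex language with an empty quotient and $\varepsilon \in L$: there the raw reachability count is $2^{n-2} + 2^{n-1-k}$ (even after merging $\{0'\}$ with $\{0\}$), which exceeds the stated bound when $k < n-2$, and closing the gap would require exhibiting forced equivalences among final subsets (e.g.\ between $U$ and $U \cup \{f\}$ for $f \in F$ when $0 \in U$). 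You flag this honestly; the paper does not address it at all. Since the proposition is only ever applied to the $k$-proper witnesses, whose initial quotient is non-final, this residual case does not affect anything downstream, and your proof is at least as complete as the paper's.
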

\begin{proof}
If $n=1$, then $L = \emptyset$ has an empty quotient, and obeys the upper bound. Assume $n \ge 2$.  Let $\mathcal D = (Q_n, \{a,b, -, -,d_1,d_2,e\}, \delta, 0, F)$ be a minimal DFA recognizing $L$. Since $L$ has an empty quotient, let $n-1$ be the empty state of $\mathcal{D}$. To obtain an $\varepsilon$-NFA for $L^*$, we add a new initial state $0'$ which is final and has the same transitions as $0$. We then add an $\varepsilon$-transition from every state in $F$ to $0$. Applying the subset construction to this $\varepsilon$-NFA yields a DFA $\mathcal{D}' = (Q', \{a,b,d_1,d_2,e\}, \delta', \{0'\}, F')$ recognizing $L^*$, in which the states are non-empty subsets of $Q_n \cup \{0'\}$.

Many of these states are unreachable or indistinguishable from other states.
First, since there is no transition in the $\varepsilon$-NFA to $0'$, the only reachable state in $Q'$ containing $0'$ is $\{0'\}$.
Second, because of the $\varepsilon$-transitions, any reachable final state $U \not= \{0'\}$ must have $0 \in U$.

Finally, for any $U \in Q'$, we have $U \cup \{n-1\} \in F'$ if and only if $U \in F'$. Furthermore, $\delta'(U \cup \{n-1\}, w) = \delta'(U, w) \cup \{n-1\}$ for all $w \in \Sigma^*$. Therefore, the states $U$ and $U \cup \{n-1\}$ are equivalent in $D'$. This allows us to deal with subsets of $Q_{n-1}$ instead of $Q_n$. Combined with the first two reductions, this proves that $\mathcal D'$ is equivalent to a reduced DFA with the states $\{\{0'\}\} \cup \{U \subseteq Q_{n-1} | U \cap F = \emptyset\} \cup \{U \subseteq Q_{n-1} | \text{$0 \in U$ and $U \cap F \not= \emptyset$}\}$. This DFA has $1 + 2^{n-k-1} + (2^{n-2} - 2^{n-k-2}) = 2^{n-2} + 2^{n-2-k} + 1$ states. Thus, $\kappa(L^*) \le 2^{n-2} + 2^{n-2-k} + 1$.

This bound also applies when $L$ is any prefix-convex language and $n \ge 3$: by Lemma~\ref{idealorempty}, $L$ is either a right ideal or has an empty state. If $L$ is a right ideal, then $\kappa(L^*) \le n+1$, which is at most $2^{n-2}+2^{n-2-k}+1$ for $n \ge 3$.
\end{proof}

\begin{proposition}[Star]\label{prop:properstar}
The star of $L_{n,k}(a,b,-,-,d_1,d_2, e)$ has complexity $2^{n-2} + 2^{n-2-k} + 1$ for $n \ge 3$ and $1 \le k \le n-2$.
\end{proposition}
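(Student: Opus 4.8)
The plan is to prove the matching lower bound for the upper bound of Proposition~\ref{prop:properstarbound} by exhibiting the reduced subset automaton of $L_{n,k}(a,b,-,-,d_1,d_2,e)^*$ and showing that all of its $2^{n-2}+2^{n-2-k}+1$ candidate states are reachable and pairwise distinguishable. I would construct the $\varepsilon$-NFA exactly as in that proposition: adjoin a new final initial state $0'$ duplicating the transitions of $0$, and add an $\varepsilon$-transition from each state of $F_{n,k}$ to $0$. After the subset construction and the reductions of Proposition~\ref{prop:properstarbound}, the surviving states are $\{0'\}$, the $2^{n-1-k}$ subsets of $E_{n,k}$ (all non-accepting), and the $2^{n-2}-2^{n-2-k}$ sets $U$ with $0\in U$ and $U\cap F_{n,k}\neq\emptyset$ (all accepting). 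A set is accepting precisely when it equals $\{0'\}$ or meets $F_{n,k}$.

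For reachability, the key observation -- and the crux of the whole argument -- is that the only way to increase the cardinality of a reachable set is to move some element into $F_{n,k}$ and trigger the $\varepsilon$-transition, which re-injects state $0$. Thus a single element sitting in $F_{n,k}$ acts as an \emph{engine} that continually restores $0$, so that repeated application of the $E$-shift $d_1$ (which fixes $F_{n,k}$) turns $\{0,\,n-1-k\}$ into $\{0,1,\dots,h-1,\,n-1-k\}$ for any $h\le n-1-k$. First I would reach every singleton $\{q\}$, $q\in E_{n,k}$, from $\{0'\}$ using the permutations of $E_{n,k}$ supplied by Lemma~\ref{lem:properpermutations}; then, starting from $\{0\}\xrightarrow{e}\{0,n-1-k\}$, I would grow the $E$-part with $d_1$, clear the engine with $d_2^{\,k}$ (which drops every state of $F_{n,k}$ to $n-1$ while $0$ survives), and finally apply $E$-permutations to obtain every subset of $E_{n,k}$. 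A symmetric construction reaches every accepting set $U$: I would build the $E$-part first, then grow the $F$-part using the ``add $n-1-k$'' letter $e$ together with the $F$-shift $d_2$ (with $0$ now permanently restored), and shape $U\cap F_{n,k}$ using that $a,b$ realize \emph{independent} permutations of $E_{n,k}$ and $F_{n,k}$ by Lemma~\ref{lem:properpermutations}, so the $F$-part can be arranged while the $E$-part stays intact.

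For distinguishability I would argue by cases. Sets of differing acceptance are separated by $\varepsilon$. Two subsets $U,V\subseteq E_{n,k}$ with $q\in U\setminus V$ are separated by a permutation sending $q\to0$ followed by $e$, which pushes the image of $q$ into $F_{n,k}$ (making $U$ accepting) while the image of $V$ stays inside $E_{n,k}$. For two accepting sets I would first apply $d_2^{\,k}$ to clear both $F$-parts: if the $E$-parts then differ, this reduces to the previous case, and if the $E$-parts agree I isolate a distinguishing $p\in F_{n,k}$ by an $F$-permutation sending $p\to n-1-k$ and the remaining $F$-states to the top positions $\{n-k,\dots,n-2\}$, after which $d_2^{\,k-1}$ drops everything except the image of $p$. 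The one genuinely delicate point is $\{0'\}$: it is accepting and mimics $\{0\}$ on every non-empty word, so $\varepsilon$ cannot separate it from the other accepting sets; here the word $d_1$ works, because it fixes $F_{n,k}$ (keeping any accepting set accepting) while sending $\{0'\}$ into $E_{n,k}$, which is non-accepting.

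Combining the reachability and distinguishability counts then yields complexity exactly $2^{n-2}+2^{n-2-k}+1$, and the maximal value $2^{n-2}+2^{n-3}+1$ at $k=1$ follows since the expression decreases in $k$. I expect the main obstacle to be conceptual rather than computational: recognizing and exploiting the $F_{n,k}$-element ``engine'' that lets the otherwise deterministic, size-non-increasing maps on $E_{n,k}$ build up arbitrary subsets, and correctly sequencing the shift letters $d_1,d_2$ with the connector $e$ so that each target subset is produced without prematurely losing the states that still need to be placed.
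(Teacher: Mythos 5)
Your argument is correct and follows essentially the same route as the paper's: the same $\varepsilon$-NFA and reduced state set carried over from Proposition~\ref{prop:properstarbound}, the same mechanism of an $F_{n,k}$-element acting as an engine that re-injects $0$ while $d_1$ (resp.\ $e$ and $d_2$) builds up the $E$-part (resp.\ the $F$-part) --- the paper reaches all of $E_{n,k}$ first and then deletes states rather than growing an interval and permuting, but that is cosmetic --- and essentially the same distinguishing words (send $q$ to $0$ and apply $e$; clear $F_{n,k}$ with $d_2^k$; isolate an $F$-element and apply $d_2^{k-1}$; separate $\{0'\}$ by a word fixing $F_{n,k}$, where the paper uses $u_1$ and you use $d_1$). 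The only detail to supply in a full write-up is the empty subset of $E_{n,k}$, which is among the $2^{n-1-k}$ non-final states being counted but is not produced by your grow-an-interval-then-permute scheme; it is reached directly, e.g.\ by $\{0\}\xrightarrow{d_1^{\,n-1-k}}\emptyset$.
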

\begin{proof}
Let $\mathcal{D}_{n,k} = (Q_n, \Sigma, \delta_{n,k}, 0, F_{n,k})$ denote the DFA $\mathcal{D}_{n,k}(a,b,-,-,d_1,d_2,e)$ of Definition~\ref{def:proper} and let $L_{n,k} = L(D_{n,k})$. We apply the same construction and reduction as in Proposition~\ref{prop:properstarbound} to obtain a DFA $\mathcal{D}_{n,k}'$ recognizing $L_{n,k}^*$ with states $Q' = \{\{0'\}\} \cup \{U \subseteq E_{n,k}\} \cup \{U \subseteq Q_{n-1} | 0 \in U \text{ and } U \cap F_{n,k} \not= \emptyset\}$. It is clear that $|Q'| = 2^{n-2} + 2^{n-2-k} + 1$. We must prove that $\mathcal D_{n,k}'$ is minimal. To do so, we show that every state in $Q'$ is reachable and that all distinct states are distinguishable.

By Lemma~\ref{lem:properpermutations}, $a$ and $b$ generate all permutations of $E_{n,k}$ and $F_{n,k}$ in $\mathcal{D}_{n,k}$. Let $u_1, u_2 \in \{a,b\}^*$ be such that $u_1$ induces $(0, \dots, n-2-k)$ and $u_2$ induces $(n-1-k, \dots, n-2)$ in $\mathcal{D}_{n,k}$.

For reachability, we consider three cases.
\begin{enumerate}
\item State $\{0'\}$ is reachable by $\varepsilon$.
\item Let $U \subseteq E_{n,k}$. For any $q \in E_{n,k}$, we can reach $U \setminus \{q\}$ by $u_1^{n-2-k-q}d_1u_1^q$; hence if $U$ is reachable, then every subset of $U$ is reachable. Observe that state $E_{n,k}$ is reachable by $eu_1^{n-2-k}d_2^k$, and we can reach any subset of this state. Therefore, all non-final states are reachable.
\item If $U \cap F_{n,k} \not= \emptyset$, then $U = \{0, q_1, q_2, \dots, q_h, r_1, \dots, r_i\}$ where $0 < q_1 < \dots < q_h \le n-2-k$ and $n-1-k \le  r_1 < \dots < r_i < n-1$ and $i \ge 1$. 
We prove that $U$ is reachable by induction on $i$. If $i = 0$, then $U$ is reachable by {\bf 2}.
For any $i \ge 1$, we can reach $U$ from $\{0, q_1, \dots, q_h, r_2 - (r_1 - (n-1-k)), \dots, r_i - (r_1 - (n-1-k)) \}$ by $eu_2^{r_1 - (n-1-k)}$. Therefore, all states of this form are reachable.
\end{enumerate}

Now we show that the states are pairwise distinguishable.
\begin{enumerate}
\item The initial state $\{0'\}$ is distinguishable from any other final state $U$ since $\{0'\}u_1$ is non-final and $Uu_1$ is final.
\item Suppose $U$ and $V$ are distinct subsets of $E_{n,k}$. There is some $q \in U \oplus V$. We distinguish $U$ and $V$ by $u_1^{n-1-k-q}e$.
\item Suppose $U$ and $V$ are distinct and final and neither one is $\{0'\}$. 
There is some $q \in U \oplus V$. If $q \in E_{n,k}$, then $Ud_2^k = U \setminus F_{n,k}$ and $Vd_2^k = V \setminus F_{n,k}$ are distinct, non-final states as in {\bf 2}. Otherwise, $q \in F_{n,k}$ and we distinguish $U$ and $V$ by $u_2^{n-1-q}d_2^{k-1}$.
\end{enumerate}
Therefore, $\mathcal D_{n,k}'$ is minimal and $\kappa(L_{n,k}^*) = 2^{n-2} + 2^{n-2-k} + 1$.
\end{proof}

\begin{proposition}[Product Bound]\label{prop:properproductbound}
Let $L'$ and $L$ be proper prefix-convex languages with $m \ge 2$ and $n \ge 2$ quotients respectively. Let $j$ be the number of final quotients in $L'$ and $k$ be the number of final quotients in $L$. The product of $L'$ and $L$ has complexity at most $m - 1- j + j2^{n-2}+2^{n-1}$.
\end{proposition}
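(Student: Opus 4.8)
The plan is to use the standard product construction and to bound the number of reachable states of the subset automaton, after merging states that are obviously equivalent, exploiting the rigid structure of proper prefix-convex DFAs given by Proposition~\ref{prefixchar}. Let $\mathcal{D}' = (Q'_m, \Sigma, \delta', 0', F')$ and $\mathcal{D} = (Q_n, \Sigma, \delta, 0, F)$ be minimal DFAs for $L'$ and $L$. Since $L'$ and $L$ are proper, neither is a right ideal, so by Lemma~\ref{idealorempty} each has an empty quotient; thus $\mathcal{D}'$ has $m-1-j$ non-final non-empty states, $j$ final states, and one empty state $e'$, while $\mathcal{D}$ has $n-1-k$ non-final non-empty states, $k$ final states, and one empty state $e$. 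I would form the $\varepsilon$-NFA $\mathcal{N}$ for $L'L$ by making the states of $F'$ non-final and adding an $\varepsilon$-transition from each state of $F'$ to $0$, then apply the subset construction. As $\mathcal{D}'$ is deterministic and the $\varepsilon$-transitions leave $\mathcal{D}'$, every reachable subset has the form $\{p'\}\cup S$ with a unique $p' \in Q'_m$ and $S \subseteq Q_n$; moreover $0 \in S$ whenever $p' \in F'$, since the $\varepsilon$-closure at a final state of $\mathcal{D}'$ adds $0$.

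First I would discard the empty state $e$ of $\mathcal{D}$: it is absorbing and never accepting, so $\{p'\}\cup S$ and $\{p'\}\cup (S\cup\{e\})$ act identically and accept the same language. Hence every reachable state is equivalent to one whose $S$ lies among the $n-1$ non-empty states of $\mathcal{D}$. I then split the reachable states into three classes according to $p'$: (i) $p' = e'$; (ii) $p'$ non-final and non-empty; (iii) $p' \in F'$.

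The crux is class (ii), and this is the step I expect to be the main obstacle to state cleanly. The component $S$ is populated only through the $\varepsilon$-transitions, which fire exactly when the $\mathcal{D}'$-thread sits on a final state; and since $L'$ is proper it is not prefix-closed, so by Proposition~\ref{convexclasses} the initial state $0'$ is non-final and $S = \emptyset$ at the start. Now if the current state $p'$ is non-final and non-empty, then by Proposition~\ref{prefixchar} no state on the $\mathcal{D}'$-path from $0'$ to $p'$ can have been final, because a final state can only be followed by final or empty states. Consequently the $\varepsilon$-transition never fired, $S = \emptyset$, and class (ii) contributes only the $m-1-j$ singletons $\{p'\}$.

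Finally I would count the remaining classes. In class (iii) we have $0 \in S$ with $S$ ranging over subsets of the $n-1$ non-empty states of $\mathcal{D}$ that contain $0$, giving at most $2^{n-2}$ values of $S$ for each of the $j$ final states, hence at most $j\,2^{n-2}$ states. In class (i) the state $e'$ is absorbing and non-accepting, so $\{e'\}\cup S$ evolves only in its $S$-coordinate and $S$ ranges over all subsets of the $n-1$ non-empty states of $\mathcal{D}$, giving at most $2^{n-1}$ states. Summing the three bounds yields at most $(m-1-j) + j\,2^{n-2} + 2^{n-1}$ reachable states up to equivalence, which bounds the complexity by the claimed value; since these are upper bounds, any further coincidences between the classes only decrease the total.
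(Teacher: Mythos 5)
Your proposal is correct and follows essentially the same route as the paper: the standard product construction, Proposition~\ref{prefixchar} to force $S=\emptyset$ when the $\mathcal{D}'$-component is non-final and non-empty, $0\in S$ when it is final, and merging states that differ only in the empty state of $\mathcal{D}$ to halve the counts for the final and empty classes. The only cosmetic difference is that you quotient out the empty state of $\mathcal{D}$ before counting, whereas the paper first counts $m-1-j+j2^{n-1}+2^n$ and then merges; the resulting bound is identical.
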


\begin{proof}
Let $\mathcal{D}'$ and $\mathcal{D}$ be minimal DFAs for $L'$ and $L$ respectively. Let $Q_n$ be the state set of $\mathcal{D}$. To avoid confusing states in the product, take $\mathcal{D}'$ to have the states $Q_m' = \{0', 1', \dots, (m-1)'\}$.

Applying the standard DFA construction for the product $L'L$, we obtain states of the form $\{p'\} \cup S$ for $p' \in Q_m'$ and $S \subseteq Q_n$. We prove the upper bound by determining reachability and distinguishability for these states. By Lemma~\ref{idealorempty}, each of $L'$ and $L$ has an empty quotient.

Assume $(m-1)'$ is the empty state in $\mathcal{D}'$ and $n-1$ is the empty state in $\mathcal{D}$. Take $\{0', \dots, (m-2-j)'\}$ to be the non-final, non-empty states and $\{(m-1-j)', \dots, (m-2)'\}$ to be the final states in $\mathcal{D}'$. By Proposition~\ref{prefixchar}, no non-final, non-empty state can be reached from a final state in $\mathcal{D}'$. Therefore, every reachable state in the product is either $\{p'\}$ for $0 \le p \le m-2-j$, $\{p',0\} \cup S$ for $m-1-j \le p \le m-2$ and $S \subseteq Q_{n} \setminus \{0\}$, or  $\{(m-1)'\} \cup S$ for $S \subseteq Q_{n}$.
This gives the preliminary estimate of $m-1-j + j2^{n-1} + 2^n$ states. For $m-1-j \le p \le m-1$, $\{p'\} \cup S$ is equivalent to $\{p', n-1\} \cup S$ for any $S \subseteq Q_n$. This removes $j2^{n-2} + 2^{n-1}$ states from our estimate, leaving $m-1-j + j2^{n-2} + 2^{n-1}$ states.
\end{proof}

\begin{proposition}[Product]\label{prop:properproduct}
For $m,n \ge 3$, $1 \le j \le m-2$, and $1 \le k \le n-2$, the product of $L'_{m,j}(a,b,c_1,-,d_1, d_2, e)$ and $L_{n,k}(a,d_2, c_1, -, d_1, b, e)$ has complexity $m - 1 - j + j 2^{n-2} + 2^{n-1}$.
\end{proposition}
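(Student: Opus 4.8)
The matching upper bound is already established in Proposition~\ref{prop:properproductbound}, so the plan is to verify that the two chosen dialects achieve it: that is, to exhibit $m-1-j+j2^{n-2}+2^{n-1}$ reachable and pairwise distinguishable states in the subset automaton of the standard product $\varepsilon$-NFA. I would first fix notation as in Proposition~\ref{prop:properproductbound}: let $\mathcal{D}'_{m,j}$ have non-final non-empty states $\{0',\dots,(m-2-j)'\}$, final states $\{(m-1-j)',\dots,(m-2)'\}$, and empty state $(m-1)'$, and let the $\varepsilon$-NFA add transitions from each final state of $\mathcal{D}'$ to state $0$ of $\mathcal{D}_{n,k}$. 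A product subset is accepting exactly when its $\mathcal{D}_{n,k}$-component meets $F_{n,k}$. The surviving states fall into three bands: the singletons $\{p'\}$ with $0\le p\le m-2-j$; the sets $\{p',0\}\cup S$ with $p'$ final and $S\subseteq Q_n\setminus\{0,n-1\}$; and the sets $\{(m-1)'\}\cup S$ with $S\subseteq Q_n\setminus\{n-1\}$, using throughout the equivalence $\{p'\}\cup T \equiv \{p',n-1\}\cup T$ noted in Proposition~\ref{prop:properproductbound}.

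The crucial feature of the dialect $L_{n,k}(a,d_2,c_1,-,d_1,b,e)$ is the interchange of the roles of $b$ and $d_2$ on $\mathcal{D}_{n,k}$: the combined letters $\{a,b,c_1,d_1,e\}$ all keep a final state of $\mathcal{D}'$ final (so the $\varepsilon$-transition keeps re-injecting $0$), while on $\mathcal{D}_{n,k}$ they induce respectively the transformations that $a$, $d_2$, $c_1$, $d_1$, and $e$ induce in the master DFA. For reachability I would proceed band by band. The band-1 singletons are reached by words over $\{a,b\}$, which by Lemma~\ref{lem:properpermutations} permute $E'_{m,j}$ while the $\mathcal{D}_{n,k}$-component stays empty. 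Applying $e$ at $0'$ enters band~2 at $\{(m-1-j)',0\}$; from there I would build an arbitrary $S\subseteq Q_n\setminus\{0,n-1\}$ by the right-ideal subset-growing technique of Theorem~\ref{thm:rightideals}: use $d_1$-shifts and $c_1$-merges (master $d_1,c_1$) to assemble any subset of $E_{n,k}$ together with the perpetually present $0$, use $e$ to push a token from $E_{n,k}$ into $F_{n,k}$, and use $b$-shifts (master $d_2$) to spread tokens through $F_{n,k}$, each shift losing only the top state $n-2$, which is harmless under the $n-1$-equivalence. Finally, band~3 is reached by driving $p'$ to $(m-1)'$ with the combined letter $d_2$ (which on $\mathcal{D}'$ kills a final state and on $\mathcal{D}_{n,k}$ induces the permutation master~$b$); once the $\mathcal{D}'$-component is dead, all six combined letters act freely on $\mathcal{D}_{n,k}$ and induce the generators $a,b,c_1,d_1,d_2,e$, so starting from the band-2 sets (all subsets of $Q_n\setminus\{n-1\}$ containing $0$) the permutations together with the deleting maps $d_1,d_2$ reach every $S\subseteq Q_n\setminus\{n-1\}$, including $\emptyset$.

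For distinguishability I would use the acceptance criterion ``the $\mathcal{D}_{n,k}$-component meets $F_{n,k}$.'' Two states whose $\mathcal{D}_{n,k}$-components differ in some $q$ are separated by a word that carries $q$ into $F_{n,k}$ (to $n-2$, say) while driving the other discrepant states to $n-1$, which exists because $\mathcal{D}_{n,k}$ realises all proper transformations (Proposition~\ref{prop:propersyntactic}); the only care needed is to prevent the two $\mathcal{D}'$-components from silently re-injecting conflicting copies of $0$, which is arranged by first permuting within a fixed band. Two states with equal $\mathcal{D}_{n,k}$-component but different $\mathcal{D}'$-component are separated as in the regular and right-ideal products: a word that reaches a final state of $\mathcal{D}'$ from one component but not the other re-injects $0$ asymmetrically, and a subsequent test word then yields differing intersections with $F_{n,k}$.

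I expect the main obstacle to be the band-2 reachability. The constant re-injection of $0$ forces every subset in this band to contain $0$, and the letters available \emph{without} killing $p'$ induce only the restricted family $\{a,d_2,c_1,d_1,e\}$ on $\mathcal{D}_{n,k}$; one must check that this family, together with the free copy of $0$, still builds every subset of the two-part state set $E_{n,k}\cup F_{n,k}$ up to the $n-1$-equivalence. The two-sided growing argument---independently assembling the $E_{n,k}$- and $F_{n,k}$-parts and transferring a token across with $e$---is precisely what the $b\leftrightarrow d_2$ relabelling in the dialect is engineered to support, and carrying it out with explicit witnessing words is the most delicate bookkeeping in the proof.
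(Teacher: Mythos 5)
Your setup and your reachability plan follow essentially the same route as the paper: the same three bands of subsets, the same count, the observation that the $b\leftrightarrow d_2$ swap lets every combined letter except $d_2$ keep the first component final (so $0$ is perpetually re-injected), and the same subset-growing technique ($e$ to seed a token at $n-1-k$, the letter $b$ acting as master $d_2$ to spread it through $F_{n,k}$, and $d_1,c_1$ to assemble the $E_{n,k}$-part). The order in which you assemble the two parts differs from the paper's but is immaterial, since the letters used for each part fix the other part. Up to here the proposal is sound.

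The gap is in distinguishability. First, your key justification --- that a separating word exists ``because $\mathcal{D}_{n,k}$ realises all proper transformations (Proposition~\ref{prop:propersyntactic})'' --- is false for the operand actually used: the dialect $L_{n,k}(a,d_2,c_1,-,d_1,b,e)$ has only six letters and omits master $c_2$, and Proposition~\ref{prop:propersyntactic} shows seven letters are necessary for the full semigroup; in particular no transformation merging two states of $F_{n,k}$ into a third state of $F_{n,k}$ is available. The needed separating words must therefore be built explicitly from the restricted generating set, which is what the paper does. Second, and more seriously, your remedy for the coupling between the two components (``prevent the two $\mathcal{D}'$-components from silently re-injecting conflicting copies of $0$ \dots by first permuting within a fixed band'') does not address the hardest case, namely two band-2 states $\{p',0\}\cup S$ and $\{p',0\}\cup T$ with the \emph{same} final first component. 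There the re-injection is perfectly symmetric, so it cannot be exploited, and the discrepant state $r\in S\oplus T$ must be carried into $F_{n,k}$ without being destroyed by the very letters available: if $r=1$ and $n-1-k=2$, applying $d_1$ sends $1$ to $n-1$ (deleting the discrepancy) while the re-injected $0$ is pushed to $1$, so both states collapse to the same set, and $c_1$ merges $1$ into the re-injected $0$ with the same effect. The paper needs a separate argument here (move $p'$ to $(m-2)'$ by a word in $\{a,b\}^*$ that fixes $E_{n,k}$, then use $d_2$ to kill the first component before testing), and nothing in your sketch produces it. Until this case analysis is carried out, the lower bound is not established.
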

\begin{proof}
Let $\mathcal{D}_{m,j}'$ and $\mathcal{D}_{n,k}$ be the DFAs of Definition~\ref{def:proper} for languages $L'_{m,j}(a,b,c_1,-, d_1, d_2, e)$ and $L_{n,k}(a,d_2, c_1, -, d_1, b, e)$ respectively. As before, take $\mathcal{D}_{m,j}'$ to have the states $Q_m' = \{0', 1', \dots, (m-1)'\}$ and let $E_{n,k}' = \{0', \dots, (m-2-j)'\}$. Using the standard construction of the $\varepsilon$-NFA $\mathcal N$ for the product, we delete the empty state $n-1$, change the final states of $\mathcal{D}_{m,j}'$ to non-final states, and add $\varepsilon$-transitions from each final state of $\mathcal{D}_{m,j}'$ to the initial state of $\mathcal{D}_{n,k}$.

The subset construction on $\mathcal N$ yields states of the form $\{p'\} \cup S$, where $p' \in Q_m'$ and $S \subseteq Q_{n-1}$. However, some of these sets are not reachable in the product: if $p' \in E_{m,j}'$ then we must have $S = \emptyset$, and if $p' \in F_{m,j}'$ then $0 \in S$ because of the $\varepsilon$-transitions in $\mathcal N$.

Thus, we have the states $\{p'\}$ for $p' \in E_{m,j}'$, $\{p', 0\} \cup S$ for $p' \in F_{m,j}'$ and $S \subseteq Q_{n-1} \setminus \{0\}$, and $\{(m-1)'\} \cup S$ for $S \subseteq Q_{n-1}$. This totals to $(m-1-j) + (j2^{n-2}) + (2^{n-1}) = m-1-j+j2^{n-2}+2^{n-1}$ different states. We show that they are reachable and pairwise distinguishable.

State $\{p'\}$ is reached by $d_1^p$ for all $p' \in E_{m,j}'$. State $\{(m-1-j)', 0\}$ is reached by $e$. For $m-j \le p \le m-1$ we have $\{(m-1-j)',0\}\xrightarrow{d_2^{p-(m-1-j)}} \begin{cases} \{p', 0, 1\} &\text{if }n-1-k \ge 2 \\ \{p', 0\} &\text{if } n-1-k = 1 \end{cases} \xrightarrow{c_1} \{p', 0\}$.

Now consider states of the form $\{p', 0\} \cup T$ where $p' \in F_{m,j}'$ and $T \subseteq F_{n,k}$. These states are reachable when $T = \emptyset$. Inductively assume the states are reachable when $|T| < i$ for some $i \ge 1$. Let $T_i = \{r_1, r_2, \dots, r_i\}$ where $n-1-k \le r_1 < r_2 < \dots < r_i \le n-2$, and let $T_{i-1} = \{r_2-(r_1-(n-1-k)), \dots, r_i-(r_1-(n-1-k))\}$.
Then $\{0\} \cup T_{i-1} \xrightarrow{e} \{n-1-k\} \cup T_{i-1} \xrightarrow{b^{r_1-(n-1-k)}} T_i$. Notice $b$ induces a permutation on $\mathcal{D}_{m,j}'$, so for any $p' \in F_{m,j}'$ there is a state $q' \in F_{m,j}'$ such that $q' \xrightarrow{eb^{r_1-(n-1-k)}} p'$. Thus, $\{p', 0\} \cup T_i$ is reachable from $\{q',0\} \cup T_{i-1}$.

Extend this to states of the form $\{p', 0\} \cup S \cup T$, where $p' \in  F_{m,j}'$, $S \subseteq E_{n,k} \setminus \{0\}$, and $T \subseteq F_{n,k}$. These states are reachable when $S = \emptyset$. Inductively assume the states are reachable when $|S| < h$ for some $h \ge 1$. Let $S_h = \{q_1, q_2, \dots, q_h\}$ where $n-1-k \le q_1 < q_2 < \dots < q_i \le n-2$, and let $S_{h-1} = \{q_2-q_1, \dots, q_h-q_1\}$. Then $\{p', 0\} \cup S_{h-1} \cup T \xrightarrow{d_1} \{p', 0, 1\} \cup (S_{h-1}+1) \cup T \xrightarrow{(d_1c_1)^{q_1-1}} \{p',0, q_1\} \cup (S_{h-1}+q_1) \cup T = \{p',0\} \cup S_h \cup T$. In the last derivation, $S+c$ denotes the set $\{q+c \mid q \in S\}$.

State $\{(m-1)', 0\} \cup S \cup T$ is reachable from $\{(m-2)', 0\} \cup S \cup T$ by $d_2^\ell$, where $\ell > 0$ is the order of $d_2$ in $\mathcal{D}_{n,k}$ (i.e. $d_2^\ell$ induces the identity transformation on $\mathcal{D}_{n,k}$).

Finally, state $\{(m-1)'\} \cup S \cup T$ is reachable from $\{(m-1)', 0\} \cup S \cup T$: by Lemma~\ref{lem:properpermutations}, the permutation $(0, 1, \dots, n-2-k)$ of $\mathcal{D}_{n,k}$ is generated by some $u_1 \in\{a, d_2\}^*$, and $\{(m-1)', 0\} \cup S \cup T \xrightarrow{u_1^{n-2-k}} \{(m-1)', n-2-k\} \cup (S-1) \cup T \xrightarrow{d_1} \{(m-1)'\} \cup S \cup T$. Thus all states are reachable.

We now check distinguishability in cases. Using Lemma~\ref{lem:properpermutations}, take words $u_1, u_2 \in \{a, d_2\}^*$ such that $u_1$ induces $(0, 1, \dots, n-2-k)$ and $u_2$ induces $(n-1-k, n-k, \dots, n-2)$ on $\mathcal{D}_{n,k}$. Note $u_1$ and $u_2$ act on $\mathcal{D}_{m,j}'$ as well. 
\begin{enumerate}
\item Let $U = \{(m-1)'\}$ and let $V$ be any other state. Notice $U$ is the empty state. We show that $V$ is non-empty.
\begin{enumerate}
\item If $q \in V \cap Q_{n-1}$ then by the minimality of $\mathcal{D}_{n,k}$ there is a word $w$ such that $qw \in F_{n,k}$; hence $Vw$ is final.
\item Otherwise $V = \{p'\}$ for some $p' \in E_{m,j}'$. There is a word $w$ such that $p'w \in F_{m,j}'$; hence $0 \in Vw$ and this reduces to {\bf \emph a}.
\end{enumerate}
\item Let $U = \{p'\}$ and $V = \{q'\}$ where $p',q' \in E_{m,j}'$ and $p < q$. Then $Vd_1^{m-1-j-q} = \{(m-1)'\}$ and $Ud_1^{m-1-j-q}$ is non-empty by {\bf 1}.
\item Let $U = \{p'\}$ and $V = \{q', 0\} \cup S$ where $p' \in E_{m,j}'$, $q' \in F_{m,j}'$, and $S \subseteq Q_{n-1} \setminus \{0\}$. Then $U$ and $V$ are distinguished by $e$.
\item Let $U = \{p'\}$ and $V = \{(m-1)'\} \cup S$ where $p' \in E_{m,j}'$ and $S \subseteq Q_{n-1}$. If $S = \emptyset$ this reduces to {\bf 1}. If $S \cap F_{n,k} \not= \emptyset$ then $V$ is final. Otherwise there is some $r \in S$, and $Vu_1^{n-1-k-r}e$ is final. Notice $Uu_1^{n-1-k-r}e$ is non-final because $u_1 \in \{a,d_2\}^*$.
\item Let $U = \{(m-1)'\} \cup S$ and $V = \{(m-1)'\} \cup T$ where $S \not= T \subseteq Q_{n-1}$; pick $r \in S \oplus T$. Without loss of generality, say $r \in S \setminus T$.
\begin{enumerate}
\item If $r = 0$, then $U \xrightarrow{b^k} U \setminus F_{n,k} \xrightarrow{e} U \setminus (\{0\} \cup F_{n,k}) \cup \{n-1-k\}$ and $V \xrightarrow{b^k} V \setminus F_{n,k}  \xrightarrow{e} V \setminus F_{n,k}$.
\item If $r \in E_{n,k}$, then we reduce to {\bf \emph a} by applying $u_1^{n-1-k-r}$.
\item If $r = n-1-k$, then $Ub^{k-1}$ is final and $Vb^{k-1}$ is non-final.
\item If $r \in F_{n,k}$, then we reduce to {\bf \emph c} by applying $u_2^{n-1-r}$.
\end{enumerate}
\item Let $U = \{p', 0\} \cup S$ and $V = \{(m-1)'\} \cup T$ where $p' \in F_{m,j}'$, and $S, T \subseteq Q_{n-1}$. Notice $Ud_1^{n-1-k}b^k$ is non-empty since $p'$ is not mapped to $(m-1)'$, but $V \xrightarrow{d_1^{n-1-k}} \{(m-1)'\} \cup T \setminus E_{n,k} \xrightarrow{b^k} \{(m-1)'\}$; this reduces to {\bf 1}.
\item Let $U = \{p', 0\} \cup S$ and $V = \{q', 0\} \cup T$ where $p', q' \in F_{m,j}'$, $p < q$, and $S, T \subseteq Q_{n-1}$. Reduce to {\bf 6} by applying $d_2^{m-1-q}$.
\item Let $U = \{p', 0\} \cup S$ and $V = \{p', 0\} \cup T$ where $p' \in F_{m,j}'$ and $S \not= T \subseteq Q_{n-1}$. Pick $r \in S \oplus T$ and assume without loss of generality that $r \in S \setminus T$.
\begin{enumerate}
\item If $r \ge 2$, then $d_2^{m-1-p}$ fixes $r$ and maps $p'$ to $(m-1)'$; hence this reduces to {\bf 5}.
\item If $p = m-2$, then apply $d_2$ to reduce to {\bf 5}. Notice $Sd_2$ and $Td_2$ are distinct since $d_2$ induces a permutation on $\mathcal{D}_{n,k}$.
\item If $r = 1$ and $n-1-k \ge 2$, then applying $d_1$ reduces to {\bf \emph a}.
\item If $r = 1$ and $n-1-k = 2$, then observe that $a$ and $b$ both fix $1$ in $\mathcal{D}_{n,k}$. By Lemma~\ref{lem:properpermutations}, there is a word $w \in \{a,b\}^*$ such that $p'w = (m-2)'$. Since $n-1-k=2$, $a$ and $b$ do not alter $E_{n,k}$. Hence $1 \in Sw$ and $1 \not\in Tw$, so this reduces to {\bf \emph b}.
\end{enumerate}
\end{enumerate}
\end{proof}

\begin{proposition}[Boolean Operations]\label{prop:properboolean} 
For $m,n\ge 3$, $1 \le j \le m-2$, and $1 \le k \le n-2$, let $L'_{m,j} = L'_{m,j}(a, b, c_1, -, d_1, d_2, e)$ and $L_{n,k} = L_{n,k}(a, b, e, -, d_2, d_1, c_1)$ of Definition~\ref{def:proper}. For any proper binary boolean function $\circ$, the complexity of $L'_{m,j} \circ L_{n,k}$ is maximal. In particular,
	\begin{enumerate}
	\item
	$\kappa(L'_{m,j} \cup L_{n,k}) = \kappa (L'_{m,j} \oplus L_{n,k}) = mn$.
	\item
	$\kappa(L'_{m,j} \setminus L_{n,k}) = mn-(n-1)$.
	\item
	$\kappa(L'_{m,j} \cap L_{n,k}) = mn-(m+n-2)$.
	\end{enumerate}
\end{proposition}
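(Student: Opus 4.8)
The plan is to analyze the direct product $\mathcal P$ of the minimal DFAs $\mathcal D'_{m,j}$ and $\mathcal D_{n,k}$ of the two witnesses. Both are defined over the same alphabet $\{a,b,c_1,d_1,d_2,e\}$, so $\mathcal P$ has state set $Q'_m \times Q_n$ with exactly $mn$ states, and since every proper language has an empty quotient (Lemma~\ref{idealorempty}) the restricted and unrestricted complexities coincide. The upper bounds $mn$, $mn-(n-1)$, and $mn-(m+n-2)$ are already supplied by \cite{Brz10} and Theorem~\ref{thm:propermain}, so only tightness remains. The first step is to record the coupled action of each letter on the two coordinates forced by the dialect $L_{n,k}(a,b,e,-,d_2,d_1,c_1)$: letters $a$ and $b$ permute both coordinates; $e$ connects $E'_{m,j}$ to $F'_{m,j}$ in the first coordinate (and merges within $E_{n,k}$ in the second); $c_1$ connects $E_{n,k}$ to $F_{n,k}$ in the second coordinate (and merges within $E'_{m,j}$ in the first); and, crucially, $d_1$ absorbs the first coordinate into its empty state while leaving $E_{n,k}$ pointwise fixed in the second, whereas $d_2$ is the mirror, absorbing the second coordinate into its empty state while leaving $E'_{m,j}$ fixed in the first.

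For reachability I would proceed block by block, partitioning $Q'_m \times Q_n$ into the four core blocks $E'\times E$, $E'\times F$, $F'\times E$, $F'\times F$ together with the empty row $\{(m-1)'\}\times Q_n$ and empty column $Q'_m\times\{n-1\}$. On each core block $a$ induces a cycle on each factor and $b$ a transposition on each factor, with cycle lengths adjusted by the parity cases of Definition~\ref{def:proper} precisely so that the hypotheses of \cite[Theorem~1]{BBMR14} are met; hence $\{a,b\}^*$ acts transitively on each block, and in particular reaches every state of $E'\times E$ from the initial state $(0',0)$. Applying $e$ to a state of $E'\times E$ lands in $F'\times E$, applying $c_1$ lands in $E'\times F$, and a further connecting letter reaches $F'\times F$; in each case $\{a,b\}^*$ then fills out the block. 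The empty row is reached by applying $d_1$ to a state $((m-2-j)',q)$ with $q\in E_{n,k}$ (which fixes $q$), then $c_1$ followed by $a,b$ to cover the $F_{n,k}$ part, and $d_1$ once more to reach $((m-1)',n-1)$; the empty column is reached by the mirror argument through $d_2$ and $e$. This establishes all $mn$ states.

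Finally I would treat distinguishability and the operation-dependent collapses. For each operation the final states of $\mathcal P$ are set in the standard way, and the only forced coincidences are the states whose combined language is empty: for $\cup$ and $\oplus$ this is only $((m-1)',n-1)$, giving no reduction and complexity $mn$; for $\setminus$ the whole empty row $\{(m-1)'\}\times Q_n$ collapses to a single state, giving $mn-(n-1)$; and for $\cap$ the empty row together with the empty column collapse to a single state, giving $mn-(m+n-2)$. That no further states coincide I would verify by a case analysis in the style of Proposition~\ref{prop:properproduct}: two states differing in one coordinate are separated by first normalizing with words in $\{a,b\}^*$ (using Lemma~\ref{lem:properpermutations}) and then exploiting the minimality of the corresponding component DFA, moving one coordinate to a final or empty configuration that witnesses the difference while using the region-fixing letters $d_1,d_2,e,c_1$ to keep the other coordinate under control. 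I expect the main obstacle to be exactly this bookkeeping of the coupled letter actions — repeatedly arguing that one coordinate can be driven to a distinguishing state while the other is held fixed or moved predictably — since every letter acts nontrivially on both factors at once; the reachability of the empty row and column, where one coordinate is frozen at its sink, demands the same care.
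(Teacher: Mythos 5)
Your setup, your reading of the coupled letter actions, and your accounting of the collapsed states for each operation all match the paper. However, there is a genuine gap in your reachability argument: you claim that $\{a,b\}^*$ acts transitively on each core block (e.g.\ on $E'_{m,j}\times E_{n,k}$) by appealing to \cite[Theorem~1]{BBMR14}. That theorem applies in the other sections of the paper precisely because the second dialect there swaps the roles of $a$ and $b$ (one letter is a long cycle in one DFA and a transposition in the other). In the present witnesses the dialects are $L'_{m,j}(a,b,c_1,-,d_1,d_2,e)$ and $L_{n,k}(a,b,e,-,d_2,d_1,c_1)$: the letter $a$ acts as a cycle on \emph{both} $E'_{m,j}$ and $E_{n,k}$, and $b$ as a transposition on both. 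When, say, $m-1-j=n-1-k=3$ with both odd, $a$ induces $(0,1,2)$ and $b$ induces $(0,1)$ on each factor, so the group generated on $E'\times E$ is the diagonal copy of $S_3$, whose orbit of $(0',0)$ is only the three ``diagonal'' states. Transitivity fails, and with it your reachability of the blocks.

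The paper avoids this by never using $\{a,b\}$ to move between states of a block. Instead it exploits that $d_1$ and $d_2$ decouple the coordinates: $d_1$ shifts the first coordinate within $E'_{m,j}$ while fixing $E_{n,k}$ pointwise in the second, and $d_2$ does the mirror, so $(0',0)\xrightarrow{d_1^p}(p',0)\xrightarrow{d_2^q}(p',q)$ reaches all of $W=E'\times E$; the blocks $X$, $Y$, $Z$ are then reached with $e$, $c_1$, and further powers of $d_1,d_2$ (e.g.\ $(0',0)\xrightarrow{e}((m{-}1{-}j)',0)\xrightarrow{(d_2e)^{*}}(p',0)$ for $Y$). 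Your observation that each letter acts nontrivially on both factors at once is exactly the obstacle, but the resolution is to lean on $d_1,d_2$ (and $e,c_1$) for reachability rather than on $\{a,b\}$; the permutations from Lemma~\ref{lem:properpermutations} are used in the paper only to normalize within a single coordinate during the distinguishability phase. Your distinguishability sketch is in the right spirit but would need the same correction wherever it implicitly assumes $\{a,b\}^*$ can move the two coordinates independently.
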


\begin{proof}
Let $\mathcal{D}_{m,j}'$ and $\mathcal{D}_{n,k}$ be the DFAs of Definition~\ref{def:proper} for languages $L'_{m,j}(a, b, c_1, -, d_1, d_2, e)$ and $L_{n,k}(a, b, e, -, d_2, d_1, c_1)$ respectively. As before, take $\mathcal{D}_{m,j}'$ to have the states $Q_m' = \{0', 1', \dots, (m-1)'\}$.
There is a standard construction for $L'_{m,j} \circ L_{n,k}$ for any boolean set operation $\circ$ in terms of the direct product. The direct product of $\mathcal{D}_{m,j}'$ and $\mathcal{D}_{n,k}$ has states $Q_m' \times Q_n$, initial state $(0',0)$, and transition function $\delta$ such that $\delta((p',q),w) = (\delta_{m,j}'(p',w), \delta_{n,k}(q,w))$. If we set the final states to be $(F_{m,j}' \times Q_n) \circ (Q_m' \times F_{n,k})$, it is a DFA recognizing $L'_{m,j} \circ L_{n,k}$.
For each $\circ \in \{\cup, \oplus, \setminus, \cap\}$, we construct the DFA $\mathcal{D}_\circ$ to recognize $L'_{m,j} \circ L_{n,k}$. All four DFAs have the same states and transitions as the direct product and will only differ in the set of final states.
The DFA $\mathcal D_\oplus$ for symmetric difference is shown in Figure~\ref{fig:properboolean}.

We can usefully partition the states of the direct product. Let $W = E_{m,j}' \times E_{n,k}$, $X = E_{m,j}' \times F_{n,k}$, $Y = F_{m,j}' \times E_{n,k}$, $Z = F_{m,j}' \times F_{n,k}$, and $S = W \cup X \cup Y \cup Z$. Let $R = \{(m-1)'\} \times Q_n$ and $C = Q_m' \times \{n-1\}$.

\begin{figure}[ht]
\unitlength 8.5pt
\begin{center}\begin{picture}(35,23)(0,-6)
\gasset{Nh=2.6,Nw=2.6,Nmr=1.2,ELdist=0.3,loopdiam=1.2}
	{\scriptsize
\node(0'0)(2,15){$0',0$}\imark(0'0)
\node(1'0)(2,10){$1',0$}
\node(2'0)(2,5){$2',0$}\rmark(2'0)
\node(3'0)(2,0){$3',0$}\rmark(3'0)
\node(4'0)(2,-5){$4',0$}

\node(0'1)(10,15){$0',1$}
\node(1'1)(10,10){$1',1$}
\node(2'1)(10,5){$2',1$}\rmark(2'1)
\node(3'1)(10,0){$3',1$}\rmark(3'1)
\node(4'1)(10,-5){$4',1$}

\node(0'2)(18,15){$0',2$}\rmark(0'2)
\node(1'2)(18,10){$1',2$}\rmark(1'2)
\node(2'2)(18,5){$2',2$}
\node(3'2)(18,0){$3',2$}
\node(4'2)(18,-5){$4',2$}\rmark(4'2)

\node(0'3)(26,15){$0',3$}\rmark(0'3)
\node(1'3)(26,10){$1',3$}\rmark(1'3)
\node(2'3)(26,5){$2',3$}
\node(3'3)(26,0){$3',3$}
\node(4'3)(26,-5){$4',3$}\rmark(4'3)

\node(0'4)(34,15){$0',4$}
\node(1'4)(34,10){$1',4$}
\node(2'4)(34,5){$2',4$}\rmark(2'4)
\node(3'4)(34,0){$3',4$}\rmark(3'4)
\node(4'4)(34,-5){$4',4$}
	}

\drawedge[curvedepth=-2,ELdist=-1](0'0,2'0){$e$}
\drawedge[curvedepth=2,ELdist=0.2](0'0,0'2){$c_1$}

\drawedge(0'0,0'1){$d_2$}
\drawedge(1'0,1'1){$d_2$}
\drawedge(0'0,1'0){$d_1$}
\drawedge(0'1,1'1){$d_1$}

\drawedge(0'3,1'4){$d_1$}
\drawedge(3'0,4'1){$d_2$}

\drawedge[curvedepth=-2,ELdist=-1](1'0,4'0){$d_1$}
\drawedge[curvedepth=-2,ELdist=-1](1'1,4'1){$d_1$}
\drawedge[curvedepth=2,ELdist=0.2](0'1,0'4){$d_2$}
\drawedge[curvedepth=2,ELdist=0.2](1'1,1'4){$d_2$}

\drawedge(2'3,2'4){$d_1$}
\drawedge(3'3,3'4){$d_1$}

\drawedge(0'4,1'4){$d_1$}
\drawedge[curvedepth=2,ELdist=0.2](1'4,4'4){$d_1$}
\drawedge[ELdist=-1.2](2'4,3'4){$d_2$}
\drawedge[ELdist=-1.2](3'4,4'4){$d_2$}

\drawedge(3'2,4'2){$d_2$}
\drawedge(3'3,4'3){$d_2$}

\drawedge(4'0,4'1){$d_2$}
\drawedge[curvedepth=-2,ELdist=-1](4'1,4'4){$d_2$}
\drawedge(4'2,4'3){$d_1$}
\drawedge(4'3,4'4){$d_1$}

\drawedge(2'2,3'2){$d_2$}
\drawedge(2'3,3'3){$d_2$}
\drawedge(2'2,2'3){$d_1$}
\drawedge(3'2,3'3){$d_1$}

\drawedge[curvedepth=-2,ELdist=-1](0'2,2'2){$e$}
\drawedge[curvedepth=-2,ELdist=-1](0'3,2'3){$e$}
\drawedge[curvedepth=2,ELdist=0.2](2'0,2'2){$c_1$}
\drawedge[curvedepth=2,ELdist=0.2](3'0,3'2){$c_1$}
\end{picture}\end{center}
\caption{DFA $\mathcal D_\oplus$ for symmetric difference of proper languages with DFAs $\mathcal D_{5,2}'(a,b,c_1,-, d_1, d_2, e)$ and $\mathcal D_{5,2}(a,b,e,-, d_2, d_1, c_1)$ shown partially.}
\label{fig:properboolean}
\end{figure}
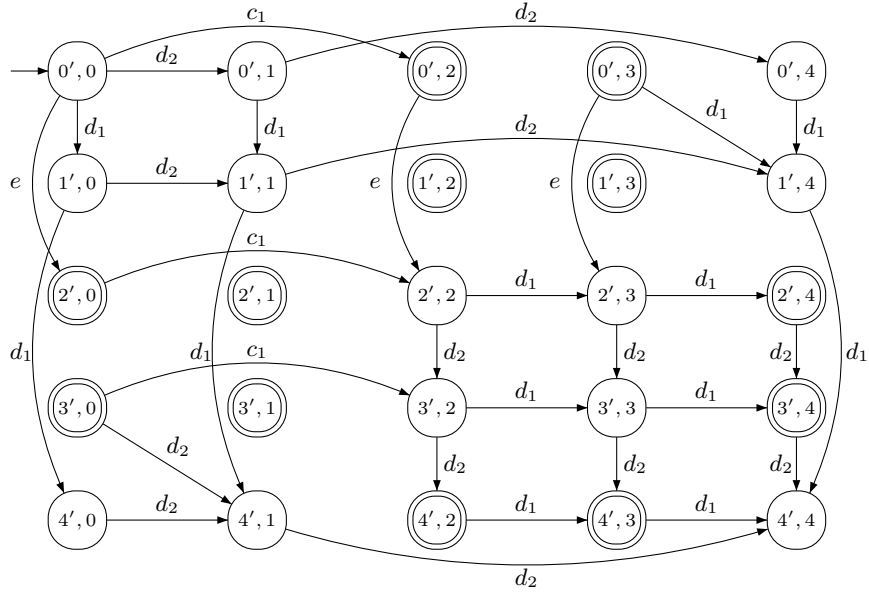

We check that every state in the direct product is reachable. Since $\mathcal{D}_\cup$, $\mathcal{D}_\oplus$, $\mathcal{D}_\setminus$, and $\mathcal{D}_\cap$ have the same structure as the direct product, this argument will apply to them as well. By Lemma~\ref{lem:properpermutations} there exist $u_1, u_2 \in \{a,b\}^*$ such that $u_1$ induces $(0', \dots, (m-2-j)')$ and $u_2$ induces $((m-1-j)', \dots, (m-1)')$ in $\mathcal D_{m,j}'$. Note that $u_1$ and $u_2$ permute $E_{n,k}$ and $F_{n,k}$ in $\mathcal D_{n,k}$.
Similarly, there exist $v_1, v_2 \in \{a,b\}^*$ such that $v_1$ induces $(0, \dots, n-2-k)$ and $v_2$ induces $(n-1-k, \dots, n-1)$ in $\mathcal D_{n,k}$, and they permute $E_{m,j}'$ and $F_{m,j}'$ in $\mathcal D_{m,j}'$.

\begin{enumerate}
\item State $(p', q) \in W$ is reachable since $(0',0) \xrightarrow{d_1^p} (p',0) \xrightarrow{d_2^q} (p', q)$.
\item State $(p', 0) \in Y$ is reachable since $(0',0) \xrightarrow{e} ((m-1-j)', 0) \xrightarrow{(d_2e)^{p-(m-1-j)}} (p',0)$. 
 An arbitrary $(p', q) \in Y$ is then reached by $v_1^q$ from some $(r',0)$ where $r' \in F_{m,j}'$ is chosen so that $r' \xrightarrow{v_1^q} p'$ in $\mathcal D_{m,j}'$.
\item State $(p', q) \in X$ is reachable by symmetry with {\bf 2}.
\item State $(p', q) \in Z$ is reachable since $(0',0) \xrightarrow{ec_1} ((m-1-j)', n-1-k) \xrightarrow{d_2^{p-(m-1-j)}} (p',n-1-k) \xrightarrow{d_1^{q-(n-1-k)}} (p', q)$.
\item State $(p', n-1) \in C$ is reachable since $(0',0) \xrightarrow{d_2^{n-1-k}} (0', n-1)$, and $p'$ is reachable in $\mathcal D_{m,j}'$.
\item State $((m-1)', q) \in R$ is reachable by symmetry with {\bf 5}.
\end{enumerate}
Hence all states are reachable.

As a tool for distinguishability, we show that the states of $S$ are distinguishable with respect to $R \cup C$; that is, for any pair of distinct states in $S$, we show that there is a word that sends one state to $R \cup C$ and leaves the other state in $S$. We check this fact in cases. Note that $d_2$ fixes the states of $X$ and $d_1$ fixes the states of $Y$.
\begin{enumerate}
\item States of $W$ and $X$ are distinguished by words in $d_2^*$.
\item States of $W$ and $Y$ are distinguished by words in $d_1^*$.
\item States of $X$ and $Y$ are distinguished by words in $d_1^*$.
\item States of $X$ and $Z$ are distinguished by words in $d_2^*$.
\item States of $Y$ and $Z$ are distinguished by words in $d_1^*$.
\item To distinguish states of $W$ and $Z$, we reduce to {\bf 5} by a word in $u_1^*e$.
\item Any two states of $W$ are distinguished by a word in $d_1^*$ if they differ in the first coordinate, or by a word in $d_2^*$ if they differ in the second coordinate.
\item Any two states of $Z$ are distinguished by a word in $d_2^*$ if they differ in the first coordinate, or by a word in $d_1^*$ if they differ in the second coordinate.
\item To distinguish two states of $X$, reduce to {\bf 4} by a word in $u_1^*e$ if they differ in the first coordinate, or reduce to {\bf 8} by a word in $u_1^*e$ if the first coordinate is the same.
\item Any two states of $Y$ are distinguishable by symmetry with {\bf 9}.
\end{enumerate}
Now we determine which states are pairwise distinguishable with respect to the final states of $\mathcal D_\circ$ for each $\circ \in \{\cup, \oplus, \setminus, \cap\}$.
Let $w = (u_1e)^{m-1-j}(v_1c_1)^{n-1-k}$; observe that $w$ maps every state of $S$ to a state of $Z$.

$\cup$, $\oplus$: In $\mathcal D_\cup$, $(p',q)$ is final if $p' \in F_{m,j}'$ or $q \in F_{n,k}$.
In $\mathcal D_\oplus$, $(p',q)$ is final if $p' \in F_{m,j}'$ and $q \not\in F_{n,k}$ or $p' \not\in F_{m,j}'$ and $q \in F_{n,k}$.
We show that all $mn$ states are pairwise distinguishable in both cases.

The states of $R$ are pairwise distinguishable by the minimality of $\mathcal D_{n,k}$.
Similarly, the states of $C$ are pairwise distinguishable by the minimality of $\mathcal D_{m,j}'$.
The states of $R$ and $C$ are distinguishable by $wd_1^k$, since $R \setminus \{((m-1)', n-1)\} \xrightarrow{w} \{(m-1)'\} \times F_{n,k} \xrightarrow{d_1^k} \{(m-1)',n-1\}$ and $C \setminus \{((m-1)', n-1)\} \xrightarrow{w} F_{m,j}' \times \{n-1\} \xrightarrow{d_1^k} F_{m,j}' \times \{n-1\}$.
The states of $C$ and $S$ are distinguishable since $S \xrightarrow{w} Z \xrightarrow {d_2^j} \{(m-1)'\} \times F_{n,k} \subseteq R$, and we can distinguish states of $R$ and $C$.
The states of $R$ and $S$ are similarly distinguishable.
Finally, states of $S$ are pairwise distinguishable because they can be distinguished with respect to $R \cup C$, and we can distinguish states of $S$ and $R \cup C$.

$\setminus$: In $\mathcal D_\setminus$, $(p',q)$ is final if $p' \in F_{m,j}'$ and $q \not\in F_{n,k}$. The states of $R$ are all empty, and the remaining states are pairwise distinguishable for a total of $mn - (n - 1)$ distinguishable states.

The states of $C$ are pairwise distinguishable by the minimality of $\mathcal D_{m,j}'$.
The states of $C$ and $S$ are distinguishable since $S \xrightarrow{w} Z \xrightarrow {d_2^j} \{(m-1)'\} \times F_{n,k} \subseteq R$, and every state in $R$ is empty.
Finally, states of $S$ are pairwise distinguishable because they can be distinguished with respect to $R \cup C$, and we can distinguish states of $S$ and $R \cup C$.

$\cap$: In $\mathcal D_\cap$ the final state set is $Z$. The states of $R \cup C$ are all empty, leaving $mn - (m + n + 2)$ distinguishable states. The states of $S$ are non-empty since $S \xrightarrow{w} Z$. We can distinguish the states of $S$ with respect to $R \cup C$; hence they are pairwise distinguishable. \qed
\end{proof}

\section{Conclusions}\label{sec:conclusions}
 
 Our results are summarized in Table~\ref{tab:summary}. The largest bounds are shown in boldface type, and they are reached  in the classes of ideal languages, closed languages, and proper languages.
Recall that for regular languages we have the following results: 
semigroup: $n^n$; 
reverse: $2^n$; 
star: $2^{n-1}+2^{n-2}$; 
restricted product:  $(m-1)2^n+2^{n-1}$;
unrestricted product: $m2^n+2^{n-1}$;  
restricted $\cup$ and $\oplus$: $mn$; 
unrestricted $\cup$ and $\oplus$: $(m+1)(n+1)$;
restricted $\setminus$: $mn$;
unrestricted $\setminus$: $mn+m$;
restricted $\cap$: $mn$; 
unrestricted $\cap$: $mn$. 
In Table~\ref{tab:summary}, $R$ and $U$ stand for \emph{restricted} and \emph{unrestricted}, respectively.
 \renewcommand{\arraystretch}{1.3}
\begin{table}[ht]
\caption{Complexities of prefix-convex languages}
\label{tab:summary}
\scriptsize
\begin{center}
$
 \begin{array}{|c||c|c|c|c||}    
\hline
 & \ Ideal \ & \ Closed  \ &  \ Free\  &\  Proper \     \\
\hline \hline
\ Semigroup \   	
 &\ \mathbf{n^{n-1}} \	&\mathbf{n^{n-1}} &n^{n-2} & n^{n-1-k}(k+1)^k      \\
\hline
\ Reverse \   
 &\ \mathbf{2^{n-1}} \	& \mathbf{2^{n-1}} &\ 2^{n-2} +1 \ &\mathbf{2^{n-1} } \\
\hline
\ Star \   
&\ n+1 \	& \ 2^{n-2}+1 \ &\ n \ & \mathbf{2^{n-2} +2^{n-2-k}+1}  \\
\hline
\ Product\;  R\   &
\ m+2^{n-2} \	& \ \mathbf{(m+1)2^{n-2}} \ &\ m+n-2 \ & m-1-j + j2^{n-2} +2^{n-1}  \\
\hline
\ Product\; U\   &
\ m+2^{n-1}+2^{n-2}+1 \	& \ \mathbf{(m+1)2^{n-2}} \ &\ m+n-2 \ & m-1-j + j2^{n-2} +2^{n-1}    \\

\hline
\ \cup  \; R \   &	
mn-(m+n-2) 	& \ \mathbf{mn} \ &\ mn-2 \ & \mathbf{mn}   \\
\hline
\ \cup \;   U \   &	
 \mathbf{(m+1)(n+1)} 	& \ mn \ &\ mn-2 \ & mn   \\
 \hline
\ \oplus\;  R \   &	
\mathbf{mn} 	& \ \mathbf{mn} \ &\ mn-2 \ & \mathbf{mn}   \\
 \hline
\ \oplus\;   U \   &	
 \mathbf{(m+1)(n+1)} 	& \ mn \ &\ mn-2 \ & mn   \\
 \hline
\setminus\; R \   &	
\mathbf{mn-(m-1)} \	& \ \mathbf{mn-(n-1)} \ & mn-(m+2n-4) & \mathbf{mn-(n-1)}  \\
\hline
\setminus\; U \   &	
 \mathbf{mn+m} \	& \ mn-(n-1) \ & mn-(m+2n-4) & mn-(n-1)   \\
\hline
\cap\; R \text{ and } U \   &	
\ \mathbf{mn} \	&\ mn-(m+n-2) \ & mn-2(m+n-3)  & mn -(m+n-2)  \\
\hline
\end{array} 
$
\end{center}
\end{table}

\providecommand{\noopsort}[1]{}


\end{document}